\def\dOi{11(2:10)2015}
\def\defa{
\m{\begin{qcircuit}[scale=0.5]
    \grid{2.50}{0}
    \agate{1}{1.25}{0}
\end{qcircuit}
}
&=&
\m{\begin{qcircuit}[scale=0.5]
    \grid{1.00}{0}
\end{qcircuit}
}
\\\\[-1ex]
\m{\begin{qcircuit}[scale=0.5]
    \grid{2.50}{0}
    \agate{2}{1.25}{0}
\end{qcircuit}
}
&=&
\m{\begin{qcircuit}[scale=0.5]
    \grid{2.50}{0}
    \gate{$H$}{1.25,0}
\end{qcircuit}
}
\\\\[-1ex]
\m{\begin{qcircuit}[scale=0.5]
    \grid{2.50}{0}
    \agate{3}{1.25}{0}
\end{qcircuit}
}
&=&
\m{\begin{qcircuit}[scale=0.5]
    \grid{5.50}{0}
    \gate{$H$}{1.25,0}
    \gate{$S$}{2.75,0}
    \gate{$H$}{4.25,0}
\end{qcircuit}
}
\\\\[-1ex]
}
\def\defb{
\m{\begin{qcircuit}[scale=0.5]
    \grid{2.50}{0,1}
    \bgate{1}{1.25}{0}{1}
\end{qcircuit}
}
&=&
\m{\begin{qcircuit}[scale=0.5]
    \grid{8.50}{0,1}
    \gate{$H$}{1.25,0}
    \controlled{\dotgate}{2.50,1}{0}
    \gate{$H$}{3.75,0}
    \gate{$H$}{3.75,1}
    \controlled{\dotgate}{5.00,1}{0}
    \gate{$H$}{6.25,1}
    \gate{$H$}{6.25,0}
    \controlled{\dotgate}{7.50,1}{0}
\end{qcircuit}
}
\\\\[-1ex]
\m{\begin{qcircuit}[scale=0.5]
    \grid{2.50}{0,1}
    \bgate{2}{1.25}{0}{1}
\end{qcircuit}
}
&=&
\m{\begin{qcircuit}[scale=0.5]
    \grid{4.50}{0,1}
    \controlled{\dotgate}{1.00,1}{0}
    \gate{$H$}{2.25,1}
    \gate{$H$}{2.25,0}
    \controlled{\dotgate}{3.50,1}{0}
\end{qcircuit}
}
\\\\[-1ex]
\m{\begin{qcircuit}[scale=0.5]
    \grid{2.50}{0,1}
    \bgate{3}{1.25}{0}{1}
\end{qcircuit}
}
&=&
\m{\begin{qcircuit}[scale=0.5]
    \grid{7.50}{0,1}
    \gate{$H$}{1.25,1}
    \gate{$S$}{2.75,1}
    \controlled{\dotgate}{4.00,1}{0}
    \gate{$H$}{5.25,1}
    \gate{$H$}{5.25,0}
    \controlled{\dotgate}{6.50,1}{0}
\end{qcircuit}
}
\\\\[-1ex]
\m{\begin{qcircuit}[scale=0.5]
    \grid{2.50}{0,1}
    \bgate{4}{1.25}{0}{1}
\end{qcircuit}
}
&=&
\m{\begin{qcircuit}[scale=0.5]
    \grid{6.00}{0,1}
    \gate{$H$}{1.25,1}
    \controlled{\dotgate}{2.50,1}{0}
    \gate{$H$}{3.75,1}
    \gate{$H$}{3.75,0}
    \controlled{\dotgate}{5.00,1}{0}
\end{qcircuit}
}
\\\\[-1ex]
}
\def\defc{
\m{\begin{qcircuit}[scale=0.5]
    \grid{2.50}{0}
    \cgate{1}{1.25}{0}
\end{qcircuit}
}
&=&
\m{\begin{qcircuit}[scale=0.5]
    \grid{1.00}{0}
\end{qcircuit}
}
\\\\[-1ex]
\m{\begin{qcircuit}[scale=0.5]
    \grid{2.50}{0}
    \cgate{2}{1.25}{0}
\end{qcircuit}
}
&=&
\m{\begin{qcircuit}[scale=0.5]
    \grid{7.00}{0}
    \gate{$H$}{1.25,0}
    \gate{$S$}{2.75,0}
    \gate{$S$}{4.25,0}
    \gate{$H$}{5.75,0}
\end{qcircuit}
}
\\\\[-1ex]
}
\def\defd{
\m{\begin{qcircuit}[scale=0.5]
    \grid{2.50}{0,1}
    \dgate{1}{1.25}{0}{1}
\end{qcircuit}
}
&=&
\m{\begin{qcircuit}[scale=0.5]
    \grid{8.50}{0,1}
    \controlled{\dotgate}{1.00,1}{0}
    \gate{$H$}{2.25,1}
    \gate{$H$}{2.25,0}
    \controlled{\dotgate}{3.50,1}{0}
    \gate{$H$}{4.75,1}
    \gate{$H$}{4.75,0}
    \controlled{\dotgate}{6.00,1}{0}
    \gate{$H$}{7.25,0}
\end{qcircuit}
}
\\\\[-1ex]
\m{\begin{qcircuit}[scale=0.5]
    \grid{2.50}{0,1}
    \dgate{2}{1.25}{0}{1}
\end{qcircuit}
}
&=&
\m{\begin{qcircuit}[scale=0.5]
    \grid{7.50}{0,1}
    \gate{$H$}{1.25,1}
    \controlled{\dotgate}{2.50,1}{0}
    \gate{$H$}{3.75,1}
    \gate{$H$}{3.75,0}
    \controlled{\dotgate}{5.00,1}{0}
    \gate{$H$}{6.25,0}
\end{qcircuit}
}
\\\\[-1ex]
\m{\begin{qcircuit}[scale=0.5]
    \grid{2.50}{0,1}
    \dgate{3}{1.25}{0}{1}
\end{qcircuit}
}
&=&
\m{\begin{qcircuit}[scale=0.5]
    \grid{9.00}{0,1}
    \gate{$H$}{1.25,1}
    \gate{$H$}{1.25,0}
    \gate{$S$}{2.75,0}
    \controlled{\dotgate}{4.00,1}{0}
    \gate{$H$}{5.25,1}
    \gate{$H$}{5.25,0}
    \controlled{\dotgate}{6.50,1}{0}
    \gate{$H$}{7.75,0}
\end{qcircuit}
}
\\\\[-1ex]
\m{\begin{qcircuit}[scale=0.5]
    \grid{2.50}{0,1}
    \dgate{4}{1.25}{0}{1}
\end{qcircuit}
}
&=&
\m{\begin{qcircuit}[scale=0.5]
    \grid{7.50}{0,1}
    \gate{$H$}{1.25,1}
    \gate{$H$}{1.25,0}
    \controlled{\dotgate}{2.50,1}{0}
    \gate{$H$}{3.75,1}
    \gate{$H$}{3.75,0}
    \controlled{\dotgate}{5.00,1}{0}
    \gate{$H$}{6.25,0}
\end{qcircuit}
}
\\\\[-1ex]
}
\def\defe{
\m{\begin{qcircuit}[scale=0.5]
    \grid{2.50}{0}
    \egate{1}{1.25}{0}
\end{qcircuit}
}
&=&
\m{\begin{qcircuit}[scale=0.5]
    \grid{1.00}{0}
\end{qcircuit}
}
\\\\[-1ex]
\m{\begin{qcircuit}[scale=0.5]
    \grid{2.50}{0}
    \egate{2}{1.25}{0}
\end{qcircuit}
}
&=&
\m{\begin{qcircuit}[scale=0.5]
    \grid{2.50}{0}
    \gate{$S$}{1.25,0}
\end{qcircuit}
}
\\\\[-1ex]
\m{\begin{qcircuit}[scale=0.5]
    \grid{2.50}{0}
    \egate{3}{1.25}{0}
\end{qcircuit}
}
&=&
\m{\begin{qcircuit}[scale=0.5]
    \grid{4.00}{0}
    \gate{$S$}{1.25,0}
    \gate{$S$}{2.75,0}
\end{qcircuit}
}
\\\\[-1ex]
\m{\begin{qcircuit}[scale=0.5]
    \grid{2.50}{0}
    \egate{4}{1.25}{0}
\end{qcircuit}
}
&=&
\m{\begin{qcircuit}[scale=0.5]
    \grid{5.50}{0}
    \gate{$S$}{1.25,0}
    \gate{$S$}{2.75,0}
    \gate{$S$}{4.25,0}
\end{qcircuit}
}
\\\\[-1ex]
}
\def\paulia{\begin{qcircuit}[scale=0.5]
    \grid{2.50}{0}
    \leftlabel{$Z$}{0.00,0}
    \agate{1}{1.25}{0}
    \rightlabel{$Z$}{2.50,0}
\end{qcircuit}
\\\\[-1ex]
\begin{qcircuit}[scale=0.5]
    \grid{2.50}{0}
    \leftlabel{$X$}{0.00,0}
    \agate{2}{1.25}{0}
    \rightlabel{$Z$}{2.50,0}
\end{qcircuit}
\\\\[-1ex]
\begin{qcircuit}[scale=0.5]
    \grid{2.50}{0}
    \leftlabel{$Y$}{0.00,0}
    \agate{3}{1.25}{0}
    \rightlabel{$Z$}{2.50,0}
\end{qcircuit}
\\\\[-1ex]
}
\def\paulib{\begin{qcircuit}[scale=0.5]
    \grid{2.50}{0,1}
    \leftlabel{$I$}{0.00,1}
    \leftlabel{$Z$}{0.00,0}
    \bgate{1}{1.25}{0}{1}
    \rightlabel{$Z$}{2.50,1}
    \rightlabel{$I$}{2.50,0}
\end{qcircuit}
\\\\[-1ex]
\begin{qcircuit}[scale=0.5]
    \grid{2.50}{0,1}
    \leftlabel{$X$}{0.00,1}
    \leftlabel{$Z$}{0.00,0}
    \bgate{2}{1.25}{0}{1}
    \rightlabel{$Z$}{2.50,1}
    \rightlabel{$I$}{2.50,0}
\end{qcircuit}
\\\\[-1ex]
\begin{qcircuit}[scale=0.5]
    \grid{2.50}{0,1}
    \leftlabel{$Y$}{0.00,1}
    \leftlabel{$Z$}{0.00,0}
    \bgate{3}{1.25}{0}{1}
    \rightlabel{$Z$}{2.50,1}
    \rightlabel{$I$}{2.50,0}
\end{qcircuit}
\\\\[-1ex]
\begin{qcircuit}[scale=0.5]
    \grid{2.50}{0,1}
    \leftlabel{$Z$}{0.00,1}
    \leftlabel{$Z$}{0.00,0}
    \bgate{4}{1.25}{0}{1}
    \rightlabel{$Z$}{2.50,1}
    \rightlabel{$I$}{2.50,0}
\end{qcircuit}
\\\\[-1ex]
}
\def\paulic{\begin{qcircuit}[scale=0.5]
    \grid{2.50}{0}
    \leftlabel{$Z$}{0.00,0}
    \cgate{1}{1.25}{0}
    \rightlabel{$Z$}{2.50,0}
\end{qcircuit}
\\\\[-1ex]
\begin{qcircuit}[scale=0.5]
    \grid{2.50}{0}
    \leftlabel{$-Z$}{0.00,0}
    \cgate{2}{1.25}{0}
    \rightlabel{$Z$}{2.50,0}
\end{qcircuit}
\\\\[-1ex]
}
\def\pauliez{\begin{qcircuit}[scale=0.5]
    \grid{2.50}{0}
    \leftlabel{$Z$}{0.00,0}
    \egate{1}{1.25}{0}
    \rightlabel{$Z$}{2.50,0}
\end{qcircuit}
\\\\[-1ex]
\begin{qcircuit}[scale=0.5]
    \grid{2.50}{0}
    \leftlabel{$Z$}{0.00,0}
    \egate{2}{1.25}{0}
    \rightlabel{$Z$}{2.50,0}
\end{qcircuit}
\\\\[-1ex]
\begin{qcircuit}[scale=0.5]
    \grid{2.50}{0}
    \leftlabel{$Z$}{0.00,0}
    \egate{3}{1.25}{0}
    \rightlabel{$Z$}{2.50,0}
\end{qcircuit}
\\\\[-1ex]
\begin{qcircuit}[scale=0.5]
    \grid{2.50}{0}
    \leftlabel{$Z$}{0.00,0}
    \egate{4}{1.25}{0}
    \rightlabel{$Z$}{2.50,0}
\end{qcircuit}
\\\\[-1ex]
}
\def\pauliex{\begin{qcircuit}[scale=0.5]
    \grid{2.50}{0}
    \leftlabel{$X$}{0.00,0}
    \egate{1}{1.25}{0}
    \rightlabel{$X$}{2.50,0}
\end{qcircuit}
\\\\[-1ex]
\begin{qcircuit}[scale=0.5]
    \grid{2.50}{0}
    \leftlabel{$-Y$}{0.00,0}
    \egate{2}{1.25}{0}
    \rightlabel{$X$}{2.50,0}
\end{qcircuit}
\\\\[-1ex]
\begin{qcircuit}[scale=0.5]
    \grid{2.50}{0}
    \leftlabel{$-X$}{0.00,0}
    \egate{3}{1.25}{0}
    \rightlabel{$X$}{2.50,0}
\end{qcircuit}
\\\\[-1ex]
\begin{qcircuit}[scale=0.5]
    \grid{2.50}{0}
    \leftlabel{$Y$}{0.00,0}
    \egate{4}{1.25}{0}
    \rightlabel{$X$}{2.50,0}
\end{qcircuit}
\\\\[-1ex]
}
\def\paulifz{\begin{qcircuit}[scale=0.5]
    \grid{2.50}{0,1}
    \leftlabel{$Z$}{0.00,1}
    \leftlabel{$I$}{0.00,0}
    \dgate{1}{1.25}{0}{1}
    \rightlabel{$I$}{2.50,1}
    \rightlabel{$Z$}{2.50,0}
\end{qcircuit}
\\\\[-1ex]
\begin{qcircuit}[scale=0.5]
    \grid{2.50}{0,1}
    \leftlabel{$Z$}{0.00,1}
    \leftlabel{$I$}{0.00,0}
    \dgate{2}{1.25}{0}{1}
    \rightlabel{$I$}{2.50,1}
    \rightlabel{$Z$}{2.50,0}
\end{qcircuit}
\\\\[-1ex]
\begin{qcircuit}[scale=0.5]
    \grid{2.50}{0,1}
    \leftlabel{$Z$}{0.00,1}
    \leftlabel{$I$}{0.00,0}
    \dgate{3}{1.25}{0}{1}
    \rightlabel{$I$}{2.50,1}
    \rightlabel{$Z$}{2.50,0}
\end{qcircuit}
\\\\[-1ex]
\begin{qcircuit}[scale=0.5]
    \grid{2.50}{0,1}
    \leftlabel{$Z$}{0.00,1}
    \leftlabel{$I$}{0.00,0}
    \dgate{4}{1.25}{0}{1}
    \rightlabel{$I$}{2.50,1}
    \rightlabel{$Z$}{2.50,0}
\end{qcircuit}
\\\\[-1ex]
}
\def\paulifx{\begin{qcircuit}[scale=0.5]
    \grid{2.50}{0,1}
    \leftlabel{$X$}{0.00,1}
    \leftlabel{$I$}{0.00,0}
    \dgate{1}{1.25}{0}{1}
    \rightlabel{$I$}{2.50,1}
    \rightlabel{$X$}{2.50,0}
\end{qcircuit}
\\\\[-1ex]
\begin{qcircuit}[scale=0.5]
    \grid{2.50}{0,1}
    \leftlabel{$X$}{0.00,1}
    \leftlabel{$X$}{0.00,0}
    \dgate{2}{1.25}{0}{1}
    \rightlabel{$I$}{2.50,1}
    \rightlabel{$X$}{2.50,0}
\end{qcircuit}
\\\\[-1ex]
\begin{qcircuit}[scale=0.5]
    \grid{2.50}{0,1}
    \leftlabel{$X$}{0.00,1}
    \leftlabel{$Y$}{0.00,0}
    \dgate{3}{1.25}{0}{1}
    \rightlabel{$I$}{2.50,1}
    \rightlabel{$X$}{2.50,0}
\end{qcircuit}
\\\\[-1ex]
\begin{qcircuit}[scale=0.5]
    \grid{2.50}{0,1}
    \leftlabel{$X$}{0.00,1}
    \leftlabel{$Z$}{0.00,0}
    \dgate{4}{1.25}{0}{1}
    \rightlabel{$I$}{2.50,1}
    \rightlabel{$X$}{2.50,0}
\end{qcircuit}
\\\\[-1ex]
}
\def\paulify{\begin{qcircuit}[scale=0.5]
    \grid{2.50}{0,1}
    \leftlabel{$Y$}{0.00,1}
    \leftlabel{$I$}{0.00,0}
    \dgate{1}{1.25}{0}{1}
    \rightlabel{$I$}{2.50,1}
    \rightlabel{$Y$}{2.50,0}
\end{qcircuit}
\\\\[-1ex]
\begin{qcircuit}[scale=0.5]
    \grid{2.50}{0,1}
    \leftlabel{$Y$}{0.00,1}
    \leftlabel{$X$}{0.00,0}
    \dgate{2}{1.25}{0}{1}
    \rightlabel{$I$}{2.50,1}
    \rightlabel{$Y$}{2.50,0}
\end{qcircuit}
\\\\[-1ex]
\begin{qcircuit}[scale=0.5]
    \grid{2.50}{0,1}
    \leftlabel{$Y$}{0.00,1}
    \leftlabel{$Y$}{0.00,0}
    \dgate{3}{1.25}{0}{1}
    \rightlabel{$I$}{2.50,1}
    \rightlabel{$Y$}{2.50,0}
\end{qcircuit}
\\\\[-1ex]
\begin{qcircuit}[scale=0.5]
    \grid{2.50}{0,1}
    \leftlabel{$Y$}{0.00,1}
    \leftlabel{$Z$}{0.00,0}
    \dgate{4}{1.25}{0}{1}
    \rightlabel{$I$}{2.50,1}
    \rightlabel{$Y$}{2.50,0}
\end{qcircuit}
\\\\[-1ex]
}
\def\commHA{
\m{\begin{qcircuit}[scale=0.5]
    \grid{4.00}{0}
    \gate{$H$}{1.25,0}
    \agate{1}{2.75}{0}
\end{qcircuit}
}
&=&
\m{\begin{qcircuit}[scale=0.5]
    \grid{2.50}{0}
    \agate{2}{1.25}{0}
\end{qcircuit}
}
\\\\[-1ex]
\m{\begin{qcircuit}[scale=0.5]
    \grid{4.00}{0}
    \gate{$H$}{1.25,0}
    \agate{2}{2.75}{0}
\end{qcircuit}
}
&=&
\m{\begin{qcircuit}[scale=0.5]
    \grid{2.50}{0}
    \agate{1}{1.25}{0}
\end{qcircuit}
}
\\\\[-1ex]
\m{\begin{qcircuit}[scale=0.5]
    \grid{4.00}{0}
    \gate{$H$}{1.25,0}
    \agate{3}{2.75}{0}
\end{qcircuit}
}
&=&
\m{\begin{qcircuit}[scale=0.5]
    \grid{8.50}{0}
    \agate{3}{1.25}{0}
    \gate{$X$}{2.75,0}
    \gate{$S$}{4.25,0}
    \gate{$S$}{5.75,0}
    \gate{$S$}{7.25,0}
\end{qcircuit}
}
\cdot \omega
\\\\[-1ex]
}
\def\commSA{
\m{\begin{qcircuit}[scale=0.5]
    \grid{4.00}{0}
    \gate{$S$}{1.25,0}
    \agate{1}{2.75}{0}
\end{qcircuit}
}
&=&
\m{\begin{qcircuit}[scale=0.5]
    \grid{4.00}{0}
    \agate{1}{1.25}{0}
    \gate{$S$}{2.75,0}
\end{qcircuit}
}
\\\\[-1ex]
\m{\begin{qcircuit}[scale=0.5]
    \grid{4.00}{0}
    \gate{$S$}{1.25,0}
    \agate{2}{2.75}{0}
\end{qcircuit}
}
&=&
\m{\begin{qcircuit}[scale=0.5]
    \grid{8.50}{0}
    \agate{3}{1.25}{0}
    \gate{$X$}{2.75,0}
    \gate{$S$}{4.25,0}
    \gate{$S$}{5.75,0}
    \gate{$S$}{7.25,0}
\end{qcircuit}
}
\cdot \omega
\\\\[-1ex]
\m{\begin{qcircuit}[scale=0.5]
    \grid{4.00}{0}
    \gate{$S$}{1.25,0}
    \agate{3}{2.75}{0}
\end{qcircuit}
}
&=&
\m{\begin{qcircuit}[scale=0.5]
    \grid{7.00}{0}
    \agate{2}{1.25}{0}
    \gate{$S$}{2.75,0}
    \gate{$S$}{4.25,0}
    \gate{$S$}{5.75,0}
\end{qcircuit}
}
\cdot \omega
\\\\[-1ex]
}
\def\commZZAI{
\m{\begin{qcircuit}[scale=0.5]
    \grid{3.50}{0,1}
    \controlled{\dotgate}{1.00,1}{0}
    \agate{1}{2.25}{1}
\end{qcircuit}
}
&=&
\m{\begin{qcircuit}[scale=0.5]
    \grid{3.50}{0,1}
    \agate{1}{1.25}{1}
    \controlled{\dotgate}{2.50,1}{0}
\end{qcircuit}
}
\\\\[-1ex]
\m{\begin{qcircuit}[scale=0.5]
    \grid{3.50}{0,1}
    \controlled{\dotgate}{1.00,1}{0}
    \agate{2}{2.25}{1}
\end{qcircuit}
}
&=&
\m{\begin{qcircuit}[scale=0.5]
    \grid{6.50}{0,1}
    \agate{1}{1.25}{0}
    \bgate{2}{2.75}{0}{1}
    \controlled{\dotgate}{4.00,1}{0}
    \gate{$H$}{5.25,0}
\end{qcircuit}
}
\\\\[-1ex]
\m{\begin{qcircuit}[scale=0.5]
    \grid{3.50}{0,1}
    \controlled{\dotgate}{1.00,1}{0}
    \agate{3}{2.25}{1}
\end{qcircuit}
}
&=&
\m{\begin{qcircuit}[scale=0.5]
    \grid{14.00}{0,1}
    \agate{1}{1.25}{0}
    \bgate{3}{2.75}{0}{1}
    \gate{$H$}{4.25,0}
    \gate{$S$}{5.75,0}
    \gate{$H$}{7.25,0}
    \controlled{\dotgate}{8.50,1}{0}
    \gate{$S$}{9.75,1}
    \gate{$S$}{11.25,1}
    \gate{$S$}{12.75,1}
    \gate{$H$}{9.75,0}
\end{qcircuit}
}
\\\\[-1ex]
}
\def\commZZIABB{
\m{\begin{qcircuit}[scale=0.5]
    \grid{5.00}{0,1}
    \controlled{\dotgate}{1.00,1}{0}
    \agate{1}{2.25}{0}
    \bgate{1}{3.75}{0}{1}
\end{qcircuit}
}
&=&
\m{\begin{qcircuit}[scale=0.5]
    \grid{8.00}{0,1}
    \agate{1}{1.25}{0}
    \bgate{1}{2.75}{0}{1}
    \gate{$H$}{4.25,0}
    \controlled{\dotgate}{5.50,1}{0}
    \gate{$H$}{6.75,0}
\end{qcircuit}
}
\\\\[-1ex]
\m{\begin{qcircuit}[scale=0.5]
    \grid{5.00}{0,1}
    \controlled{\dotgate}{1.00,1}{0}
    \agate{1}{2.25}{0}
    \bgate{2}{3.75}{0}{1}
\end{qcircuit}
}
&=&
\m{\begin{qcircuit}[scale=0.5]
    \grid{5.00}{0,1}
    \agate{2}{1.25}{1}
    \gate{$H$}{2.75,0}
    \controlled{\dotgate}{4.00,1}{0}
\end{qcircuit}
}
\\\\[-1ex]
\m{\begin{qcircuit}[scale=0.5]
    \grid{5.00}{0,1}
    \controlled{\dotgate}{1.00,1}{0}
    \agate{1}{2.25}{0}
    \bgate{3}{3.75}{0}{1}
\end{qcircuit}
}
&=&
\m{\begin{qcircuit}[scale=0.5]
    \grid{9.50}{0,1}
    \agate{3}{1.25}{1}
    \gate{$H$}{2.75,0}
    \controlled{\dotgate}{4.00,1}{0}
    \gate{$S$}{5.25,1}
    \gate{$S$}{5.25,0}
    \gate{$H$}{6.75,0}
    \gate{$S$}{8.25,0}
\end{qcircuit}
}
\cdot \omega^{7}
\\\\[-1ex]
\m{\begin{qcircuit}[scale=0.5]
    \grid{5.00}{0,1}
    \controlled{\dotgate}{1.00,1}{0}
    \agate{1}{2.25}{0}
    \bgate{4}{3.75}{0}{1}
\end{qcircuit}
}
&=&
\m{\begin{qcircuit}[scale=0.5]
    \grid{11.00}{0,1}
    \agate{1}{1.25}{0}
    \bgate{4}{2.75}{0}{1}
    \gate{$H$}{4.25,0}
    \controlled{\dotgate}{5.50,1}{0}
    \gate{$S$}{6.75,0}
    \gate{$S$}{8.25,0}
    \gate{$H$}{9.75,0}
\end{qcircuit}
}
\\\\[-1ex]
\m{\begin{qcircuit}[scale=0.5]
    \grid{5.00}{0,1}
    \controlled{\dotgate}{1.00,1}{0}
    \agate{2}{2.25}{0}
    \bgate{1}{3.75}{0}{1}
\end{qcircuit}
}
&=&
\m{\begin{qcircuit}[scale=0.5]
    \grid{4.00}{0,1}
    \agate{2}{1.25}{0}
    \bgate{4}{2.75}{0}{1}
\end{qcircuit}
}
\\\\[-1ex]
\m{\begin{qcircuit}[scale=0.5]
    \grid{5.00}{0,1}
    \controlled{\dotgate}{1.00,1}{0}
    \agate{2}{2.25}{0}
    \bgate{2}{3.75}{0}{1}
\end{qcircuit}
}
&=&
\m{\begin{qcircuit}[scale=0.5]
    \grid{12.50}{0,1}
    \agate{3}{1.25}{0}
    \bgate{3}{2.75}{0}{1}
    \gate{$H$}{4.25,0}
    \controlled{\dotgate}{5.50,1}{0}
    \gate{$S$}{6.75,1}
    \gate{$S$}{6.75,0}
    \gate{$S$}{8.25,0}
    \gate{$H$}{9.75,0}
    \gate{$S$}{11.25,0}
\end{qcircuit}
}
\cdot \omega^{6}
\\\\[-1ex]
\m{\begin{qcircuit}[scale=0.5]
    \grid{5.00}{0,1}
    \controlled{\dotgate}{1.00,1}{0}
    \agate{2}{2.25}{0}
    \bgate{3}{3.75}{0}{1}
\end{qcircuit}
}
&=&
\m{\begin{qcircuit}[scale=0.5]
    \grid{11.00}{0,1}
    \agate{3}{1.25}{0}
    \bgate{2}{2.75}{0}{1}
    \gate{$X$}{4.25,1}
    \gate{$H$}{4.25,0}
    \controlled{\dotgate}{5.50,1}{0}
    \gate{$S$}{6.75,1}
    \gate{$S$}{8.25,1}
    \gate{$S$}{9.75,1}
\end{qcircuit}
}
\cdot \omega
\\\\[-1ex]
\m{\begin{qcircuit}[scale=0.5]
    \grid{5.00}{0,1}
    \controlled{\dotgate}{1.00,1}{0}
    \agate{2}{2.25}{0}
    \bgate{4}{3.75}{0}{1}
\end{qcircuit}
}
&=&
\m{\begin{qcircuit}[scale=0.5]
    \grid{4.00}{0,1}
    \agate{2}{1.25}{0}
    \bgate{1}{2.75}{0}{1}
\end{qcircuit}
}
\\\\[-1ex]
\m{\begin{qcircuit}[scale=0.5]
    \grid{5.00}{0,1}
    \controlled{\dotgate}{1.00,1}{0}
    \agate{3}{2.25}{0}
    \bgate{1}{3.75}{0}{1}
\end{qcircuit}
}
&=&
\m{\begin{qcircuit}[scale=0.5]
    \grid{9.50}{0,1}
    \agate{3}{1.25}{0}
    \bgate{4}{2.75}{0}{1}
    \gate{$H$}{4.25,0}
    \controlled{\dotgate}{5.50,1}{0}
    \gate{$S$}{6.75,0}
    \gate{$H$}{8.25,0}
\end{qcircuit}
}
\\\\[-1ex]
\m{\begin{qcircuit}[scale=0.5]
    \grid{5.00}{0,1}
    \controlled{\dotgate}{1.00,1}{0}
    \agate{3}{2.25}{0}
    \bgate{2}{3.75}{0}{1}
\end{qcircuit}
}
&=&
\m{\begin{qcircuit}[scale=0.5]
    \grid{11.00}{0,1}
    \agate{2}{1.25}{0}
    \bgate{3}{2.75}{0}{1}
    \gate{$X$}{4.25,1}
    \controlled{\dotgate}{5.50,1}{0}
    \gate{$S$}{6.75,1}
    \gate{$S$}{8.25,1}
    \gate{$S$}{9.75,1}
    \gate{$S$}{6.75,0}
    \gate{$S$}{8.25,0}
    \gate{$H$}{9.75,0}
\end{qcircuit}
}
\cdot \omega
\\\\[-1ex]
\m{\begin{qcircuit}[scale=0.5]
    \grid{5.00}{0,1}
    \controlled{\dotgate}{1.00,1}{0}
    \agate{3}{2.25}{0}
    \bgate{3}{3.75}{0}{1}
\end{qcircuit}
}
&=&
\m{\begin{qcircuit}[scale=0.5]
    \grid{15.50}{0,1}
    \agate{2}{1.25}{0}
    \bgate{2}{2.75}{0}{1}
    \gate{$H$}{4.25,0}
    \gate{$S$}{5.75,0}
    \gate{$H$}{7.25,0}
    \controlled{\dotgate}{8.50,1}{0}
    \gate{$S$}{9.75,1}
    \gate{$S$}{11.25,1}
    \gate{$S$}{12.75,1}
    \gate{$S$}{9.75,0}
    \gate{$S$}{11.25,0}
    \gate{$S$}{12.75,0}
    \gate{$H$}{14.25,0}
\end{qcircuit}
}
\cdot \omega
\\\\[-1ex]
\m{\begin{qcircuit}[scale=0.5]
    \grid{5.00}{0,1}
    \controlled{\dotgate}{1.00,1}{0}
    \agate{3}{2.25}{0}
    \bgate{4}{3.75}{0}{1}
\end{qcircuit}
}
&=&
\m{\begin{qcircuit}[scale=0.5]
    \grid{12.50}{0,1}
    \agate{3}{1.25}{0}
    \bgate{1}{2.75}{0}{1}
    \gate{$H$}{4.25,0}
    \controlled{\dotgate}{5.50,1}{0}
    \gate{$S$}{6.75,0}
    \gate{$S$}{8.25,0}
    \gate{$S$}{9.75,0}
    \gate{$H$}{11.25,0}
\end{qcircuit}
}
\\\\[-1ex]
}
\def\commHIB{
\m{\begin{qcircuit}[scale=0.5]
    \grid{4.00}{0,1}
    \gate{$H$}{1.25,1}
    \bgate{1}{2.75}{0}{1}
\end{qcircuit}
}
&=&
\m{\begin{qcircuit}[scale=0.5]
    \grid{4.00}{0,1}
    \bgate{1}{1.25}{0}{1}
    \gate{$H$}{2.75,0}
\end{qcircuit}
}
\\\\[-1ex]
\m{\begin{qcircuit}[scale=0.5]
    \grid{4.00}{0,1}
    \gate{$H$}{1.25,1}
    \bgate{2}{2.75}{0}{1}
\end{qcircuit}
}
&=&
\m{\begin{qcircuit}[scale=0.5]
    \grid{2.50}{0,1}
    \bgate{4}{1.25}{0}{1}
\end{qcircuit}
}
\\\\[-1ex]
\m{\begin{qcircuit}[scale=0.5]
    \grid{4.00}{0,1}
    \gate{$H$}{1.25,1}
    \bgate{3}{2.75}{0}{1}
\end{qcircuit}
}
&=&
\m{\begin{qcircuit}[scale=0.5]
    \grid{10.00}{0,1}
    \bgate{3}{1.25}{0}{1}
    \gate{$X$}{2.75,1}
    \gate{$S$}{2.75,0}
    \gate{$S$}{4.25,0}
    \gate{$S$}{5.75,0}
    \gate{$H$}{7.25,0}
    \gate{$S$}{8.75,0}
\end{qcircuit}
}
\\\\[-1ex]
\m{\begin{qcircuit}[scale=0.5]
    \grid{4.00}{0,1}
    \gate{$H$}{1.25,1}
    \bgate{4}{2.75}{0}{1}
\end{qcircuit}
}
&=&
\m{\begin{qcircuit}[scale=0.5]
    \grid{2.50}{0,1}
    \bgate{2}{1.25}{0}{1}
\end{qcircuit}
}
\\\\[-1ex]
}
\def\commSIB{
\m{\begin{qcircuit}[scale=0.5]
    \grid{4.00}{0,1}
    \gate{$S$}{1.25,1}
    \bgate{1}{2.75}{0}{1}
\end{qcircuit}
}
&=&
\m{\begin{qcircuit}[scale=0.5]
    \grid{7.00}{0,1}
    \bgate{1}{1.25}{0}{1}
    \gate{$H$}{2.75,0}
    \gate{$S$}{4.25,0}
    \gate{$H$}{5.75,0}
\end{qcircuit}
}
\\\\[-1ex]
\m{\begin{qcircuit}[scale=0.5]
    \grid{4.00}{0,1}
    \gate{$S$}{1.25,1}
    \bgate{2}{2.75}{0}{1}
\end{qcircuit}
}
&=&
\m{\begin{qcircuit}[scale=0.5]
    \grid{10.00}{0,1}
    \bgate{3}{1.25}{0}{1}
    \gate{$X$}{2.75,1}
    \gate{$S$}{2.75,0}
    \gate{$S$}{4.25,0}
    \gate{$S$}{5.75,0}
    \gate{$H$}{7.25,0}
    \gate{$S$}{8.75,0}
\end{qcircuit}
}
\\\\[-1ex]
\m{\begin{qcircuit}[scale=0.5]
    \grid{4.00}{0,1}
    \gate{$S$}{1.25,1}
    \bgate{3}{2.75}{0}{1}
\end{qcircuit}
}
&=&
\m{\begin{qcircuit}[scale=0.5]
    \grid{7.00}{0,1}
    \bgate{2}{1.25}{0}{1}
    \gate{$S$}{2.75,0}
    \gate{$H$}{4.25,0}
    \gate{$S$}{5.75,0}
\end{qcircuit}
}
\\\\[-1ex]
\m{\begin{qcircuit}[scale=0.5]
    \grid{4.00}{0,1}
    \gate{$S$}{1.25,1}
    \bgate{4}{2.75}{0}{1}
\end{qcircuit}
}
&=&
\m{\begin{qcircuit}[scale=0.5]
    \grid{7.00}{0,1}
    \bgate{4}{1.25}{0}{1}
    \gate{$H$}{2.75,0}
    \gate{$S$}{4.25,0}
    \gate{$H$}{5.75,0}
\end{qcircuit}
}
\\\\[-1ex]
}
\def\commISB{
\m{\begin{qcircuit}[scale=0.5]
    \grid{4.00}{0,1}
    \gate{$S$}{1.25,0}
    \bgate{1}{2.75}{0}{1}
\end{qcircuit}
}
&=&
\m{\begin{qcircuit}[scale=0.5]
    \grid{4.00}{0,1}
    \bgate{1}{1.25}{0}{1}
    \gate{$S$}{2.75,1}
\end{qcircuit}
}
\\\\[-1ex]
\m{\begin{qcircuit}[scale=0.5]
    \grid{4.00}{0,1}
    \gate{$S$}{1.25,0}
    \bgate{2}{2.75}{0}{1}
\end{qcircuit}
}
&=&
\m{\begin{qcircuit}[scale=0.5]
    \grid{8.00}{0,1}
    \bgate{2}{1.25}{0}{1}
    \gate{$H$}{2.75,0}
    \controlled{\dotgate}{4.00,1}{0}
    \gate{$S$}{5.25,1}
    \gate{$S$}{5.25,0}
    \gate{$H$}{6.75,0}
\end{qcircuit}
}
\\\\[-1ex]
\m{\begin{qcircuit}[scale=0.5]
    \grid{4.00}{0,1}
    \gate{$S$}{1.25,0}
    \bgate{3}{2.75}{0}{1}
\end{qcircuit}
}
&=&
\m{\begin{qcircuit}[scale=0.5]
    \grid{8.00}{0,1}
    \bgate{3}{1.25}{0}{1}
    \gate{$H$}{2.75,0}
    \controlled{\dotgate}{4.00,1}{0}
    \gate{$S$}{5.25,1}
    \gate{$S$}{5.25,0}
    \gate{$H$}{6.75,0}
\end{qcircuit}
}
\\\\[-1ex]
\m{\begin{qcircuit}[scale=0.5]
    \grid{4.00}{0,1}
    \gate{$S$}{1.25,0}
    \bgate{4}{2.75}{0}{1}
\end{qcircuit}
}
&=&
\m{\begin{qcircuit}[scale=0.5]
    \grid{8.00}{0,1}
    \bgate{4}{1.25}{0}{1}
    \gate{$H$}{2.75,0}
    \controlled{\dotgate}{4.00,1}{0}
    \gate{$S$}{5.25,1}
    \gate{$S$}{5.25,0}
    \gate{$H$}{6.75,0}
\end{qcircuit}
}
\\\\[-1ex]
}
\def\commIXB{
\m{\begin{qcircuit}[scale=0.5]
    \grid{4.00}{0,1}
    \gate{$X$}{1.25,0}
    \bgate{1}{2.75}{0}{1}
\end{qcircuit}
}
&=&
\m{\begin{qcircuit}[scale=0.5]
    \grid{4.00}{0,1}
    \bgate{1}{1.25}{0}{1}
    \gate{$X$}{2.75,1}
\end{qcircuit}
}
\\\\[-1ex]
\m{\begin{qcircuit}[scale=0.5]
    \grid{4.00}{0,1}
    \gate{$X$}{1.25,0}
    \bgate{2}{2.75}{0}{1}
\end{qcircuit}
}
&=&
\m{\begin{qcircuit}[scale=0.5]
    \grid{4.00}{0,1}
    \bgate{2}{1.25}{0}{1}
    \gate{$X$}{2.75,1}
\end{qcircuit}
}
\\\\[-1ex]
\m{\begin{qcircuit}[scale=0.5]
    \grid{4.00}{0,1}
    \gate{$X$}{1.25,0}
    \bgate{3}{2.75}{0}{1}
\end{qcircuit}
}
&=&
\m{\begin{qcircuit}[scale=0.5]
    \grid{4.00}{0,1}
    \bgate{3}{1.25}{0}{1}
    \gate{$X$}{2.75,1}
\end{qcircuit}
}
\\\\[-1ex]
\m{\begin{qcircuit}[scale=0.5]
    \grid{4.00}{0,1}
    \gate{$X$}{1.25,0}
    \bgate{4}{2.75}{0}{1}
\end{qcircuit}
}
&=&
\m{\begin{qcircuit}[scale=0.5]
    \grid{4.00}{0,1}
    \bgate{4}{1.25}{0}{1}
    \gate{$X$}{2.75,1}
\end{qcircuit}
}
\\\\[-1ex]
}
\def\commIZZBBI{
\m{\begin{qcircuit}[scale=0.5]
    \grid{3.50}{0,1,2}
    \controlled{\dotgate}{1.00,1}{0}
    \bgate{1}{2.25}{1}{2}
\end{qcircuit}
}
&=&
\m{\begin{qcircuit}[scale=0.5]
    \grid{12.50}{0,1,2}
    \bgate{1}{1.25}{1}{2}
    \controlled{\dotgate}{2.50,2}{1}
    \gate{$H$}{3.75,1}
    \controlled{\dotgate}{5.00,1}{0}
    \gate{$H$}{6.25,1}
    \controlled{\dotgate}{7.50,2}{1}
    \gate{$H$}{8.75,1}
    \controlled{\dotgate}{10.00,1}{0}
    \gate{$H$}{11.25,1}
\end{qcircuit}
}
\\\\[-1ex]
\m{\begin{qcircuit}[scale=0.5]
    \grid{3.50}{0,1,2}
    \controlled{\dotgate}{1.00,1}{0}
    \bgate{2}{2.25}{1}{2}
\end{qcircuit}
}
&=&
\m{\begin{qcircuit}[scale=0.5]
    \grid{8.50}{0,1,2}
    \bgate{2}{1.25}{1}{2}
    \controlled{\dotgate}{2.50,2}{1}
    \gate{$H$}{3.75,1}
    \controlled{\dotgate}{5.00,1}{0}
    \gate{$H$}{6.25,1}
    \controlled{\dotgate}{7.50,2}{1}
\end{qcircuit}
}
\\\\[-1ex]
\m{\begin{qcircuit}[scale=0.5]
    \grid{3.50}{0,1,2}
    \controlled{\dotgate}{1.00,1}{0}
    \bgate{3}{2.25}{1}{2}
\end{qcircuit}
}
&=&
\m{\begin{qcircuit}[scale=0.5]
    \grid{8.50}{0,1,2}
    \bgate{3}{1.25}{1}{2}
    \controlled{\dotgate}{2.50,2}{1}
    \gate{$H$}{3.75,1}
    \controlled{\dotgate}{5.00,1}{0}
    \gate{$H$}{6.25,1}
    \controlled{\dotgate}{7.50,2}{1}
\end{qcircuit}
}
\\\\[-1ex]
\m{\begin{qcircuit}[scale=0.5]
    \grid{3.50}{0,1,2}
    \controlled{\dotgate}{1.00,1}{0}
    \bgate{4}{2.25}{1}{2}
\end{qcircuit}
}
&=&
\m{\begin{qcircuit}[scale=0.5]
    \grid{8.50}{0,1,2}
    \bgate{4}{1.25}{1}{2}
    \controlled{\dotgate}{2.50,2}{1}
    \gate{$H$}{3.75,1}
    \controlled{\dotgate}{5.00,1}{0}
    \gate{$H$}{6.25,1}
    \controlled{\dotgate}{7.50,2}{1}
\end{qcircuit}
}
\\\\[-1ex]
}
\def\commXC{
\m{\begin{qcircuit}[scale=0.5]
    \grid{4.00}{0}
    \gate{$X$}{1.25,0}
    \cgate{1}{2.75}{0}
\end{qcircuit}
}
&=&
\m{\begin{qcircuit}[scale=0.5]
    \grid{2.50}{0}
    \cgate{2}{1.25}{0}
\end{qcircuit}
}
\\\\[-1ex]
\m{\begin{qcircuit}[scale=0.5]
    \grid{4.00}{0}
    \gate{$X$}{1.25,0}
    \cgate{2}{2.75}{0}
\end{qcircuit}
}
&=&
\m{\begin{qcircuit}[scale=0.5]
    \grid{2.50}{0}
    \cgate{1}{1.25}{0}
\end{qcircuit}
}
\\\\[-1ex]
}
\def\commSC{
\m{\begin{qcircuit}[scale=0.5]
    \grid{4.00}{0}
    \gate{$S$}{1.25,0}
    \cgate{1}{2.75}{0}
\end{qcircuit}
}
&=&
\m{\begin{qcircuit}[scale=0.5]
    \grid{4.00}{0}
    \cgate{1}{1.25}{0}
    \gate{$S$}{2.75,0}
\end{qcircuit}
}
\\\\[-1ex]
\m{\begin{qcircuit}[scale=0.5]
    \grid{4.00}{0}
    \gate{$S$}{1.25,0}
    \cgate{2}{2.75}{0}
\end{qcircuit}
}
&=&
\m{\begin{qcircuit}[scale=0.5]
    \grid{7.00}{0}
    \cgate{2}{1.25}{0}
    \gate{$S$}{2.75,0}
    \gate{$S$}{4.25,0}
    \gate{$S$}{5.75,0}
\end{qcircuit}
}
\cdot \omega^{2}
\\\\[-1ex]
}
\def\commZZCI{
\m{\begin{qcircuit}[scale=0.5]
    \grid{3.50}{0,1}
    \controlled{\dotgate}{1.00,1}{0}
    \cgate{1}{2.25}{1}
\end{qcircuit}
}
&=&
\m{\begin{qcircuit}[scale=0.5]
    \grid{3.50}{0,1}
    \cgate{1}{1.25}{1}
    \controlled{\dotgate}{2.50,1}{0}
\end{qcircuit}
}
\\\\[-1ex]
\m{\begin{qcircuit}[scale=0.5]
    \grid{3.50}{0,1}
    \controlled{\dotgate}{1.00,1}{0}
    \cgate{2}{2.25}{1}
\end{qcircuit}
}
&=&
\m{\begin{qcircuit}[scale=0.5]
    \grid{6.50}{0,1}
    \cgate{2}{1.25}{1}
    \controlled{\dotgate}{2.50,1}{0}
    \gate{$S$}{3.75,0}
    \gate{$S$}{5.25,0}
\end{qcircuit}
}
\\\\[-1ex]
}
\def\commZZIIBBBBI{
\m{\begin{qcircuit}[scale=0.5]
    \grid{5.00}{0,1,2}
    \controlled{\dotgate}{1.00,2}{1}
    \bgate{1}{2.25}{0}{1}
    \bgate{1}{3.75}{1}{2}
\end{qcircuit}
}
&=&
\m{\begin{qcircuit}[scale=0.5]
    \grid{8.00}{0,1,2}
    \bgate{1}{1.25}{0}{1}
    \bgate{1}{2.75}{1}{2}
    \gate{$H$}{4.25,1}
    \gate{$H$}{4.25,0}
    \controlled{\dotgate}{5.50,1}{0}
    \gate{$H$}{6.75,1}
    \gate{$H$}{6.75,0}
\end{qcircuit}
}
\\\\[-1ex]
\m{\begin{qcircuit}[scale=0.5]
    \grid{5.00}{0,1,2}
    \controlled{\dotgate}{1.00,2}{1}
    \bgate{1}{2.25}{0}{1}
    \bgate{2}{3.75}{1}{2}
\end{qcircuit}
}
&=&
\m{\begin{qcircuit}[scale=0.5]
    \grid{8.00}{0,1,2}
    \bgate{4}{1.25}{0}{1}
    \bgate{2}{2.75}{1}{2}
    \gate{$H$}{4.25,0}
    \controlled{\dotgate}{5.50,1}{0}
    \gate{$H$}{6.75,0}
\end{qcircuit}
}
\\\\[-1ex]
\m{\begin{qcircuit}[scale=0.5]
    \grid{5.00}{0,1,2}
    \controlled{\dotgate}{1.00,2}{1}
    \bgate{1}{2.25}{0}{1}
    \bgate{3}{3.75}{1}{2}
\end{qcircuit}
}
&=&
\m{\begin{qcircuit}[scale=0.5]
    \grid{14.00}{0,1,2}
    \bgate{4}{1.25}{0}{1}
    \bgate{3}{2.75}{1}{2}
    \gate{$H$}{4.25,0}
    \gate{$H$}{4.25,1}
    \gate{$S$}{5.75,1}
    \gate{$H$}{7.25,1}
    \controlled{\dotgate}{8.50,0}{1}
    \gate{$S$}{9.75,1}
    \gate{$H$}{11.25,1}
    \gate{$S$}{12.75,1}
    \gate{$H$}{9.75,0}
\end{qcircuit}
}
\cdot \omega^{7}
\\\\[-1ex]
\m{\begin{qcircuit}[scale=0.5]
    \grid{5.00}{0,1,2}
    \controlled{\dotgate}{1.00,2}{1}
    \bgate{1}{2.25}{0}{1}
    \bgate{4}{3.75}{1}{2}
\end{qcircuit}
}
&=&
\m{\begin{qcircuit}[scale=0.5]
    \grid{8.00}{0,1,2}
    \bgate{1}{1.25}{0}{1}
    \bgate{4}{2.75}{1}{2}
    \gate{$H$}{4.25,1}
    \gate{$H$}{4.25,0}
    \controlled{\dotgate}{5.50,1}{0}
    \gate{$H$}{6.75,1}
    \gate{$H$}{6.75,0}
\end{qcircuit}
}
\\\\[-1ex]
\m{\begin{qcircuit}[scale=0.5]
    \grid{5.00}{0,1,2}
    \controlled{\dotgate}{1.00,2}{1}
    \bgate{2}{2.25}{0}{1}
    \bgate{1}{3.75}{1}{2}
\end{qcircuit}
}
&=&
\m{\begin{qcircuit}[scale=0.5]
    \grid{8.00}{0,1,2}
    \bgate{2}{1.25}{0}{1}
    \bgate{4}{2.75}{1}{2}
    \gate{$H$}{4.25,1}
    \controlled{\dotgate}{5.50,1}{0}
    \gate{$H$}{6.75,1}
\end{qcircuit}
}
\\\\[-1ex]
\m{\begin{qcircuit}[scale=0.5]
    \grid{5.00}{0,1,2}
    \controlled{\dotgate}{1.00,2}{1}
    \bgate{2}{2.25}{0}{1}
    \bgate{2}{3.75}{1}{2}
\end{qcircuit}
}
&=&
\m{\begin{qcircuit}[scale=0.5]
    \grid{11.00}{0,1,2}
    \bgate{3}{1.25}{0}{1}
    \bgate{3}{2.75}{1}{2}
    \gate{$H$}{4.25,1}
    \gate{$S$}{5.75,1}
    \gate{$H$}{7.25,1}
    \gate{$H$}{4.25,0}
    \gate{$S$}{5.75,0}
    \gate{$H$}{7.25,0}
    \controlled{\dotgate}{8.50,1}{0}
    \gate{$S$}{9.75,1}
    \gate{$S$}{9.75,0}
\end{qcircuit}
}
\cdot \omega^{6}
\\\\[-1ex]
\m{\begin{qcircuit}[scale=0.5]
    \grid{5.00}{0,1,2}
    \controlled{\dotgate}{1.00,2}{1}
    \bgate{2}{2.25}{0}{1}
    \bgate{3}{3.75}{1}{2}
\end{qcircuit}
}
&=&
\m{\begin{qcircuit}[scale=0.5]
    \grid{12.50}{0,1,2}
    \bgate{3}{1.25}{0}{1}
    \bgate{2}{2.75}{1}{2}
    \gate{$H$}{4.25,0}
    \gate{$S$}{5.75,0}
    \gate{$H$}{7.25,0}
    \controlled{\dotgate}{8.50,1}{0}
    \gate{$X$}{4.25,2}
    \gate{$H$}{9.75,1}
    \gate{$S$}{11.25,1}
    \gate{$S$}{9.75,0}
\end{qcircuit}
}
\cdot \omega^{7}
\\\\[-1ex]
\m{\begin{qcircuit}[scale=0.5]
    \grid{5.00}{0,1,2}
    \controlled{\dotgate}{1.00,2}{1}
    \bgate{2}{2.25}{0}{1}
    \bgate{4}{3.75}{1}{2}
\end{qcircuit}
}
&=&
\m{\begin{qcircuit}[scale=0.5]
    \grid{8.00}{0,1,2}
    \bgate{2}{1.25}{0}{1}
    \bgate{1}{2.75}{1}{2}
    \gate{$H$}{4.25,1}
    \controlled{\dotgate}{5.50,1}{0}
    \gate{$H$}{6.75,1}
\end{qcircuit}
}
\\\\[-1ex]
\m{\begin{qcircuit}[scale=0.5]
    \grid{5.00}{0,1,2}
    \controlled{\dotgate}{1.00,2}{1}
    \bgate{3}{2.25}{0}{1}
    \bgate{1}{3.75}{1}{2}
\end{qcircuit}
}
&=&
\m{\begin{qcircuit}[scale=0.5]
    \grid{14.00}{0,1,2}
    \bgate{3}{1.25}{0}{1}
    \bgate{4}{2.75}{1}{2}
    \gate{$H$}{4.25,1}
    \gate{$H$}{4.25,0}
    \gate{$S$}{5.75,0}
    \gate{$H$}{7.25,0}
    \controlled{\dotgate}{8.50,1}{0}
    \gate{$H$}{9.75,1}
    \gate{$S$}{9.75,0}
    \gate{$H$}{11.25,0}
    \gate{$S$}{12.75,0}
\end{qcircuit}
}
\cdot \omega^{7}
\\\\[-1ex]
\m{\begin{qcircuit}[scale=0.5]
    \grid{5.00}{0,1,2}
    \controlled{\dotgate}{1.00,2}{1}
    \bgate{3}{2.25}{0}{1}
    \bgate{2}{3.75}{1}{2}
\end{qcircuit}
}
&=&
\m{\begin{qcircuit}[scale=0.5]
    \grid{12.50}{0,1,2}
    \bgate{2}{1.25}{0}{1}
    \bgate{3}{2.75}{1}{2}
    \gate{$H$}{4.25,1}
    \gate{$S$}{5.75,1}
    \gate{$H$}{7.25,1}
    \controlled{\dotgate}{8.50,1}{0}
    \gate{$X$}{4.25,2}
    \gate{$S$}{9.75,1}
    \gate{$H$}{9.75,0}
    \gate{$S$}{11.25,0}
\end{qcircuit}
}
\cdot \omega^{7}
\\\\[-1ex]
\m{\begin{qcircuit}[scale=0.5]
    \grid{5.00}{0,1,2}
    \controlled{\dotgate}{1.00,2}{1}
    \bgate{3}{2.25}{0}{1}
    \bgate{3}{3.75}{1}{2}
\end{qcircuit}
}
&=&
\m{\begin{qcircuit}[scale=0.5]
    \grid{8.00}{0,1,2}
    \bgate{2}{1.25}{0}{1}
    \bgate{2}{2.75}{1}{2}
    \controlled{\dotgate}{4.00,1}{0}
    \gate{$H$}{5.25,1}
    \gate{$S$}{6.75,1}
    \gate{$H$}{5.25,0}
    \gate{$S$}{6.75,0}
\end{qcircuit}
}
\\\\[-1ex]
\m{\begin{qcircuit}[scale=0.5]
    \grid{5.00}{0,1,2}
    \controlled{\dotgate}{1.00,2}{1}
    \bgate{3}{2.25}{0}{1}
    \bgate{4}{3.75}{1}{2}
\end{qcircuit}
}
&=&
\m{\begin{qcircuit}[scale=0.5]
    \grid{14.00}{0,1,2}
    \bgate{3}{1.25}{0}{1}
    \bgate{1}{2.75}{1}{2}
    \gate{$H$}{4.25,1}
    \gate{$H$}{4.25,0}
    \gate{$S$}{5.75,0}
    \gate{$H$}{7.25,0}
    \controlled{\dotgate}{8.50,1}{0}
    \gate{$H$}{9.75,1}
    \gate{$S$}{9.75,0}
    \gate{$H$}{11.25,0}
    \gate{$S$}{12.75,0}
\end{qcircuit}
}
\cdot \omega^{7}
\\\\[-1ex]
\m{\begin{qcircuit}[scale=0.5]
    \grid{5.00}{0,1,2}
    \controlled{\dotgate}{1.00,2}{1}
    \bgate{4}{2.25}{0}{1}
    \bgate{1}{3.75}{1}{2}
\end{qcircuit}
}
&=&
\m{\begin{qcircuit}[scale=0.5]
    \grid{8.00}{0,1,2}
    \bgate{4}{1.25}{0}{1}
    \bgate{1}{2.75}{1}{2}
    \gate{$H$}{4.25,1}
    \gate{$H$}{4.25,0}
    \controlled{\dotgate}{5.50,1}{0}
    \gate{$H$}{6.75,1}
    \gate{$H$}{6.75,0}
\end{qcircuit}
}
\\\\[-1ex]
\m{\begin{qcircuit}[scale=0.5]
    \grid{5.00}{0,1,2}
    \controlled{\dotgate}{1.00,2}{1}
    \bgate{4}{2.25}{0}{1}
    \bgate{2}{3.75}{1}{2}
\end{qcircuit}
}
&=&
\m{\begin{qcircuit}[scale=0.5]
    \grid{8.00}{0,1,2}
    \bgate{1}{1.25}{0}{1}
    \bgate{2}{2.75}{1}{2}
    \gate{$H$}{4.25,0}
    \controlled{\dotgate}{5.50,1}{0}
    \gate{$H$}{6.75,0}
\end{qcircuit}
}
\\\\[-1ex]
\m{\begin{qcircuit}[scale=0.5]
    \grid{5.00}{0,1,2}
    \controlled{\dotgate}{1.00,2}{1}
    \bgate{4}{2.25}{0}{1}
    \bgate{3}{3.75}{1}{2}
\end{qcircuit}
}
&=&
\m{\begin{qcircuit}[scale=0.5]
    \grid{14.00}{0,1,2}
    \bgate{1}{1.25}{0}{1}
    \bgate{3}{2.75}{1}{2}
    \gate{$H$}{4.25,0}
    \gate{$H$}{4.25,1}
    \gate{$S$}{5.75,1}
    \gate{$H$}{7.25,1}
    \controlled{\dotgate}{8.50,0}{1}
    \gate{$S$}{9.75,1}
    \gate{$H$}{11.25,1}
    \gate{$S$}{12.75,1}
    \gate{$H$}{9.75,0}
\end{qcircuit}
}
\cdot \omega^{7}
\\\\[-1ex]
\m{\begin{qcircuit}[scale=0.5]
    \grid{5.00}{0,1,2}
    \controlled{\dotgate}{1.00,2}{1}
    \bgate{4}{2.25}{0}{1}
    \bgate{4}{3.75}{1}{2}
\end{qcircuit}
}
&=&
\m{\begin{qcircuit}[scale=0.5]
    \grid{8.00}{0,1,2}
    \bgate{4}{1.25}{0}{1}
    \bgate{4}{2.75}{1}{2}
    \gate{$H$}{4.25,1}
    \gate{$H$}{4.25,0}
    \controlled{\dotgate}{5.50,1}{0}
    \gate{$H$}{6.75,1}
    \gate{$H$}{6.75,0}
\end{qcircuit}
}
\\\\[-1ex]
}
\def\altIHDD{
\m{\begin{qcircuit}[scale=0.5]
    \grid{4.00}{0,1}
    \gate{$H$}{1.25,0}
    \dgate{1}{2.75}{0}{1}
\end{qcircuit}
}
&=&
\m{\begin{qcircuit}[scale=0.5]
    \grid{4.00}{0,1}
    \dgate{1}{1.25}{0}{1}
    \gate{$H$}{2.75,1}
\end{qcircuit}
}
\\\\[-1ex]
\m{\begin{qcircuit}[scale=0.5]
    \grid{4.00}{0,1}
    \gate{$H$}{1.25,0}
    \dgate{2}{2.75}{0}{1}
\end{qcircuit}
}
&=&
\m{\begin{qcircuit}[scale=0.5]
    \grid{2.50}{0,1}
    \dgate{4}{1.25}{0}{1}
\end{qcircuit}
}
\\\\[-1ex]
\m{\begin{qcircuit}[scale=0.5]
    \grid{4.00}{0,1}
    \gate{$H$}{1.25,0}
    \dgate{3}{2.75}{0}{1}
\end{qcircuit}
}
&=&
\m{\begin{qcircuit}[scale=0.5]
    \grid{10.00}{0,1}
    \dgate{3}{1.25}{0}{1}
    \gate{$S$}{2.75,1}
    \gate{$S$}{4.25,1}
    \gate{$S$}{5.75,1}
    \gate{$H$}{7.25,1}
    \gate{$S$}{8.75,1}
    \gate{$S$}{2.75,0}
    \gate{$S$}{4.25,0}
\end{qcircuit}
}
\\\\[-1ex]
\m{\begin{qcircuit}[scale=0.5]
    \grid{4.00}{0,1}
    \gate{$H$}{1.25,0}
    \dgate{4}{2.75}{0}{1}
\end{qcircuit}
}
&=&
\m{\begin{qcircuit}[scale=0.5]
    \grid{2.50}{0,1}
    \dgate{2}{1.25}{0}{1}
\end{qcircuit}
}
\\\\[-1ex]
}
\def\altISDD{
\m{\begin{qcircuit}[scale=0.5]
    \grid{4.00}{0,1}
    \gate{$S$}{1.25,0}
    \dgate{1}{2.75}{0}{1}
\end{qcircuit}
}
&=&
\m{\begin{qcircuit}[scale=0.5]
    \grid{7.00}{0,1}
    \dgate{1}{1.25}{0}{1}
    \gate{$H$}{2.75,1}
    \gate{$S$}{4.25,1}
    \gate{$H$}{5.75,1}
\end{qcircuit}
}
\\\\[-1ex]
\m{\begin{qcircuit}[scale=0.5]
    \grid{4.00}{0,1}
    \gate{$S$}{1.25,0}
    \dgate{2}{2.75}{0}{1}
\end{qcircuit}
}
&=&
\m{\begin{qcircuit}[scale=0.5]
    \grid{10.00}{0,1}
    \dgate{3}{1.25}{0}{1}
    \gate{$S$}{2.75,1}
    \gate{$S$}{4.25,1}
    \gate{$S$}{5.75,1}
    \gate{$H$}{7.25,1}
    \gate{$S$}{8.75,1}
    \gate{$S$}{2.75,0}
    \gate{$S$}{4.25,0}
\end{qcircuit}
}
\\\\[-1ex]
\m{\begin{qcircuit}[scale=0.5]
    \grid{4.00}{0,1}
    \gate{$S$}{1.25,0}
    \dgate{3}{2.75}{0}{1}
\end{qcircuit}
}
&=&
\m{\begin{qcircuit}[scale=0.5]
    \grid{7.00}{0,1}
    \dgate{2}{1.25}{0}{1}
    \gate{$S$}{2.75,1}
    \gate{$H$}{4.25,1}
    \gate{$S$}{5.75,1}
\end{qcircuit}
}
\\\\[-1ex]
\m{\begin{qcircuit}[scale=0.5]
    \grid{4.00}{0,1}
    \gate{$S$}{1.25,0}
    \dgate{4}{2.75}{0}{1}
\end{qcircuit}
}
&=&
\m{\begin{qcircuit}[scale=0.5]
    \grid{7.00}{0,1}
    \dgate{4}{1.25}{0}{1}
    \gate{$H$}{2.75,1}
    \gate{$S$}{4.25,1}
    \gate{$H$}{5.75,1}
\end{qcircuit}
}
\\\\[-1ex]
}
\def\altSIDD{
\m{\begin{qcircuit}[scale=0.5]
    \grid{4.00}{0,1}
    \gate{$S$}{1.25,1}
    \dgate{1}{2.75}{0}{1}
\end{qcircuit}
}
&=&
\m{\begin{qcircuit}[scale=0.5]
    \grid{4.00}{0,1}
    \dgate{1}{1.25}{0}{1}
    \gate{$S$}{2.75,0}
\end{qcircuit}
}
\\\\[-1ex]
\m{\begin{qcircuit}[scale=0.5]
    \grid{4.00}{0,1}
    \gate{$S$}{1.25,1}
    \dgate{2}{2.75}{0}{1}
\end{qcircuit}
}
&=&
\m{\begin{qcircuit}[scale=0.5]
    \grid{4.00}{0,1}
    \dgate{2}{1.25}{0}{1}
    \gate{$S$}{2.75,0}
\end{qcircuit}
}
\\\\[-1ex]
\m{\begin{qcircuit}[scale=0.5]
    \grid{4.00}{0,1}
    \gate{$S$}{1.25,1}
    \dgate{3}{2.75}{0}{1}
\end{qcircuit}
}
&=&
\m{\begin{qcircuit}[scale=0.5]
    \grid{4.00}{0,1}
    \dgate{3}{1.25}{0}{1}
    \gate{$S$}{2.75,0}
\end{qcircuit}
}
\\\\[-1ex]
\m{\begin{qcircuit}[scale=0.5]
    \grid{4.00}{0,1}
    \gate{$S$}{1.25,1}
    \dgate{4}{2.75}{0}{1}
\end{qcircuit}
}
&=&
\m{\begin{qcircuit}[scale=0.5]
    \grid{4.00}{0,1}
    \dgate{4}{1.25}{0}{1}
    \gate{$S$}{2.75,0}
\end{qcircuit}
}
\\\\[-1ex]
}
\def\altZZDD{
\m{\begin{qcircuit}[scale=0.5]
    \grid{3.50}{0,1}
    \controlled{\dotgate}{1.00,1}{0}
    \dgate{1}{2.25}{0}{1}
\end{qcircuit}
}
&=&
\m{\begin{qcircuit}[scale=0.5]
    \grid{2.50}{0,1}
    \dgate{4}{1.25}{0}{1}
\end{qcircuit}
}
\\\\[-1ex]
\m{\begin{qcircuit}[scale=0.5]
    \grid{3.50}{0,1}
    \controlled{\dotgate}{1.00,1}{0}
    \dgate{2}{2.25}{0}{1}
\end{qcircuit}
}
&=&
\m{\begin{qcircuit}[scale=0.5]
    \grid{8.50}{0,1}
    \dgate{3}{1.25}{0}{1}
    \gate{$S$}{2.75,1}
    \gate{$S$}{4.25,1}
    \gate{$S$}{5.75,1}
    \gate{$H$}{7.25,1}
    \gate{$S$}{2.75,0}
    \gate{$S$}{4.25,0}
    \gate{$S$}{5.75,0}
\end{qcircuit}
}
\\\\[-1ex]
\m{\begin{qcircuit}[scale=0.5]
    \grid{3.50}{0,1}
    \controlled{\dotgate}{1.00,1}{0}
    \dgate{3}{2.25}{0}{1}
\end{qcircuit}
}
&=&
\m{\begin{qcircuit}[scale=0.5]
    \grid{5.50}{0,1}
    \dgate{2}{1.25}{0}{1}
    \gate{$H$}{2.75,1}
    \gate{$S$}{4.25,1}
    \gate{$S$}{2.75,0}
\end{qcircuit}
}
\\\\[-1ex]
\m{\begin{qcircuit}[scale=0.5]
    \grid{3.50}{0,1}
    \controlled{\dotgate}{1.00,1}{0}
    \dgate{4}{2.25}{0}{1}
\end{qcircuit}
}
&=&
\m{\begin{qcircuit}[scale=0.5]
    \grid{2.50}{0,1}
    \dgate{1}{1.25}{0}{1}
\end{qcircuit}
}
\\\\[-1ex]
}
\def\altSE{
\m{\begin{qcircuit}[scale=0.5]
    \grid{4.00}{0}
    \gate{$S$}{1.25,0}
    \egate{1}{2.75}{0}
\end{qcircuit}
}
&=&
\m{\begin{qcircuit}[scale=0.5]
    \grid{2.50}{0}
    \egate{2}{1.25}{0}
\end{qcircuit}
}
\\\\[-1ex]
\m{\begin{qcircuit}[scale=0.5]
    \grid{4.00}{0}
    \gate{$S$}{1.25,0}
    \egate{2}{2.75}{0}
\end{qcircuit}
}
&=&
\m{\begin{qcircuit}[scale=0.5]
    \grid{2.50}{0}
    \egate{3}{1.25}{0}
\end{qcircuit}
}
\\\\[-1ex]
\m{\begin{qcircuit}[scale=0.5]
    \grid{4.00}{0}
    \gate{$S$}{1.25,0}
    \egate{3}{2.75}{0}
\end{qcircuit}
}
&=&
\m{\begin{qcircuit}[scale=0.5]
    \grid{2.50}{0}
    \egate{4}{1.25}{0}
\end{qcircuit}
}
\\\\[-1ex]
\m{\begin{qcircuit}[scale=0.5]
    \grid{4.00}{0}
    \gate{$S$}{1.25,0}
    \egate{4}{2.75}{0}
\end{qcircuit}
}
&=&
\m{\begin{qcircuit}[scale=0.5]
    \grid{2.50}{0}
    \egate{1}{1.25}{0}
\end{qcircuit}
}
\\\\[-1ex]
}
\def\altIZZDDIIDD{
\m{\begin{qcircuit}[scale=0.5]
    \grid{5.00}{0,1,2}
    \controlled{\dotgate}{1.00,1}{0}
    \dgate{1}{2.25}{1}{2}
    \dgate{1}{3.75}{0}{1}
\end{qcircuit}
}
&=&
\m{\begin{qcircuit}[scale=0.5]
    \grid{8.00}{0,1,2}
    \dgate{1}{1.25}{1}{2}
    \dgate{1}{2.75}{0}{1}
    \gate{$H$}{4.25,2}
    \gate{$H$}{4.25,1}
    \controlled{\dotgate}{5.50,2}{1}
    \gate{$H$}{6.75,2}
    \gate{$H$}{6.75,1}
\end{qcircuit}
}
\\\\[-1ex]
\m{\begin{qcircuit}[scale=0.5]
    \grid{5.00}{0,1,2}
    \controlled{\dotgate}{1.00,1}{0}
    \dgate{1}{2.25}{1}{2}
    \dgate{2}{3.75}{0}{1}
\end{qcircuit}
}
&=&
\m{\begin{qcircuit}[scale=0.5]
    \grid{8.00}{0,1,2}
    \dgate{4}{1.25}{1}{2}
    \dgate{2}{2.75}{0}{1}
    \gate{$H$}{4.25,2}
    \controlled{\dotgate}{5.50,2}{1}
    \gate{$H$}{6.75,2}
\end{qcircuit}
}
\\\\[-1ex]
\m{\begin{qcircuit}[scale=0.5]
    \grid{5.00}{0,1,2}
    \controlled{\dotgate}{1.00,1}{0}
    \dgate{1}{2.25}{1}{2}
    \dgate{3}{3.75}{0}{1}
\end{qcircuit}
}
&=&
\m{\begin{qcircuit}[scale=0.5]
    \grid{14.00}{0,1,2}
    \dgate{4}{1.25}{1}{2}
    \dgate{3}{2.75}{0}{1}
    \gate{$H$}{4.25,2}
    \gate{$H$}{4.25,1}
    \gate{$S$}{5.75,1}
    \gate{$H$}{7.25,1}
    \controlled{\dotgate}{8.50,2}{1}
    \gate{$H$}{9.75,2}
    \gate{$S$}{9.75,1}
    \gate{$H$}{11.25,1}
    \gate{$S$}{12.75,1}
\end{qcircuit}
}
\cdot \omega^{7}
\\\\[-1ex]
\m{\begin{qcircuit}[scale=0.5]
    \grid{5.00}{0,1,2}
    \controlled{\dotgate}{1.00,1}{0}
    \dgate{1}{2.25}{1}{2}
    \dgate{4}{3.75}{0}{1}
\end{qcircuit}
}
&=&
\m{\begin{qcircuit}[scale=0.5]
    \grid{8.00}{0,1,2}
    \dgate{1}{1.25}{1}{2}
    \dgate{4}{2.75}{0}{1}
    \gate{$H$}{4.25,2}
    \gate{$H$}{4.25,1}
    \controlled{\dotgate}{5.50,2}{1}
    \gate{$H$}{6.75,2}
    \gate{$H$}{6.75,1}
\end{qcircuit}
}
\\\\[-1ex]
\m{\begin{qcircuit}[scale=0.5]
    \grid{5.00}{0,1,2}
    \controlled{\dotgate}{1.00,1}{0}
    \dgate{2}{2.25}{1}{2}
    \dgate{1}{3.75}{0}{1}
\end{qcircuit}
}
&=&
\m{\begin{qcircuit}[scale=0.5]
    \grid{8.00}{0,1,2}
    \dgate{2}{1.25}{1}{2}
    \dgate{4}{2.75}{0}{1}
    \gate{$H$}{4.25,1}
    \controlled{\dotgate}{5.50,2}{1}
    \gate{$H$}{6.75,1}
\end{qcircuit}
}
\\\\[-1ex]
\m{\begin{qcircuit}[scale=0.5]
    \grid{5.00}{0,1,2}
    \controlled{\dotgate}{1.00,1}{0}
    \dgate{2}{2.25}{1}{2}
    \dgate{2}{3.75}{0}{1}
\end{qcircuit}
}
&=&
\m{\begin{qcircuit}[scale=0.5]
    \grid{11.00}{0,1,2}
    \dgate{3}{1.25}{1}{2}
    \dgate{3}{2.75}{0}{1}
    \gate{$H$}{4.25,2}
    \gate{$S$}{5.75,2}
    \gate{$H$}{7.25,2}
    \gate{$H$}{4.25,1}
    \gate{$S$}{5.75,1}
    \gate{$H$}{7.25,1}
    \controlled{\dotgate}{8.50,2}{1}
    \gate{$S$}{9.75,2}
    \gate{$S$}{9.75,1}
\end{qcircuit}
}
\cdot \omega^{6}
\\\\[-1ex]
\m{\begin{qcircuit}[scale=0.5]
    \grid{5.00}{0,1,2}
    \controlled{\dotgate}{1.00,1}{0}
    \dgate{2}{2.25}{1}{2}
    \dgate{3}{3.75}{0}{1}
\end{qcircuit}
}
&=&
\m{\begin{qcircuit}[scale=0.5]
    \grid{12.50}{0,1,2}
    \dgate{3}{1.25}{1}{2}
    \dgate{2}{2.75}{0}{1}
    \gate{$H$}{4.25,2}
    \gate{$S$}{5.75,2}
    \gate{$H$}{7.25,2}
    \controlled{\dotgate}{8.50,2}{1}
    \gate{$S$}{9.75,2}
    \gate{$H$}{9.75,1}
    \gate{$S$}{11.25,1}
    \gate{$S$}{4.25,0}
    \gate{$S$}{5.75,0}
\end{qcircuit}
}
\cdot \omega^{7}
\\\\[-1ex]
\m{\begin{qcircuit}[scale=0.5]
    \grid{5.00}{0,1,2}
    \controlled{\dotgate}{1.00,1}{0}
    \dgate{2}{2.25}{1}{2}
    \dgate{4}{3.75}{0}{1}
\end{qcircuit}
}
&=&
\m{\begin{qcircuit}[scale=0.5]
    \grid{8.00}{0,1,2}
    \dgate{2}{1.25}{1}{2}
    \dgate{1}{2.75}{0}{1}
    \gate{$H$}{4.25,1}
    \controlled{\dotgate}{5.50,2}{1}
    \gate{$H$}{6.75,1}
\end{qcircuit}
}
\\\\[-1ex]
\m{\begin{qcircuit}[scale=0.5]
    \grid{5.00}{0,1,2}
    \controlled{\dotgate}{1.00,1}{0}
    \dgate{3}{2.25}{1}{2}
    \dgate{1}{3.75}{0}{1}
\end{qcircuit}
}
&=&
\m{\begin{qcircuit}[scale=0.5]
    \grid{14.00}{0,1,2}
    \dgate{3}{1.25}{1}{2}
    \dgate{4}{2.75}{0}{1}
    \gate{$H$}{4.25,1}
    \gate{$H$}{4.25,2}
    \gate{$S$}{5.75,2}
    \gate{$H$}{7.25,2}
    \controlled{\dotgate}{8.50,1}{2}
    \gate{$S$}{9.75,2}
    \gate{$H$}{11.25,2}
    \gate{$S$}{12.75,2}
    \gate{$H$}{9.75,1}
\end{qcircuit}
}
\cdot \omega^{7}
\\\\[-1ex]
\m{\begin{qcircuit}[scale=0.5]
    \grid{5.00}{0,1,2}
    \controlled{\dotgate}{1.00,1}{0}
    \dgate{3}{2.25}{1}{2}
    \dgate{2}{3.75}{0}{1}
\end{qcircuit}
}
&=&
\m{\begin{qcircuit}[scale=0.5]
    \grid{12.50}{0,1,2}
    \dgate{2}{1.25}{1}{2}
    \dgate{3}{2.75}{0}{1}
    \gate{$H$}{4.25,1}
    \gate{$S$}{5.75,1}
    \gate{$H$}{7.25,1}
    \controlled{\dotgate}{8.50,2}{1}
    \gate{$H$}{9.75,2}
    \gate{$S$}{11.25,2}
    \gate{$S$}{9.75,1}
    \gate{$S$}{4.25,0}
    \gate{$S$}{5.75,0}
\end{qcircuit}
}
\cdot \omega^{7}
\\\\[-1ex]
\m{\begin{qcircuit}[scale=0.5]
    \grid{5.00}{0,1,2}
    \controlled{\dotgate}{1.00,1}{0}
    \dgate{3}{2.25}{1}{2}
    \dgate{3}{3.75}{0}{1}
\end{qcircuit}
}
&=&
\m{\begin{qcircuit}[scale=0.5]
    \grid{8.00}{0,1,2}
    \dgate{2}{1.25}{1}{2}
    \dgate{2}{2.75}{0}{1}
    \controlled{\dotgate}{4.00,2}{1}
    \gate{$H$}{5.25,2}
    \gate{$S$}{6.75,2}
    \gate{$H$}{5.25,1}
    \gate{$S$}{6.75,1}
\end{qcircuit}
}
\\\\[-1ex]
\m{\begin{qcircuit}[scale=0.5]
    \grid{5.00}{0,1,2}
    \controlled{\dotgate}{1.00,1}{0}
    \dgate{3}{2.25}{1}{2}
    \dgate{4}{3.75}{0}{1}
\end{qcircuit}
}
&=&
\m{\begin{qcircuit}[scale=0.5]
    \grid{14.00}{0,1,2}
    \dgate{3}{1.25}{1}{2}
    \dgate{1}{2.75}{0}{1}
    \gate{$H$}{4.25,1}
    \gate{$H$}{4.25,2}
    \gate{$S$}{5.75,2}
    \gate{$H$}{7.25,2}
    \controlled{\dotgate}{8.50,1}{2}
    \gate{$S$}{9.75,2}
    \gate{$H$}{11.25,2}
    \gate{$S$}{12.75,2}
    \gate{$H$}{9.75,1}
\end{qcircuit}
}
\cdot \omega^{7}
\\\\[-1ex]
\m{\begin{qcircuit}[scale=0.5]
    \grid{5.00}{0,1,2}
    \controlled{\dotgate}{1.00,1}{0}
    \dgate{4}{2.25}{1}{2}
    \dgate{1}{3.75}{0}{1}
\end{qcircuit}
}
&=&
\m{\begin{qcircuit}[scale=0.5]
    \grid{8.00}{0,1,2}
    \dgate{4}{1.25}{1}{2}
    \dgate{1}{2.75}{0}{1}
    \gate{$H$}{4.25,2}
    \gate{$H$}{4.25,1}
    \controlled{\dotgate}{5.50,2}{1}
    \gate{$H$}{6.75,2}
    \gate{$H$}{6.75,1}
\end{qcircuit}
}
\\\\[-1ex]
\m{\begin{qcircuit}[scale=0.5]
    \grid{5.00}{0,1,2}
    \controlled{\dotgate}{1.00,1}{0}
    \dgate{4}{2.25}{1}{2}
    \dgate{2}{3.75}{0}{1}
\end{qcircuit}
}
&=&
\m{\begin{qcircuit}[scale=0.5]
    \grid{8.00}{0,1,2}
    \dgate{1}{1.25}{1}{2}
    \dgate{2}{2.75}{0}{1}
    \gate{$H$}{4.25,2}
    \controlled{\dotgate}{5.50,2}{1}
    \gate{$H$}{6.75,2}
\end{qcircuit}
}
\\\\[-1ex]
\m{\begin{qcircuit}[scale=0.5]
    \grid{5.00}{0,1,2}
    \controlled{\dotgate}{1.00,1}{0}
    \dgate{4}{2.25}{1}{2}
    \dgate{3}{3.75}{0}{1}
\end{qcircuit}
}
&=&
\m{\begin{qcircuit}[scale=0.5]
    \grid{14.00}{0,1,2}
    \dgate{1}{1.25}{1}{2}
    \dgate{3}{2.75}{0}{1}
    \gate{$H$}{4.25,2}
    \gate{$H$}{4.25,1}
    \gate{$S$}{5.75,1}
    \gate{$H$}{7.25,1}
    \controlled{\dotgate}{8.50,2}{1}
    \gate{$H$}{9.75,2}
    \gate{$S$}{9.75,1}
    \gate{$H$}{11.25,1}
    \gate{$S$}{12.75,1}
\end{qcircuit}
}
\cdot \omega^{7}
\\\\[-1ex]
\m{\begin{qcircuit}[scale=0.5]
    \grid{5.00}{0,1,2}
    \controlled{\dotgate}{1.00,1}{0}
    \dgate{4}{2.25}{1}{2}
    \dgate{4}{3.75}{0}{1}
\end{qcircuit}
}
&=&
\m{\begin{qcircuit}[scale=0.5]
    \grid{8.00}{0,1,2}
    \dgate{4}{1.25}{1}{2}
    \dgate{4}{2.75}{0}{1}
    \gate{$H$}{4.25,2}
    \gate{$H$}{4.25,1}
    \controlled{\dotgate}{5.50,2}{1}
    \gate{$H$}{6.75,2}
    \gate{$H$}{6.75,1}
\end{qcircuit}
}
\\\\[-1ex]
}
\begin{document}

\title{Generators and relations for $n$-qubit Clifford operators}

\author{Peter Selinger}
\address{Department of Mathematics and Statistics, Dalhousie
  University}
\email{selinger@mathstat.dal.ca}
\thanks{Research supported by NSERC}

\keywords{Stabilizer circuits, Clifford circuits, generators and
  relations}

\ACMCCS{[{\bf Theory of computation}]: Models of computation---Quantum
  computation theory}

\begin{abstract}
  We define a normal form for Clifford circuits, and we prove that
  every Clifford operator has a unique normal form. Moreover, we
  present a rewrite system by which any Clifford circuit can be
  reduced to normal form. This yields a presentation of Clifford
  operators in terms of generators and relations.
\end{abstract}

\maketitle

\section{Introduction}

In quantum computation, an important subclass of quantum circuits is
the class of {\em Clifford circuits} or {\em stabilizer circuits}. It
is the smallest class of quantum circuits that includes the gates

\begin{equation}\label{eqn-generators}
\omega = e^{i\pi/4},\quad
H = \frac{1}{\sqrt{2}}\zmatrix{cc}{1&1\\1&-1},\quad
S = \zmatrix{cc}{1&0\\0&i},\quad
\displaystyle Z_c =
\zmatrix{cccc}{1&0&0&0\\0&1&0&0\\0&0&1&0\\0&0&0&-1},
\end{equation}
identities, and closed under the operations of tensor product and
composition. It is well-known that Clifford circuits can be
efficiently simulated on a classical computer, and therefore they are
not universal for quantum computing {\cite{Gottesman-1998}}. On the
other hand, Clifford gates are transversal in many quantum
error-correcting codes, and therefore they are particularly easy to
implement fault-tolerantly. For this reason, universal gate bases for
fault-tolerant quantum computation are often chosen to consist of the
Clifford operators and one additional gate, for example the
$\pi/8$-gate {\cite{Buhrman-Cleve-etal}}.

For all $n\geq 0$, the set of Clifford operators on $n$ qubits forms a
group, known as the {\em Clifford group} on $n$ qubits, which we
denote $\Clifford(n)$. It is well-known (and we will prove below) that the
Clifford group on $n$ qubits is finite and has
\[  |\Clifford(n)| = 8\cdot\prod_{i=1}^{n} 2(4^i-1)4^i
\]
elements; for example, the sizes for $n=1$, $2$, and $3$ are,
respectively:
\[  |\Clifford(1)| = 192,\quad
|\Clifford(2)| = 92160,\quad
|\Clifford(3)| = 743178240.
\]
In this paper, we define a normal form for Clifford circuits, and we
prove that every Clifford operator has a unique normal form. Moreover,
we present a rewrite system by which any Clifford circuit can be
reduced to normal form. This yields a presentation of Clifford
operators in terms of generators and relations, shown in
Figure~\ref{fig-relations}.

\paragraph{Related work.}

Van den Nest {\cite{VandenNest}} gave a ``normal'' form for Clifford
circuits, showing that every Clifford circuit can be written as a
single layer of Hadamard gates, sandwiched between two circuits
consisting only of gates that preserve the computational basis ($X$,
$S$, controlled-$X$, and controlled-$Z$ gates). Since it is evident
that such circuits can be efficiently simulated on a classical
computer, this yields a direct proof of the Gottesman-Knill theorem
{\cite{Gottesman-1998}} without relying on the stabilizer formalism.
However, Van den Nest's normal forms are not at all unique, and
therefore cannot be used to derive an equational presentation of
Clifford operators.

Backens {\cite{Backens}} showed completeness of the ZX-calculus, a
graphical language (given by generators and relations) that
generalizes quantum circuits and includes Clifford circuits as a
proper subset. While this work is closely related, it does not yield a
direct equational presentation of Clifford operators. This is because
expressions of the ZX-calculus can denote general linear maps, and not
just unitary ones.

\section{Generators and relations}

Before continuing, it may be worthwhile to clarify what we mean by
``generators and relations''. We do not mean this in the sense of the
usual {\em word problems} studied in group theory, but in the sense of
{\em two-dimensional word problems} appropriate for quantum circuits.
The use of higher-dimensional systems of generators and relations was
pioneered by Burroni~\cite{Burroni-1993}, and was used, for example,
by Lafont to axiomatize various classes of boolean
circuits~\cite{Lafont-2003}.

In a nutshell, higher-dimensional word problems are a generalization
of word problems where one regards not only composition, but also
tensor product as a basic structural operation. We already mentioned
that the Clifford operators form a family of groups $\Clifford(0)$,
$\Clifford(1)$, $\Clifford(2)$, etc. This family is equipped with the
additional structure of a {\em strict spatial monoidal groupoid} (see
{\cite{ML71,Sel2009}}). The abstract definition of strict spatial
monoidal groupoids is not of great importance here; for our purposes,
it simply means the following: the Clifford operators are equipped
with an associative tensor product $\otimes :
\Clifford(n)\times\Clifford(m)\to\Clifford(n+m)$, such that the
identity group element of $\Clifford(0)$ also serves as the left and
right unit for tensor, and satisfying the {\em bifunctorial law}
$(f\otimes I_m)\circ(I_n \otimes g) = (I_n \otimes g)\circ(f\otimes
I_m)$ and the {\em spatial law} $\lambda\otimes I_n =
I_n\otimes\lambda$, where $f\in\Clifford(n)$, $g\in\Clifford(m)$,
$\lambda\in\Clifford(0)$, and $I_n$, $I_m$ are the identity elements
of $\Clifford(n)$ and $\Clifford(m)$, respectively. (From now on, we
omit the subscript, writing $I$ for the identity matrix of any size,
and in fact for the identity element of any group). In circuit
notation:
\[
  \mbox{Bifunctorial law:}~~
  \mp{0.3}{\begin{qcircuit}[scale=0.5]
      \grid{3.5}{0,1}
      \gate{$f$}{1,1}
      \gate{$g$}{2.5,0}
    \end{qcircuit}
  }
  ~=~
  \mp{0.3}{\begin{qcircuit}[scale=0.5]
      \grid{3.5}{0,1}
      \gate{$g$}{1,0}
      \gate{$f$}{2.5,1}
    \end{qcircuit}
  };
  \qquad
  \mbox{Spatial law:}~~
  \mp{0.3}{\begin{qcircuit}[scale=0.5]
      \grid{3.5}{0}
      \gate{$\lambda$}{1.75,1}
    \end{qcircuit}
  }
  ~=~
  \mp{0.3}{\begin{qcircuit}[scale=0.5]
      \grid{3.5}{0}
      \gate{$\lambda$}{1.75,-1}
    \end{qcircuit}
  }.
\]

Thus, the notion of a strict spatial monoidal groupoid already has the
notion of the tensor product ``built in'', along with the fact that
operators on disjoint sets of qubits commute with each other, and that
scalars commute with everything.

Consequently, when we give generators and relations for Clifford
operators {\em as a strict spatial monoidal groupoid}, the
bifunctorial and spatial laws do not need to appear explicitly as part
of the axiomatization. Moreover, unlike group theoretic
axiomatizations, we only need one generator per basic gate, and not
one generator per basic gate per qubit.

As a further illustration of this concept, consider the usual
axiomatization of the braid group $\cB(n)$ on $n$ strands. If this is
axiomatized as a group, one requires $n-1$ generators
$\sigma_1,\ldots,\sigma_{n-1}$ (provided that $n\geq 2$), as well as
$n-2$ instances of the Yang-Baxter equation
$\sigma_i\sigma_{i+1}\sigma_i=\sigma_{i+1}\sigma_i\sigma_{i+1}$
(provided that $n\geq 3$), and $(n-2)(n-3)/2$ instances of
commutativity $\sigma_i\sigma j=\sigma_j\sigma_i$ where $j\geq i+2$
(provided that $n\geq 4$). On the other hand, the axiomatization in
terms of strict spatial monoidal groupoid requires only one generator
$\sigma\in\cB(2)$ and one equation $(\sigma\times
I)(I\times\sigma)(\sigma\times I) = (I\times\sigma)(\sigma\times
I)(I\times\sigma)$, where $I\in\cB(1)$ is the group identity.

\section{Action of the Clifford group on the Pauli group}

Let $X$, $Y$, and $Z$ be the usual Pauli operators
\begin{equation}\label{eqn-pauli}
X = \zmatrix{cc}{0&1\\1&0},\quad
Y = \zmatrix{cc}{0&-i\\i&0},\quad
Z = \zmatrix{cc}{1&0\\0&-1}.
\end{equation}
The {\em Pauli group on $n$ qubits} consists of
$2^n\times2^n$-matrices of the form $\lambda P_1\otimes\ldots\otimes
P_n$, where $\lambda\in\s{\pm1,\pm i}$ and
$P_1,\ldots,P_n\in\s{I,X,Y,Z}$. We write $\Pauli(n)$ for the Pauli
group on $n$ qubits. 

We say that an $n$-qubit operator $U$ is a {\em scalar} if it is a
scalar multiple of the identity operator, i.e., $U=\lambda I$. In this
case, we also write $U=\lambda$ by a mild abuse of notation.

It is well-known that the Clifford group acts on the Pauli group by
conjugation: whenever $C\in\Clifford(n)$ is a Clifford operator and
$P\in\Pauli(n)$ is a Pauli operator, then $C\bullet P := CPC\inv
\in\Pauli(n)$ is another Pauli operator. Moreover, because $C\bullet
(PQ) = (C\bullet P)(C\bullet Q)$, the action of any fixed Clifford
operator $C$ on $\Pauli(n)$ is a group automorphism of
$\Pauli(n)$. Also, since $C\bullet\lambda=\lambda$, this automorphism
fixes scalars. We will show in Proposition~\ref{prop-automorphism}
below that, conversely, any such group automorphism arises from the
action of some Clifford operator.
We have the following well-known properties:

\begin{proposition}\label{prop-clifford-unique}
  Let $C\in\Clifford(n)$. If $C\bullet P = P$ for all $P\in\Pauli(n)$,
  then $C$ is a scalar. 
\end{proposition}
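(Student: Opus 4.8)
The plan is to translate the fixed-point hypothesis into a commutation statement and then invoke the fact that the Pauli operators span the full matrix algebra. First I would unfold the definition of the action: $C\bullet P = CPC\inv$, so the hypothesis $C\bullet P = P$ for all $P$ is equivalent to $CP = PC$ for all $P\in\Pauli(n)$. In other words, the assumption says precisely that $C$ commutes with every Pauli operator.

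Next, I would recall that the $4^n$ matrices of the form $P_1\tensor\cdots\tensor P_n$, with each $P_i\in\s{I,X,Y,Z}$, are linearly independent. This follows from the fact that $I,X,Y,Z$ are linearly independent in the $4$-dimensional space $M_2(\mathbb{C})$ of $2\times2$ matrices (they are in fact orthogonal with respect to the trace inner product), together with the observation that tensor products of bases form a basis. Since the space $M_{2^n}(\mathbb{C})$ of $2^n\times 2^n$ matrices has dimension $4^n$, these tensor products form a basis of $M_{2^n}(\mathbb{C})$.

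Now every element of $\Pauli(n)$ is of the form $\lambda\, P_1\tensor\cdots\tensor P_n$ with $\lambda\in\s{\pm1,\pm i}$, and the scalar prefactor $\lambda$ plays no role in whether $C$ commutes with it. Hence the hypothesis gives that $C$ commutes with each basis element $P_1\tensor\cdots\tensor P_n$, and therefore, by linearity, with every matrix in $M_{2^n}(\mathbb{C})$. An operator commuting with all matrices lies in the center of the full matrix algebra, which is exactly $\mathbb{C} I$; thus $C=\lambda I$ for some scalar $\lambda$, i.e.\ $C$ is a scalar. (Equivalently, one may phrase the last step as an application of Schur's lemma, using that the Pauli group acts irreducibly on $(\mathbb{C}^2)^{\tensor n}$.)

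This argument is essentially routine, and the only substantive ingredient is the spanning (equivalently, irreducibility) property of the tensor products of Pauli matrices. I expect the one thing requiring care to be the bookkeeping that the scalar factors $\lambda\in\s{\pm1,\pm i}$ appearing in the definition of $\Pauli(n)$ are harmless, so that commuting with all of $\Pauli(n)$ is the same as commuting with the basis $\s{P_1\tensor\cdots\tensor P_n}$ of $M_{2^n}(\mathbb{C})$.
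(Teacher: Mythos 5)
Your proposal is correct and follows essentially the same route as the paper's proof: both reduce the fixed-point hypothesis to the statement that $C$ commutes with every Pauli operator, use the fact that tensor products of Pauli matrices span the full algebra of $2^n\times 2^n$ matrices, and conclude that $C$ commutes with every matrix and is therefore a scalar. Your version merely spells out a few details the paper leaves implicit (linear independence giving a basis, and the final step via the center of the matrix algebra or Schur's lemma).
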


\begin{proof}
  First note that every complex $2\times 2$-matrix can be written in
  the form $aI+bX+cY+dZ$, for complex scalars $a,b,c,d$. It follows
  that the set of Pauli operators spans the set of $2^n\times
  2^n$-operators as a vector space.  By assumption, $CPC\inv = P$ for
  all Pauli operators $P$. It follows that $CMC\inv = M$, hence
  $CM=MC$, for all operators $M$. This implies that $C$ is a scalar.
\end{proof}

\begin{corollary}\label{cor-uniqueness}
  If $C,D$ are two Clifford operators that act identically on the
  Pauli group, then $C,D$ differ only by a global phase, i.e.,
  $C=\omega^iD$ for some $i$.
\end{corollary}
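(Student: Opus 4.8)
The plan is to deduce this immediately from Proposition~\ref{prop-clifford-unique}. Set $E := D\inv C$, which is again a Clifford operator since $\Clifford(n)$ is a group. Suppose $C$ and $D$ act identically on the Pauli group, so that $CPC\inv = DPD\inv$ for every $P\in\Pauli(n)$. Rearranging gives $(D\inv C)\,P\,(D\inv C)\inv = P$, i.e.\ $E\bullet P = P$ for all $P\in\Pauli(n)$. By Proposition~\ref{prop-clifford-unique}, $E$ is therefore a scalar, say $E=\lambda I$, and hence $C = D E = \lambda D$.

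It remains to show that the scalar $\lambda$ is a power of $\omega$. First I would observe, by a straightforward induction on the structure of a Clifford circuit, that every Clifford operator has all of its matrix entries in the cyclotomic field $\mathbb{Q}(\omega)=\mathbb{Q}(\zeta_8)$: the entries of the generators $\omega I$, $H$, $S$, and $Z_c$ all lie in $\mathbb{Q}(\omega)$ (note $\tfrac1{\sqrt2}=\tfrac12(\omega-\omega^3)\in\mathbb{Q}(\omega)$), and this field is closed under the matrix operations of composition, tensor product, and inverse used to build Clifford operators. In particular $\lambda\in\mathbb{Q}(\omega)$. Since the Clifford group is finite (as recorded in the introduction), the element $E=\lambda I$ has finite order, so $\lambda$ is a root of unity. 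The roots of unity contained in $\mathbb{Q}(\zeta_8)$ are exactly the eighth roots of unity, that is, the powers of $\omega$; hence $\lambda=\omega^i$ for some $i$, giving $C=\omega^i D$ as claimed.

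The reduction in the first paragraph is routine group manipulation, so the only real content lies in the identification of $\lambda$. The main obstacle is the final step: ruling out scalars that are roots of unity of order not dividing $8$. This is handled by the field-theoretic fact that $\mathbb{Q}(\zeta_8)$, being a cyclotomic field of even conductor, contains no roots of unity beyond $\mu_8$. If one wished to avoid invoking finiteness of $\Clifford(n)$, one could instead note that $\lambda$ is a unit of modulus $1$ lying in $\mathbb{Q}(\zeta_8)$ and argue via a Kronecker-type theorem, but appealing to finiteness is cleaner here.
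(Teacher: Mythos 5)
Your first paragraph is precisely the paper's proof: the paper, too, disposes of the corollary in one line by applying Proposition~\ref{prop-clifford-unique} to $D\inv C$. Where you go beyond the paper is your second paragraph, which supplies something the paper silently elides: Proposition~\ref{prop-clifford-unique} only yields that $D\inv C$ is \emph{some} scalar $\lambda$, whereas the corollary asserts $\lambda=\omega^i$. Your cyclotomic argument for this is sound: entries of Clifford operators lie in $\mathbb{Q}(\omega)$, and the only roots of unity in $\mathbb{Q}(\omega)=\mathbb{Q}(\zeta_8)$ are the powers of $\omega$. So your write-up is, on this point, more careful than the paper itself.

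Two cautions, however. First, deducing that $\lambda$ is a root of unity from finiteness of $\Clifford(n)$ is circular \emph{within this paper's development}: the paper's own proof of finiteness is the unlabelled corollary following Proposition~\ref{prop-automorphism}, which comes after the present corollary, and its claimed one-to-one correspondence between normal forms and Clifford operators itself relies on the fact that a scalar Clifford operator is a power of $\omega$ --- i.e., on the very statement being proved. The circularity is easy to break without outside input: the determinants of the generators $\omega$, $H$, $S$, $Z_c$ all lie in $\mu_8$, and since $\det(AB)=\det(A)\det(B)$ and $\det(A\otimes B)=\det(A)^{\dim B}\det(B)^{\dim A}$, every Clifford operator has determinant in $\mu_8$; a scalar $\lambda I$ on $n$ qubits then satisfies $\lambda^{2^n}\in\mu_8$, so $\lambda$ is a root of unity, and your field-theoretic step finishes the argument. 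Second, the Kronecker-type alternative you mention in passing is not routine: Kronecker's theorem requires an algebraic \emph{integer} all of whose conjugates have modulus $1$, and since entries of Clifford operators only lie in $\mathbb{Z}[\omega,1/\sqrt2]$, it is not immediate that $\lambda$ is an algebraic integer; indeed $(3+4i)/5$ is a modulus-one element of $\mathbb{Q}(\zeta_8)$ that is not a root of unity, so some integrality argument would genuinely be needed there.
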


\begin{proof}
  By Proposition~\ref{prop-clifford-unique}, applied to $D\inv C$.
\end{proof}

\begin{proposition}\label{prop-clifford-exists}
  Let $\phi:\Pauli(n)\to\Pauli(n)$ be an automorphism of the Pauli
  group that fixes scalars. Then there exists some Clifford operator
  $C\in\Clifford(n)$ (necessarily unique up to a phase by
  Corollary~\ref{cor-uniqueness}) such that for all $P$, $C\bullet P =
  \phi(P)$.
\end{proposition}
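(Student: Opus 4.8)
The plan is to show that the homomorphism $C\mapsto(P\mapsto C\bullet P)$ from $\Clifford(n)$ to the scalar-fixing automorphisms of $\Pauli(n)$ is surjective, by constructing a suitable $C$ explicitly. First I would observe that $\phi$ is completely determined by the $2n$ values $A_j:=\phi(X_j)$ and $B_j:=\phi(Z_j)$, where $X_j,Z_j$ denote the Pauli operators acting as $X$, resp.\ $Z$, on the $j$-th qubit and as $I$ elsewhere, since these together with the scalars generate $\Pauli(n)$. Because $X_j^2=Z_j^2=I$ and $\phi$ fixes scalars, each $A_j,B_j$ squares to $I$, hence is a \emph{Hermitian} Pauli operator, i.e.\ of the form $\pm Q_1\tensor\cdots\tensor Q_n$ with $Q_k\in\s{I,X,Y,Z}$. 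Moreover, since any two Pauli operators either commute or anticommute and the resulting commutator is a scalar that $\phi$ preserves, the images $A_1,B_1,\ldots,A_n,B_n$ satisfy exactly the same (anti)commutation relations as $X_1,Z_1,\ldots,X_n,Z_n$: each $A_j$ anticommutes with $B_j$, and all other pairs commute. Thus it suffices to produce a Clifford $C$ with $C\bullet X_j=A_j$ and $C\bullet Z_j=B_j$ for all $j$.

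I would carry this out by induction on $n$, the case $n=0$ being trivial ($\phi$ is the identity on scalars and $C=I$). For the inductive step, the key is a lemma (the crux, discussed below): for any pair $(A,B)$ of Hermitian Pauli operators with $AB=-BA$ there is a Clifford $D$ with $D\bullet X_1=A$ and $D\bullet Z_1=B$. Applying this to $(A_1,B_1)$ gives such a $D$. Now consider the map $\psi:=\bigl(P\mapsto D\inv\phi(P)D\bigr)$, again a scalar-fixing automorphism; by construction $\psi(X_1)=X_1$ and $\psi(Z_1)=Z_1$. For $j\geq2$ the image $\psi(X_j)$ commutes with both $X_1$ and $Z_1$ (the relations are preserved), and a Hermitian Pauli commuting with both $X_1$ and $Z_1$ is necessarily trivial on the first qubit; likewise for $\psi(Z_j)$. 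Hence $\psi$ restricts to a scalar-fixing automorphism $\psi'$ of the copy of $\Pauli(n-1)$ supported on qubits $2,\ldots,n$. By the induction hypothesis there is a Clifford $C'$ on those qubits realizing $\psi'$; extending it by the identity on qubit $1$ yields a Clifford realizing $\psi$ on all of $\Pauli(n)$. Finally $\phi(P)=D\psi(P)D\inv$, so $C:=D\,(I\tensor C')$ realizes $\phi$.

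It remains to prove the crux lemma, which is where the real work lies. I would first establish transitivity: any Hermitian Pauli $A\neq\pm I$ can be sent to $X_1$ by Clifford conjugation. Single-qubit Clifford gates act transitively on $\s{X,Y,Z}$, so conjugating by a tensor product of such gates turns every non-identity tensor factor of $A$ into $Z$, reducing $A$ to $\pm Z_{j_1}\cdots Z_{j_k}$; controlled-$X$ gates (which are Clifford) then merge these into a single $Z_{j_1}$, a swap moves it to qubit $1$, conjugation by $X_1$ corrects the sign, and a final $H$ produces $X_1$. This yields a Clifford $U$ with $U\bullet X_1=A$. Setting $B':=U\inv\bullet B$, which anticommutes with $X_1$, I would then find a Clifford $W$ that fixes $X_1$ and sends $Z_1$ to $B'$, cleaning up $B'$ using only operations that preserve $X_1$ (single-qubit Cliffords about the $X$-axis, controlled gates, and Pauli sign corrections); then $D:=UW$ satisfies $D\bullet X_1=A$ and $D\bullet Z_1=B$.

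The main obstacle is exactly this crux lemma. Abstractly it amounts to two facts: that the Clifford action on $\Pauli(n)$ modulo phases surjects onto the symplectic group $\mathrm{Sp}(2n,\mathbb{F}_2)$ (so that the $\mathbb{F}_2$-linear data of $(A,B)$ can be matched), and that conjugation by the Pauli operators themselves realizes all $2^{2n}$ independent sign changes (so that the leftover signs can be corrected). The delicate bookkeeping is the second half, producing $W$ while keeping the image of $X_1$ pinned to $A$, since the operations used to simplify $B'$ on the remaining qubits must be chosen so as not to disturb $X_1$. Once the crux is in place, the induction and the reassembly $C=D\,(I\tensor C')$ are routine.
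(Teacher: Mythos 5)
You prove the statement correctly, and your top-level skeleton --- induction on $n$, realizing the automorphism on a single anticommuting pair of Hermitian Pauli operators, then restricting the conjugation-twisted automorphism to the commutant copy of $\Pauli(n-1)$ and invoking the induction hypothesis --- is exactly the skeleton of the paper's proof of Proposition~\ref{prop-automorphism}, of which the present proposition is stated as an immediate corollary (up to the inessential difference that you send the standard pair $(X_1,Z_1)$ to its image under $\phi$, while the paper sends the preimages $\phi\inv(I\tensor\cdots\tensor I\tensor Z)$, $\phi\inv(I\tensor\cdots\tensor I\tensor X)$ to the standard pair). Where you genuinely diverge is in the crux lemma. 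The paper proves it (Lemma~\ref{lem-pq}, via Lemmas~\ref{lem-l}--\ref{lem-mz}) by exhibiting explicit staircase circuits over the bespoke gates $A_i$, $B_j$, $C_k$, $D_\ell$, $E_h$ of Figure~\ref{fig-def}, engineered so that the circuit carrying a given anticommuting pair to the standard pair not only exists but is \emph{unique} of a prescribed $Z$-normal/$X$-normal shape; you instead give a generic two-stage argument with off-the-shelf gates: transitivity of Clifford conjugation on Hermitian Paulis $\neq\pm I$ to pin down $A$, then a cleanup of $B'$ inside the stabilizer of $X_1$. Your route is more elementary and fully sufficient here, since uniqueness of $C$ up to phase is already supplied by Corollary~\ref{cor-uniqueness}; and the step you flag as delicate does go through --- arbitrary single-qubit Cliffords on qubits $j\geq 2$, $X$-axis rotations on qubit $1$, controlled-$X$ gates with target on qubit $1$, and conjugation by $X_1$ all fix $X_1$, and they suffice to turn $B'$ into a product of $Z$-factors, merge those into $Z_1$, and correct the sign. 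What the paper's heavier machinery buys, and your argument does not, is uniqueness of the realizing circuit in normal form: that is what drives the order formula for $|\Clifford(n)|$ and the rewrite system in the rest of the paper, with the bare existence statement proved here falling out of the stronger result for free.
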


Proposition~\ref{prop-clifford-exists} is an immediate consequence of
Proposition~\ref{prop-automorphism}, which we will prove below. 

\section{Normal forms for Clifford operators}

We follow the usual practice of writing quantum circuits from left to
right, i.e., in the opposite order of the notation for matrix
multiplication. The qubits in a circuit are numbered from top to
bottom. The gates $\omega$, $H$, $S$, and $Z_c$ were defined in
{\eqref{eqn-generators}}, and are respectively called the omega-gate,
Hadamard gate, $S$-gate, and controlled-$Z$ gate. We use the usual
circuit notations for the Hadamard and $S$-gates. Because the
controlled-$Z$ gate is symmetric in the two qubits it acts upon, we
use a symmetric notation for it:
\[ \mbox{Hadamard gate:}~~
\mp{.2}{\begin{qcircuit}[scale=0.5]
    \grid{2}{0}
    \gate{$H$}{1,0}
  \end{qcircuit}
  };
\quad
\mbox{$S$-gate:}~~
\mp{.2}{\begin{qcircuit}[scale=0.5]
    \grid{2}{0}
    \gate{$S$}{1,0}
  \end{qcircuit}
  };
\quad
\mbox{Controlled $Z$-gate:}~~
\mp{0.3}{\begin{qcircuit}[scale=0.5]
    \grid{2}{0,1}
    \controlled{\dotgate}{1,0}{1}
  \end{qcircuit}}
=
\mp{0.4}{\begin{qcircuit}[scale=0.5]
    \grid{2}{0,1}
    \controlled{\gate{$Z$}}{1,0}{1}
  \end{qcircuit}}
=
\mp{0.2}{\begin{qcircuit}[scale=0.5]
    \grid{2}{0,1}
    \controlled{\gate{$Z$}}{1,1}{0}
  \end{qcircuit}
}.
\]
The Pauli operators {\eqref{eqn-pauli}} are also Clifford operators,
and are definable as $X=HSSH$, $Y=HSSHSS\omega^2$, $Z=SS$.

\begin{remark}
  In this paper, we assume that controlled-$Z$ gates, and other binary
  gates, are only applied to {\em adjacent} qubits. This is without
  loss of generality, because gates on non-adjacent qubits can be
  equivalently expressed using swap gates:
  \begin{equation}\label{eqn-adjacent}
    \m{\begin{qcircuit}[scale=0.5]
        \grid{2}{0,1,2}
        \controlled{\dotgate}{1,0}{2}
      \end{qcircuit}
    }
    =
    \m{\begin{qcircuit}[scale=0.5]
        \grid{5}{2}
        \grid{1}{0,1}
        \gridx{2}{3}{0,1}
        \gridx{4}{5}{0,1}
        \draw (1,0) -- (2,1);
        \draw (1,1) -- (2,0);
        \draw (3,0) -- (4,1);
        \draw (3,1) -- (4,0);
        \controlled{\dotgate}{2.5,1}{2}
      \end{qcircuit}~,
    }
  \end{equation}
  and swap gates can be further decomposed as
  \begin{equation}\label{eqn-swap}
    \m{\begin{qcircuit}[scale=0.5]
        \gridx{0}{1}{0,1}
        \gridx{2}{3}{0,1}
        \draw (1,0) -- (2,1);
        \draw (1,1) -- (2,0);
      \end{qcircuit}
    }
    =
    \m{\begin{qcircuit}[scale=0.5]
        \grid{8.5}{0,1}
        \gate{$H$}{1.25,0}
        \gate{$H$}{1.25,1}
        \controlled{\dotgate}{2.5,0}{1}
        \gate{$H$}{3.75,0}
        \gate{$H$}{3.75,1}
        \controlled{\dotgate}{5,0}{1}
        \gate{$H$}{6.25,0}
        \gate{$H$}{6.25,1}
        \controlled{\dotgate}{7.5,0}{1}
      \end{qcircuit}~.
    }
  \end{equation}
  The restriction of gates to adjacent qubits leads to a much cleaner
  presentation of normal forms.
\end{remark}

\begin{figure}
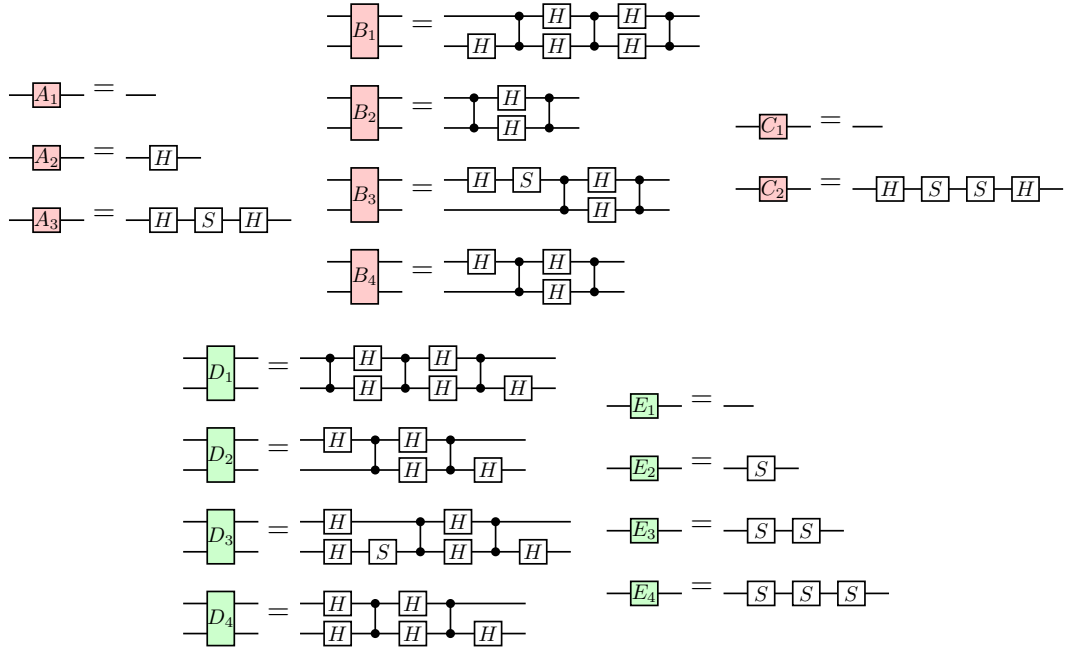

\[ \begin{array}{l@{}l@{~}l}
  \defa
\end{array}
\begin{array}{l@{}l@{~}l} 
  \defb
\end{array}
\begin{array}{l@{}l@{~}l} 
  \defc
\end{array}
\]\[
\begin{array}{l@{}l@{~}l} 
  \defd
\end{array}
\begin{array}{l@{}l@{~}l} 
  \defe
\end{array}
\]
\caption{A convenient gate set for Clifford circuits.}\label{fig-def}
\end{figure}

\begin{definition}
  We define new basic gates $A_i$, $B_j$, $C_k$, $D_{\ell}$, and
  $E_h$, where $i\in\s{1,2,3}$, $j,\ell,h\in\s{1,2,3,4}$, and
  $k\in\s{1,2}$. The meaning of these gates is given in
  Figure~\ref{fig-def}. Note that this gate set is highly redundant;
  for example, $A_1$, $C_1$, and $E_1$ are three different notations
  for the single-qubit identity gate. However, these gates will be
  convenient as building blocks for normal forms. They were chosen for
  their particular actions on Pauli operators, as will be explained in
  Section~\ref{sec-existence-uniqueness} and shown in
  Figure~\ref{fig-pauli}.
\end{definition}

\begin{definition}
  We say that an $n$-qubit circuit is {\em $Z$-normal} if it is of the
  form
  \begin{equation}
  \m{\begin{qcircuit}[scale=0.7]
      \grid{3}{0,1,2,3,4,5,6}
      \widelgate{.75}{^{(n)}}{1.5}{0}{6}
      \wirelabel{$\bvdots$}{0.375,0.2}
      \wirelabel{$\bvdots$}{2.625,0.2}
      \wirelabel{$\bvdots$}{0.375,4.2}
      \wirelabel{$\bvdots$}{2.625,4.2}
    \end{qcircuit}}
  =
  \mp{0.475}{\begin{qcircuit}[scale=0.7]
      \grid{9.8}{0,1,2,3,4,5,6}
      \agate{i}{1}{2}
      \bgate{j_1}{2.5}{2}{3}
      \bgate{j_2}{4}{3}{4}
      \colgate{white, white}{$\bcdots$}{5.5,4}
      \colgate{white, white}{$\bcdots$}{5.5,5}
      \widebgate{0.75}{j_{m-1}}{7.15}{5}{6}
      \cgate{k}{8.8}{6}
      \wirelabel{$\bvdots$}{1,0.2}
      \wirelabel{$\bvdots$}{8.8,0.2}
      \wirelabel{$\bvdots$}{1,4.2}
      \wirelabel{$\bvdots$}{8.8,4.2}
    \end{qcircuit}~,}
  \end{equation}
  for $1\leq m\leq n$. We say that an $n$-qubit circuit is {\em
    $X$-normal} if it is of the form
  \begin{equation}
  \m{\begin{qcircuit}[scale=0.7]
      \grid{3}{0,1,2,3,4,5,6}
      \widemgate{.75}{^{(n)}}{1.5}{0}{6}
      \wirelabel{$\bvdots$}{0.375,1.2}
      \wirelabel{$\bvdots$}{2.625,1.2}
    \end{qcircuit}}
  =
  \m{\begin{qcircuit}[scale=0.7] \grid{11}{0,1,2,3,4,5,6}
      \dgate{\ell_1}{1}{5}{6}
      \dgate{\ell_2}{2.5}{4}{5}
      \dgate{\ell_3}{4}{3}{4}
      \dgate{\ell_4}{5.5}{2}{3}
      \whitegate{$\bcdots$}{7}{2}
      \whitegate{$\bcdots$}{7}{1}
      \widedgate{.75}{\ell_{n-1}}{8.5}{0}{1}
      \egate{h}{10}{0}
      \wirelabel{$\bvdots$}{1.75,1.2}
      \wirelabel{$\bvdots$}{10,1.2}
    \end{qcircuit}~.}
  \end{equation}
  Finally, an $n$-qubit circuit is {\em normal} if it is of the form
  \begin{equation}\label{eqn-normal-form}
  \m{\begin{qcircuit}[scale=0.7]
      \grid{3}{0,1,2,3}
      \widengate{.75}{^{(n)}}{1.5}{0}{3}
      \wirelabel{$\bvdots$}{0.375,1.2}
      \wirelabel{$\bvdots$}{2.625,1.2}
    \end{qcircuit}}
  =
  \m{\begin{qcircuit}[scale=0.7]
      \grid{20.5}{0,1,2,3}
      \wirelabel{$\bvdots$}{0.375,1.2}
      \widelgate{.75}{^{(n)}}{1.5}{0}{3}
      \wirelabel{$\bvdots$}{2.625,1.2}
      \widemgate{.75}{^{(n)}}{3.75}{0}{3}
      \wirelabel{$\bvdots$}{4.875,1.2}
      \widelgate{.75}{^{(n-1)}}{6.0}{1}{3}
      \wirelabel{$\bvdots$}{7.125,1.2}
      \widemgate{.75}{^{(n-1)}}{8.25}{1}{3}
      \wirelabel{$\bvdots$}{9.5,1.2}
      \whitegate{$\bcdots$}{10.25}{3}
      \whitegate{$\bcdots$}{10.25}{2}
      \wirelabel{$\bvdots$}{11.0,1.2}
      \widelgate{.75}{^{(2)}}{12.25}{2}{3}
      \wirelabel{$\bvdots$}{13.375,1.2}
      \widemgate{.75}{^{(2)}}{14.5}{2}{3}
      \wirelabel{$\bvdots$}{15.625,1.2}
      \widelgate{.75}{^{(1)}}{16.75}{3}{3}
      \wirelabel{$\bvdots$}{17.875,1.2}
      \widemgate{.75}{^{(1)}}{19.0}{3}{3}
      \wirelabel{$\bvdots$}{20.125,1.2}
    \end{qcircuit}}\cdot \omega^p,
  \end{equation}
  where $p\in\s{0,1,\ldots,7}$.
\end{definition}

\section{Existence and uniqueness of normal forms}
\label{sec-existence-uniqueness}

When $C$ is a Clifford operator, and $P=P_1\otimes\ldots\otimes P_n$
and $Q=Q_1\otimes\ldots\otimes Q_n$ are Pauli operators, we 
schematically write 
\[ \m{\begin{qcircuit}[scale=0.7]
    \grid{3}{0,1,2}
    \leftlabel{$P_n$}{0,0}
    \wirelabel{$\bvdots$}{0.375,0.2}
    \leftlabel{$P_2$}{0,1}
    \leftlabel{$P_1$}{0,2}
    \biggate{$C$}{1.5,0}{1.5,2}
    \rightlabel{$Q_n$}{3,0}
    \wirelabel{$\bvdots$}{2.625,0.2}
    \rightlabel{$Q_2$}{3,1}
    \rightlabel{$Q_1$}{3,2}
  \end{qcircuit}}
\]
to indicate that $C\bullet P = Q$. With that convention,
Figure~\ref{fig-pauli} shows the action of the operators $A_i$, $B_j$,
$C_k$, $D_{\ell}$, and $E_h$ on selected Pauli operators.

\begin{figure}
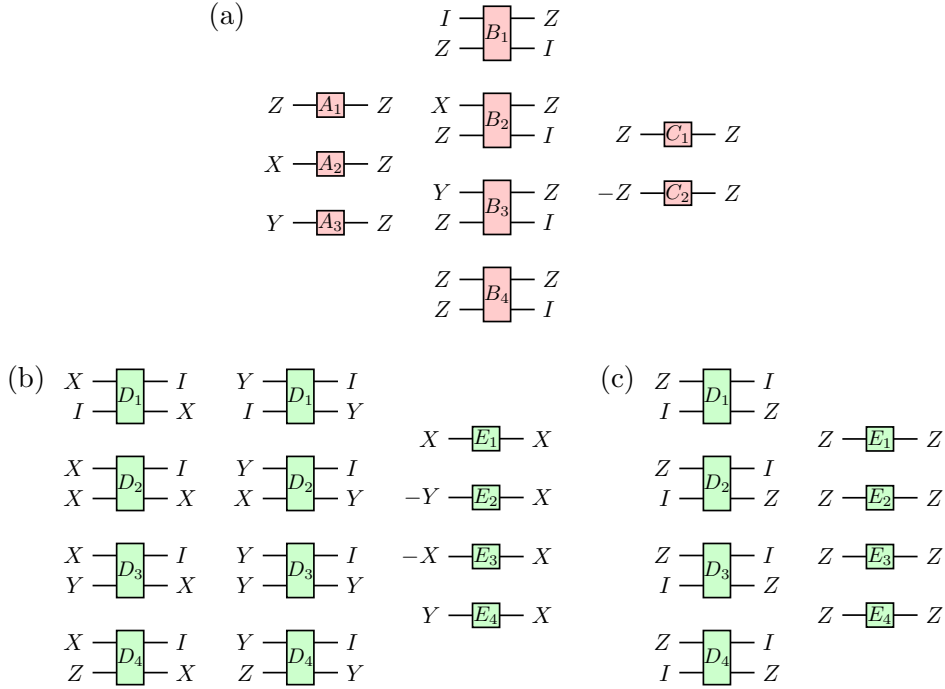

\[ \raisebox{.8in}{(a)}\m{$
\begin{array}{r}
  \paulia
\end{array}
\begin{array}{r}
  \paulib
\end{array}
\begin{array}{r}
  \paulic
\end{array}
$}
\]
\[ \raisebox{.8in}{(b)}\m{$
\begin{array}{r}
  \paulifx
\end{array}
\begin{array}{r}
  \paulify
\end{array}
\begin{array}{r}
  \pauliex
\end{array}
$}\quad
\raisebox{.8in}{(c)}\m{$
\begin{array}{r}
  \paulifz
\end{array}
\begin{array}{r}
  \pauliez
\end{array}
$}
\]
\caption{The action of the operators $A_i$, $B_j$, $C_k$, $D_{\ell}$, and
  $E_h$ on certain Pauli operators.}\label{fig-pauli}
\end{figure}

\begin{lemma}\label{lem-l}
  Let $n\geq 1$, and let $P$ be an $n$-qubit Pauli operator satisfying
  $P^2=I$ and $P\neq \pm I$. Then there exists a unique $Z$-normal
  circuit $L$ such that
  \[ L\bullet P = Z\tensor I\tensor\ldots\tensor I.
  \]
\end{lemma}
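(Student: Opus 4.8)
The plan is to construct $L$ explicitly by a ``sweep'' that pushes a single $Z$ from the bottom of the support of $P$ up to the top qubit, and then to observe that every choice made along the way is forced, which gives uniqueness. First I would record the shape of $P$. Since each tensor factor lies in $\s{I,X,Y,Z}$ and squares to $I$, the unsigned tensor $P_1\tensor\ldots\tensor P_n$ squares to $I$; writing $P=\lambda\,P_1\tensor\ldots\tensor P_n$ with $\lambda\in\s{\pm1,\pm i}$, the hypothesis $P^2=I$ forces $\lambda^2=1$, so $\lambda\in\s{\pm1}$. Because $P\neq\pm I$, not all $P_t$ equal $I$; let $m$ be the largest index with $P_m\neq I$. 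This $m$ will be the parameter of the $Z$-normal circuit, and the qubits $m+1,\ldots,n$ (all carrying $I$) stay outside the active block $\s{1,\ldots,m}$.

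Next I would build the gates from their tabulated Pauli actions (Figure~\ref{fig-pauli}). The gate $A_i$ acts on qubit $m$ and is chosen so that $A_i\bullet P_m=Z$ (take $i=1,2,3$ according as $P_m=Z,X,Y$), which is possible precisely because $P_m\neq I$. A cascade $B_{j_1},\ldots,B_{j_{m-1}}$ then sweeps the $Z$ upward one qubit at a time: from the table, $B_j$ sends a $Z$ on its lower wire together with a Pauli $Q$ on its upper wire to a $Z$ on the upper wire and $I$ on the lower wire, where $j\in\s{1,2,3,4}$ records $Q\in\s{I,X,Y,Z}$. Applying these in order, with $j_t$ chosen to match $P_{m-t}$, moves the $Z$ from qubit $m$ up to qubit $1$ while clearing qubits $m-1,\ldots,1$ to $I$. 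All of these entries carry coefficient $+1$, so after the cascade the operator is exactly $\lambda\,Z\tensor I\tensor\ldots\tensor I$, recalling that conjugation fixes the scalar $\lambda$. Finally $C_k$ on qubit $1$ repairs the sign: since $C_1\bullet Z=Z$ and $C_2\bullet Z=-Z$, I take $k=1$ if $\lambda=+1$ and $k=2$ if $\lambda=-1$, so that $L\bullet P=Z\tensor I\tensor\ldots\tensor I$. This proves existence.

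For uniqueness I would read this correspondence backwards: for any $Z$-normal $L$, the Pauli $P=L\inv\bullet(Z\tensor I\tensor\ldots\tensor I)$ has $P_{m+1}=\ldots=P_n=I$ and $P_m\neq I$, so $m$ must be the largest non-identity index of $P$; given $m$, the assignments $i\leftrightarrow P_m$, $j_t\leftrightarrow P_{m-t}$, and $k\leftrightarrow\lambda$ are exactly the bijections $\s{1,2,3}\leftrightarrow\s{Z,X,Y}$, $\s{1,2,3,4}\leftrightarrow\s{I,X,Y,Z}$, and $\s{1,2}\leftrightarrow\s{+,-}$ used in the construction. Hence the parameters $(m,i,j_1,\ldots,j_{m-1},k)$ are determined by $P$, so $L$ is unique. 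As a consistency check, summing $3\cdot4^{m-1}\cdot2$ over $1\leq m\leq n$ gives $2(4^n-1)$ distinct $Z$-normal circuits, exactly the number of Pauli operators with $P^2=I$ and $P\neq\pm I$, so the correspondence is a genuine bijection.

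I expect the only real work to be the middle step: verifying that the chosen gates chain correctly, i.e.\ that after each $B$ gate the intermediate operator again has the ``$Z$ below, arbitrary Pauli above'' shape required by the table, and that the accumulated coefficient stays $+1$ so that the global sign is carried solely by $\lambda$ and corrected just once by $C_k$. This is careful bookkeeping with the qubit ordering (top qubit $1$, bottom qubit $n$) and the upward direction of the sweep, rather than a conceptual difficulty.
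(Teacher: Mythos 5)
Your proposal is correct and follows essentially the same route as the paper: take $m$ to be the largest index with $P_m\neq I$, use the tabulated Pauli actions to pick the unique $A_i$ normalizing $P_m$ to $Z$, the unique cascade $B_{j_1},\ldots,B_{j_{m-1}}$ sweeping the $Z$ (with its sign) up to the top qubit, and the unique $C_k$ fixing the sign, with uniqueness following because each of these choices is forced. Your closing counting check ($2(4^n-1)$ $Z$-normal circuits versus self-inverse, non-central Pauli operators) is not in the paper's proof of this lemma, but it matches the computation the paper performs later when counting $|\Clifford(n)|$.
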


\begin{proof}
  As a Pauli operator, $P$ is of the form $P=\lambda P_1\tensor
  P_2\tensor\ldots\tensor P_n$, where $P_1,\ldots,P_n\in
  \s{I,X,Y,Z}$. The requirement $P^2=I$ ensures that $\lambda$ is
  $\pm1$ (and not $\pm i$). Moreover, since $P\neq\pm I$, there must
  be some $m$ such that $P_m\neq I$; let $m$ be the largest such
  index. The claim then follows from the properties in
  Figure~\ref{fig-pauli}(a), with reference to the following diagram:
  \[
  \mp{0}{\begin{qcircuit}[scale=0.7]
      \gridx{-1.5}{15.8}{0,1,2,3,4,5,6}
      \leftlabel{$P_1$}{-1.5,6}
      \leftlabel{$P_2$}{-1.5,5}
      \leftlabel{$P_{m-2}$}{-1.5,4}
      \leftlabel{$P_{m-1}$}{-1.5,3}
      \leftlabel{$\pm P_{m}$}{-1.5,2}
      \leftlabel{$I$}{-1.5,1}
      \leftlabel{$I$}{-1.5,0}
      \agate{i}{-0.5}{2}
      \whitegate{$\pm Z$}{1}{2}
      \bgate{j_1}{2.5}{2}{3}
      \whitegate{$\pm Z$}{4}{3}
      \bgate{j_2}{5.5}{3}{4}
      \whitegate{$\pm Z$}{7}{4}
      \whitegate{$\bcdots$}{8.5}{5}
      \whitegate{$\bcdots$}{8.5}{4}
      \whitegate{$\pm Z$}{10}{5}
      \widebgate{0.75}{j_{m-1}}{11.65}{5}{6}
      \whitegate{$\pm Z$}{13.3}{6}
      \cgate{k}{14.8}{6}
      \wirelabel{$\bvdots$}{-0.5,0.2}
      \wirelabel{$\bvdots$}{-0.5,4.2}
      \wirelabel{$\bvdots$}{14.8,0.2}
      \wirelabel{$\bvdots$}{14.8,4.2}
      \rightlabel{$Z$}{15.8,6} 
      \rightlabel{$I$}{15.8,5} 
      \rightlabel{$I$}{15.8,4} 
      \rightlabel{$I$}{15.8,3} 
      \rightlabel{$I$}{15.8,2} 
      \rightlabel{$I$}{15.8,1} 
      \rightlabel{$I.$}{15.8,0} 
    \end{qcircuit}}
  \]
  In particular, note that it follows from the properties in
  Figure~\ref{fig-pauli}(a) that there exists a unique $A_i$ sending
  $\pm P_m$ to $\pm Z$; for each $p=1,\ldots,m-1$, there exists a
  unique $B_{j_p}$ sending $P_{m-p}\tensor \pm Z$ to $\pm Z\tensor I$;
  and there exists a unique $C_k$ sending $\pm Z$ to $Z$.
\end{proof}

\begin{lemma}\label{lem-m}
  Let $n\geq 1$, and let $Q$ be an $n$-qubit Pauli operator satisfying
  $Q^2=I$ such that $Q$ anticommutes with $Z\tensor
  I\tensor\ldots\tensor I$. Then there exists a unique $X$-normal
  circuit $M$ such that
  \[ M\bullet Q = I\tensor \ldots\tensor I\tensor X. 
  \]
\end{lemma}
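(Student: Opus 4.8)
The plan is to prove this as the exact dual of Lemma~\ref{lem-l}, using the gates $D_\ell$ and $E_h$ whose Pauli actions are tabulated in Figure~\ref{fig-pauli}(b). Where Lemma~\ref{lem-l} converts an arbitrary involutive $P\neq\pm I$ into $Z\tensor I\tensor\cdots\tensor I$ by a cascade of $B$-gates running \emph{up} to qubit~$1$ and finishing with a single-qubit $A_i$ and $C_k$, here I would build $M$ as a cascade of $D$-gates running \emph{down} from qubit~$1$ to qubit~$n$, capped by one $E_h$ on the bottom qubit, so that the result lands on $I\tensor\cdots\tensor I\tensor X$.

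First I would unpack the hypotheses. Writing $Q=\lambda\,Q_1\tensor\cdots\tensor Q_n$ with $Q_1,\ldots,Q_n\in\s{I,X,Y,Z}$, the condition $Q^2=I$ forces $\lambda\in\s{\pm1}$, since a factor of $\pm i$ would give $Q^2=-I$. The anticommutation hypothesis is the decisive structural input: $Q$ anticommutes with $Z\tensor I\tensor\cdots\tensor I$ exactly when $Q_1$ anticommutes with $Z$, i.e.\ when $Q_1\in\s{X,Y}$. Thus the first tensor factor is guaranteed to be $X$ or $Y$, which is precisely the regime of top-inputs handled by the $D$-gates in Figure~\ref{fig-pauli}(b).

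Next I would run the cascade. The feature to read off from Figure~\ref{fig-pauli}(b) is that, for a top input in $\s{X,Y}$ and \emph{any} bottom input $P$, there is a unique $D_\ell$ (depending only on $P$) sending $Q_1\tensor P$ to $I\tensor Q_1$; it leaves the transported Pauli unchanged ($X$ stays $X$, $Y$ stays $Y$) and introduces no sign. So I would pick $\ell_1$ with $D_{\ell_1}\bullet(Q_1\tensor Q_2)=I\tensor Q_1$, then $\ell_2$ so that the gate on qubits $2,3$ sends $Q_1\tensor Q_3$ to $I\tensor Q_1$, and so on, absorbing $Q_2,\ldots,Q_n$ one at a time while transporting $Q_1$ down to qubit~$n$. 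As no sign is created, after the $n-1$ $D$-gates the operator is $\lambda\,I\tensor\cdots\tensor I\tensor Q_1$ with $\lambda Q_1\in\s{\pm X,\pm Y}$, and the single-qubit $E_h$ table in Figure~\ref{fig-pauli}(b) supplies the unique $E_h$ carrying $\lambda Q_1$ to $X$. The composite $M=D_{\ell_1}\cdots D_{\ell_{n-1}}E_h$ is then an $X$-normal circuit with $M\bullet Q=I\tensor\cdots\tensor I\tensor X$, establishing existence.

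The hard part is uniqueness. I would argue that every choice above is forced, exactly as in Lemma~\ref{lem-l}: since $D_{\ell_2},\ldots,E_h$ act as the identity on qubit~$1$, the final value on qubit~$1$ equals the top output of $D_{\ell_1}$, and requiring it to be $I$ pins down $\ell_1$ and hence the intermediate state $I\tensor Q_1\tensor Q_3\tensor\cdots$; iterating forces $\ell_2,\ldots,\ell_{n-1}$ and finally $h$. The point needing the most care is the claim that for a top input in $\s{X,Y}$ no $D$-gate other than the tabulated one can leave $I$ on the top qubit — a finite check slightly beyond what the figure literally displays. As a clean cross-check I would note that the syntactic $X$-normal circuits and the admissible Pauli operators are equinumerous, each of size $4^n$ (from $2$ signs, $Q_1\in\s{X,Y}$, and $4^{n-1}$ free factors, versus $4^{n-1}$ choices of $\ell_p$ and $4$ of $h$); the map $M\mapsto M\inv\bullet(I\tensor\cdots\tensor I\tensor X)$ carries circuits into admissible operators, and once this image-validity is checked, the existence result makes the map a surjection between equal finite sets, hence a bijection, which yields uniqueness.
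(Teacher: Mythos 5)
Your proof is correct, and its existence half coincides with the paper's: decompose $Q=\pm Q_1\tensor\cdots\tensor Q_n$, observe that anticommutation with $Z\tensor I\tensor\cdots\tensor I$ forces $Q_1\in\s{X,Y}$, and read the cascade of $D$-gates and the final $E_h$ off Figure~\ref{fig-pauli}(b). The difference lies in how uniqueness is handled. The paper is laconic here: as in Lemma~\ref{lem-l}, it asserts that each gate choice is the unique one compatible with the tables, leaving implicit exactly the finite check you flag, namely that no $D_\ell$ other than the tabulated one can leave $I$ on the top wire when the top input lies in $\s{X,Y}$. That check is genuinely needed for the $D$-gates---injectivity of $D_\ell\bullet$ only excludes the output $I\tensor Q_1$ itself, not other outputs whose first factor is the identity---whereas for the last gate no extra computation is required, since the target there is the single element $X$, and injectivity of $E_h\bullet$ together with the displayed entries already pins down $h$. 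Your counting argument is a genuinely different route, absent from the paper's proof of this lemma: both the syntactic $X$-normal circuits and the admissible operators number $4^n$, so surjectivity of $M\mapsto M\inv\bullet(I\tensor\cdots\tensor I\tensor X)$, which is existence, forces injectivity, which is uniqueness. The image-validity it presupposes is exactly Lemma~\ref{lem-mz}, which is proved independently of the present lemma from Figure~\ref{fig-pauli}(c), so there is no circularity; and the count $4^n$ of $X$-normal circuits is the same one the paper uses later to compute $|\Clifford(n)|$. Either route closes the gap, so your proposal is sound and, on the uniqueness point, more scrupulous than the paper itself.
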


\begin{proof}
  As before, since $Q^2=I$, we know that $Q$ is of the form $Q = \pm
  Q_1\tensor Q_2\tensor\ldots\tensor Q_n$. Moreover, since $Q$
  anticommutes with $Z\tensor I\tensor\ldots\tensor I$, we have that
  $Q_1 \in \s{X,Y}$. The claim then follows from the properties in
  Figure~\ref{fig-pauli}(b), with reference to the following diagram:
  \[
  \mp{0}{\begin{qcircuit}[scale=0.7]
      \gridx{1.5}{18.8}{1,2,3,4,5,6}
      \leftlabel{$\pm Q_1$}{1.5,6}
      \leftlabel{$Q_2$}{1.5,5}
      \leftlabel{$Q_3$}{1.5,4}
      \leftlabel{$Q_4$}{1.5,3}
      \leftlabel{$Q_{n-1}$}{1.5,2}
      \leftlabel{$Q_n$}{1.5,1}
      \dgate{\ell_1}{2.5}{5}{6}
      \widewhitegate{.55}{$\pm Q_1$}{4}{5}
      \dgate{\ell_2}{5.5}{4}{5}
      \widewhitegate{.55}{$\pm Q_1$}{7}{4}
      \dgate{\ell_3}{8.5}{3}{4}
      \widewhitegate{.55}{$\pm Q_1$}{10}{3}
      \whitegate{$\bcdots$}{11.5}{3}
      \whitegate{$\bcdots$}{11.5}{2}
      \widewhitegate{.55}{$\pm Q_1$}{13}{2}
      \widedgate{.75}{\ell_{n-1}}{14.65}{1}{2}
      \widewhitegate{.55}{$\pm Q_1$}{16.3}{1}
      \egate{h}{17.8}{1}
      \wirelabel{$\bvdots$}{2.5,2.2}
      \wirelabel{$\bvdots$}{17.8,2.2}
      \rightlabel{$I$}{18.8,6} 
      \rightlabel{$I$}{18.8,5} 
      \rightlabel{$I$}{18.8,4} 
      \rightlabel{$I$}{18.8,3} 
      \rightlabel{$I$}{18.8,2} 
      \rightlabel{$X.$}{18.8,1} 
    \end{qcircuit}}
  \]
\end{proof}

\begin{lemma}\label{lem-mz}
  Every $X$-normal circuit $M$ satisfies
  \[ M\bullet (Z\tensor I\tensor\ldots\tensor I) = I\tensor\ldots\tensor I\tensor Z.
  \]
\end{lemma}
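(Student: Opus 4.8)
The plan is to trace the Pauli operator $Z\tensor I\tensor\ldots\tensor I$ through the circuit $M$ from left to right, using the fact that the conjugation action of a composite circuit is the composite of the conjugation actions of its gates, applied in the same left-to-right order (this is consistent with the convention that left-to-right circuit reading is reverse matrix multiplication, since $(G_k\cdots G_1)\bullet P = G_k\bullet(\cdots(G_1\bullet P))$). Recall that an $X$-normal circuit has the ``staircase'' shape
\[ M = D_{\ell_1}\,D_{\ell_2}\cdots D_{\ell_{n-1}}\,E_h, \]
where $D_{\ell_i}$ acts on the $i$-th and $(i+1)$-st qubits from the top, and $E_h$ acts on the bottom qubit $n$. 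Since the qubits are numbered from top to bottom, qubit $i$ is the \emph{upper} of the two qubits on which $D_{\ell_i}$ acts.

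The key inputs are the phase-free entries of Figure~\ref{fig-pauli}(c): for every $\ell\in\s{1,2,3,4}$ one has $D_\ell\bullet(Z\tensor I)=I\tensor Z$, and for every $h\in\s{1,2,3,4}$ one has $E_h\bullet Z = Z$. Both facts hold uniformly in the gate index, so the computation will not depend on the specific values $\ell_1,\ldots,\ell_{n-1},h$ — only on the shape of $M$. I would then argue by a simple ``relay'' induction: after the prefix $D_{\ell_1}\cdots D_{\ell_i}$, the input $Z\tensor I\tensor\ldots\tensor I$ has been carried to the Pauli operator with $Z$ on qubit $i+1$ and $I$ on every other qubit. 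The base case is the initial input, with $Z$ on qubit $1$. For the inductive step, when the $Z$ sits on qubit $i$, the next gate $D_{\ell_i}$ acts precisely on the pair $(i,i+1)$ with the $Z$ on its upper qubit $i$; by Figure~\ref{fig-pauli}(c) it sends $Z\tensor I$ to $I\tensor Z$, moving the $Z$ down to qubit $i+1$. Because the action is an automorphism of $\Pauli(n)$ assembled from local actions, the $I$ factors on wires not met by $D_{\ell_i}$ are left fixed. After $D_{\ell_{n-1}}$ the $Z$ has reached qubit $n$, and the final gate $E_h$, acting on qubit $n$, fixes it, yielding $M\bullet(Z\tensor I\tensor\ldots\tensor I)=I\tensor\ldots\tensor I\tensor Z$ with no accumulated phase.

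There is essentially no hard step here; the only point requiring a little care is the bookkeeping that at stage $i$ the travelling $Z$ indeed lies on the upper qubit of the pair on which $D_{\ell_i}$ acts — exactly what the staircase geometry guarantees — together with the observation that each gate acts as the identity, and hence fixes $I$, on the wires it does not meet.
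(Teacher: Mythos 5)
Your proof is correct and is essentially the paper's own argument: the paper also proves the lemma by relaying the $Z$ down the staircase using the Figure~\ref{fig-pauli}(c) facts $D_\ell\bullet(Z\tensor I)=I\tensor Z$ and $E_h\bullet Z=Z$ (presented there as a single annotated diagram rather than an explicit induction). Your added bookkeeping about the left-to-right conjugation order and the untouched $I$ wires is just a careful spelling-out of what the paper leaves implicit.
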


\begin{proof}
  By using the properties from Figure~\ref{fig-pauli}(c), with
  reference to the following diagram:
  \[
  \mp{0}{\begin{qcircuit}[scale=0.7]
      \gridx{1.5}{18.8}{1,2,3,4,5,6}
      \leftlabel{$Z$}{1.5,6}
      \leftlabel{$I$}{1.5,5}
      \leftlabel{$I$}{1.5,4}
      \leftlabel{$I$}{1.5,3}
      \leftlabel{$I$}{1.5,2}
      \leftlabel{$I$}{1.5,1}
      \dgate{\ell_1}{2.5}{5}{6}
      \widewhitegate{.55}{$Z$}{4}{5}
      \dgate{\ell_2}{5.5}{4}{5}
      \widewhitegate{.55}{$Z$}{7}{4}
      \dgate{\ell_3}{8.5}{3}{4}
      \widewhitegate{.55}{$Z$}{10}{3}
      \whitegate{$\bcdots$}{11.5}{3}
      \whitegate{$\bcdots$}{11.5}{2}
      \widewhitegate{.55}{$Z$}{13}{2}
      \widedgate{.75}{\ell_{n-1}}{14.65}{1}{2}
      \widewhitegate{.55}{$Z$}{16.3}{1}
      \egate{h}{17.8}{1}
      \wirelabel{$\bvdots$}{2.5,2.2}
      \wirelabel{$\bvdots$}{17.8,2.2}
      \rightlabel{$I$}{18.8,6} 
      \rightlabel{$I$}{18.8,5} 
      \rightlabel{$I$}{18.8,4} 
      \rightlabel{$I$}{18.8,3} 
      \rightlabel{$I$}{18.8,2} 
      \rightlabel{$Z.$}{18.8,1} 
    \end{qcircuit}}
  \]
\end{proof}

\begin{lemma}\label{lem-pq}
  Let $n\geq 1$, and let $P$ and $Q$ be $n$-qubit Pauli operators such
  that $P^2=Q^2=I$, and such that $P$, $Q$ anticommute. Then there
  exist unique circuits $M$ and $L$, where $M$ is $X$-normal and $L$
  is $Z$-normal, such that
  \[ ML \bullet P = I\tensor\ldots\tensor I\tensor Z\]
  and
  \[ ML \bullet Q = I\tensor\ldots\tensor I\tensor X.\]
\end{lemma}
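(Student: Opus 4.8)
The plan is to build $L$ and $M$ sequentially out of the earlier lemmas, exploiting that conjugation by a Clifford operator is a Pauli-group automorphism fixing scalars, so it preserves squares and (anti)commutation relations. First I would record that since $P$ and $Q$ anticommute, neither can be $\pm I$ (scalars commute with everything); in particular $P\neq\pm I$. Hence Lemma~\ref{lem-l} applies and yields a unique $Z$-normal circuit $L$ with $L\bullet P = Z\tensor I\tensor\ldots\tensor I$. Next I would set $Q' := L\bullet Q$ and check it satisfies the hypotheses of Lemma~\ref{lem-m}: since $L\bullet(\cdot)$ is an automorphism fixing scalars we have $Q'^2 = L\bullet Q^2 = I$, and from $(L\bullet P)(L\bullet Q) = L\bullet(PQ) = L\bullet(-QP) = -(L\bullet Q)(L\bullet P)$ we see that $Q'$ anticommutes with $L\bullet P = Z\tensor I\tensor\ldots\tensor I$. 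Thus Lemma~\ref{lem-m} produces a unique $X$-normal circuit $M$ with $M\bullet Q' = I\tensor\ldots\tensor I\tensor X$.

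For existence, composing gives $ML\bullet Q = M\bullet Q' = I\tensor\ldots\tensor I\tensor X$ directly, while $ML\bullet P = M\bullet(Z\tensor I\tensor\ldots\tensor I) = I\tensor\ldots\tensor I\tensor Z$ by Lemma~\ref{lem-mz}. This last step is the crux of the construction: Lemma~\ref{lem-mz} is exactly what guarantees that the first coordinate of the output emerges as $Z$ no matter which $X$-normal $M$ was chosen to fix up the $Q$ coordinate.

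For uniqueness, suppose $M,L$ satisfy both equations. The key observation, again supplied by Lemma~\ref{lem-mz}, is that every $X$-normal circuit sends $Z\tensor I\tensor\ldots\tensor I$ to $I\tensor\ldots\tensor I\tensor Z$, so $M\inv\bullet(I\tensor\ldots\tensor I\tensor Z) = Z\tensor I\tensor\ldots\tensor I$. Applying $M\inv$ to the first equation therefore forces $L\bullet P = Z\tensor I\tensor\ldots\tensor I$ \emph{independently of $M$}, and the uniqueness clause of Lemma~\ref{lem-l} then pins down $L$. With $L$ fixed, $Q'=L\bullet Q$ is determined, and the uniqueness clause of Lemma~\ref{lem-m} applied to $M\bullet Q' = I\tensor\ldots\tensor I\tensor X$ pins down $M$.

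I do not expect a serious obstacle; the whole argument is a clean assembly of the three preceding lemmas. The one point requiring care is the ordering in the uniqueness argument: one must invoke Lemma~\ref{lem-mz} first, so that $L$ can be recovered before $M$ is known, rather than attempting to determine the two circuits simultaneously. It is Lemma~\ref{lem-mz} that decouples the two unknowns and makes both the existence and the uniqueness go through.
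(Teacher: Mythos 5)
Your proof is correct and follows essentially the same route as the paper: Lemma~\ref{lem-l} for $L$, Lemma~\ref{lem-m} for $M$ applied to $L\bullet Q$, and Lemma~\ref{lem-mz} both to establish $ML\bullet P = I\tensor\ldots\tensor I\tensor Z$ and to decouple $L$ from $M$ in the uniqueness argument. You even supply two small details the paper leaves implicit, namely that anticommutation forces $P\neq\pm I$ (so Lemma~\ref{lem-l} applies) and that conjugation preserves squares and anticommutation.
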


\begin{proof}
  By Lemma~\ref{lem-l}, there exists a $Z$-normal circuit $L$ such
  that $L\bullet P = Z\tensor I\tensor\ldots\tensor I$. Since $P$ and
  $Q$ anticommute, so do $L\bullet P$ and $L\bullet Q$. Therefore, by
  Lemma~\ref{lem-m}, there exists an $X$-normal circuit $M$ such that
  $M\bullet (L\bullet Q) = I\tensor\ldots\tensor I\tensor X$, i.e.,
  $ML\bullet Q = I\tensor\ldots\tensor I\tensor X$. By
  Lemma~\ref{lem-mz}, we also have $ML\bullet P = M\bullet (L\bullet
  P) = M\bullet (Z\tensor I\tensor\ldots\tensor I) = I\tensor\ldots\tensor I\tensor Z$. This proves the existence of $M$ and $L$.
  For uniqueness, assume that $M'$ and $L'$ are another pair of
  operators satisfying the conditions of the lemma. From
  $M'\bullet(L'\bullet P) = I\tensor\ldots\tensor I\tensor Z$, we have
  $L'\bullet P=Z\tensor I\tensor\ldots\tensor I$ by
  Lemma~\ref{lem-mz}. Therefore, $L=L'$ by uniqueness of $L$ in
  Lemma~\ref{lem-l}. But then $M'\bullet(L\bullet Q) =
  M\bullet(L\bullet Q)=X\tensor I\tensor\ldots\tensor I$, so that
  $M=M'$ by uniqueness of $M$ in Lemma~\ref{lem-m}.
\end{proof}

\begin{proposition}\label{prop-automorphism}
  Let $\phi:\Pauli(n)\to\Pauli(n)$ be an automorphism of the Pauli
  group such that $\phi$ fixes scalars. Then there exists a Clifford
  circuit $C$ in normal form {\eqref{eqn-normal-form}} such that for
  all $P$, $C\bullet P = \phi(P)$. Moreover, the normal form $C$ is
  unique up to the scalar $\omega^p$.
\end{proposition}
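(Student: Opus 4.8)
The plan is to prove existence and uniqueness together by induction on $n$, using Lemma~\ref{lem-pq} as the engine that peels off the outermost pair of layers $L^{(n)},M^{(n)}$ and reduces the problem to an automorphism of $\Pauli(n-1)$. Throughout, write $Z' = I\tensor\ldots\tensor I\tensor Z$ and $X' = I\tensor\ldots\tensor I\tensor X$ for the $Z$- and $X$-operators on the last qubit. First I record two elementary facts about a scalar-fixing automorphism $\phi$. Since $\phi$ is a homomorphism, $\phi(P)^2 = \phi(P^2)$, so $P^2=I$ forces $\phi(P)^2=I$; and since two Pauli operators $P,Q$ anticommute exactly when $PQP\inv Q\inv = -I$, a scalar that $\phi$ fixes, $\phi$ preserves anticommutation. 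The same holds for $\phi\inv$, which is again a scalar-fixing automorphism.

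For existence, set $P = \phi\inv(Z')$ and $Q = \phi\inv(X')$. By the facts above $P^2=Q^2=I$ and $P,Q$ anticommute, so Lemma~\ref{lem-pq} yields unique $X$-normal $M^{(n)}$ and $Z$-normal $L^{(n)}$ with $M^{(n)}L^{(n)}\bullet P = Z'$ and $M^{(n)}L^{(n)}\bullet Q = X'$. Let $\Phi$ be the automorphism $R\mapsto M^{(n)}\bullet(L^{(n)}\bullet R)$ and put $\Psi = \phi\circ\Phi\inv$, which again fixes scalars. The choice of $P,Q$ gives $\Psi(Z')=Z'$ and $\Psi(X')=X'$, so $\Psi$ fixes the entire last-qubit Pauli subgroup. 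Now any Pauli operator commuting with both $Z'$ and $X'$ must carry $I$ in the last tensor slot (each of $X,Y,Z$ anticommutes with $Z$ or $X$), so $\Psi$ sends every generator of $\Pauli(n-1)\tensor I$ into $\Pauli(n-1)\tensor I$; applying the same argument to $\Psi\inv$ shows $\Psi$ restricts to a scalar-fixing automorphism $\Psi'$ of $\Pauli(n-1)$. By the induction hypothesis $\Psi'$ has a normal form on the first $n-1$ qubits, which supplies the layers $L^{(n-1)}M^{(n-1)}\cdots L^{(1)}M^{(1)}$ (acting on the qubit ranges shown in \eqref{eqn-normal-form}, identity on the last qubit). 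The composite $C = L^{(n)}M^{(n)}\cdots L^{(1)}M^{(1)}$, with $\omega^p$ for any $p$ since the phase does not affect conjugation, has action $\Psi\circ\Phi = \phi$.

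For uniqueness, suppose $C$ is any normal form with action $\phi$. Its tail $\tilde C = L^{(n-1)}M^{(n-1)}\cdots$ acts as the identity on the last qubit, so its action $\Psi$ fixes $Z'$ and $X'$; writing $\phi = \Psi\circ\Phi$ for $\Phi$ the action of $L^{(n)}M^{(n)}$, this forces $\Phi(P) = Z'$ and $\Phi(Q) = X'$ for the same $P=\phi\inv(Z')$, $Q=\phi\inv(X')$. The uniqueness clause of Lemma~\ref{lem-pq} then determines $L^{(n)}$ and $M^{(n)}$; hence $\Psi = \phi\circ\Phi\inv$ is determined, and the induction hypothesis pins down $\tilde C$ up to $\omega^p$. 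The base case $n=1$ is a single application of Lemma~\ref{lem-pq}: there $\Psi$ is the trivial automorphism of $\Pauli(0)$ and only the global phase $\omega^p$ remains free.

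The genuine calculations are all absorbed into Lemma~\ref{lem-pq}, so the main conceptual point is the reduction step of paragraph two: checking that a scalar-fixing automorphism fixing $Z'$ and $X'$ truly restricts to an automorphism of $\Pauli(n-1)$, and that the layers returned by the induction hypothesis occupy exactly the qubit ranges demanded by \eqref{eqn-normal-form}. I expect the easiest place to slip is the bookkeeping of orders — that the circuit read left to right applies $L^{(n)}$ \emph{innermost} in the conjugation action $C\bullet R$ — so I would fix that convention explicitly before assembling the layers.
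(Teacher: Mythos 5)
Your proposal is correct and follows essentially the same route as the paper: both pull back $I\tensor\ldots\tensor I\tensor Z$ and $I\tensor\ldots\tensor I\tensor X$ through $\phi$, invoke Lemma~\ref{lem-pq} to fix the outermost layers $M^{(n)}L^{(n)}$, show the residual scalar-fixing automorphism restricts to $\Pauli(n-1)$ via the commutant argument, and close both existence and uniqueness by induction. The only cosmetic difference is that the paper starts the induction at $n=0$ (where everything is a scalar) rather than at $n=1$.
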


\begin{proof}
  By induction on $n$. For $n=0$, all Pauli operators are scalars, so
  $\phi$ is the identity; we can set $C=1=\omega^0$; uniqueness up to
  a scalar follows because for $n=0$, all Clifford operators are scalars.

  For the induction step, suppose the claim is true for $n-1$. We
  first prove existence.  Let $P=\phi\inv(I\tensor\ldots\tensor
  I\tensor Z)$ and $Q=\phi\inv(I\tensor\ldots\tensor I\tensor
  X)$. Note that $I\tensor\ldots\tensor I\tensor Z$ and $I\tensor\ldots\tensor I\tensor X$ each square to the identity and anticommute
  with each other; since $\phi$ is an automorphism, the same is true
  for $P$ and $Q$. By Lemma~\ref{lem-pq}, there exists a circuit $ML$,
  where $M$ is $X$-normal and $L$ is $Z$-normal, such that
  \begin{equation}\label{eqn-mlp}
    ML\bullet P = I\tensor\ldots\tensor I\tensor Z =
    \phi(P)
  \end{equation}
  and 
  \begin{equation}\label{eqn-mlq}
    ML\bullet Q = I\tensor\ldots\tensor I\tensor X = \phi(Q).
  \end{equation}
  Now define a new automorphism $\phi'$ by $\phi'(U) =
  \phi((ML)\inv\bullet U)$ for all $U\in\Pauli(n)$. Note that $\phi'$
  fixes scalars; also, by {\eqref{eqn-mlp}} and {\eqref{eqn-mlq}},
  $I\tensor\ldots\tensor I\tensor Z$ and $I\tensor\ldots\tensor
  I\tensor X$ are fixed points of $\phi'$. Let $R$ be an $n-1$-qubit
  Pauli operator, and consider $\phi'(R\tensor I)$. Since $R\tensor I$
  commutes with both $I\tensor\ldots\tensor I\tensor Z$ and
  $I\tensor\ldots\tensor I\tensor X$, the same is true for
  $\phi'(R\tensor I)$; therefore, $\phi'(R\tensor I) = S\tensor I$,
  for some $S\in\Pauli(n-1)$. It follows that there exists an
  automorphism $\phi'':\Pauli(n-1)\to\Pauli(n-1)$ such that
  $\phi'(R\tensor I)=\phi''(R)\tensor I$, for all $R\in\Pauli(n-1)$.
  Together with the fact that $I\tensor\ldots\tensor I\tensor Z$ and
  $I\tensor\ldots\tensor I\tensor X$ are fixed points of $\phi'$, this
  implies that $\phi'=\phi''\tensor I$.  By induction hypothesis,
  there exists a normal $n-1$-qubit Clifford circuit $C'$ such that
  for all $R$, $C'\bullet R=\phi''(R)$. Let $C=(C'\tensor I)ML$. Then
  for all $U$, we have
  \[
  C\bullet U = (C'\tensor I)ML\bullet U =
  (C'\tensor I)\bullet(\phi'^{-1}(\phi(U)))
  = (C'\tensor I)\bullet (\phi''^{-1}\tensor I)(\phi(U)) = \phi(U).
  \]
  This proves the existence of the normal form $C$. 

  For uniqueness, suppose that $D$ is another normal form Clifford
  circuit with $D\bullet U = \phi(U)$ for all Pauli $U$. By definition
  of normal forms, $D$ is of the form $(D'\tensor I)M'L'$, where $M'$
  is $X$-normal, $L'$ is $Z$-normal, and $D'$ is a normal Clifford
  circuit of $n-1$ qubits. Then $D\bullet P = \phi(P) =
  I\tensor\ldots\tensor I\tensor Z$. It follows that $M'L'\bullet P =
  (D'\tensor I)\inv\bullet (I\tensor\ldots\tensor I\tensor Z)=
  I\tensor\ldots\tensor I\tensor Z$, and by a similar argument,
  $M'L'\bullet Q=I\tensor\ldots\tensor I\tensor X$. From the
  uniqueness of Lemma~\ref{lem-pq}, we get $M'=M$ and $L'=L$. Then by
  uniqueness in the induction hypothesis, $D'$ and $C'$ are equal up
  to a scalar, and hence the same is true for $D$ and $C$.
\end{proof}

\begin{proof}[Proof of Proposition~\ref{prop-clifford-exists}]
  Proposition~\ref{prop-clifford-exists} is an immediate consequence
  of Proposition~\ref{prop-automorphism}.
\end{proof}

\begin{corollary}
  The Clifford group on $n$ qubits has exactly
  \[  |\Clifford(n)| = 8\cdot\prod_{i=1}^{n} 2(4^i-1)4^i
  \]
  elements.
\end{corollary}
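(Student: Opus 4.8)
The plan is to establish a bijection between $\Clifford(n)$ and the set of normal-form circuits, and then to count the normal forms by exploiting the recursive structure of \eqref{eqn-normal-form}. First I would show that every Clifford operator is denoted by a \emph{unique} normal form. Given $C\in\Clifford(n)$, the map $P\mapsto C\bullet P$ is an automorphism of $\Pauli(n)$ fixing scalars, so Proposition~\ref{prop-automorphism} produces a normal-form circuit inducing this automorphism, unique up to the scalar $\omega^p$; by Corollary~\ref{cor-uniqueness}, $C$ itself differs from that circuit only by some $\omega^i$, and absorbing $\omega^i$ into $\omega^p$ exhibits $C$ as a normal form. Conversely, if two normal forms $N\cdot\omega^p$ and $N'\cdot\omega^{p'}$, with $N,N'$ the products of $L$- and $M$-layers, denote the same operator, they induce the same automorphism, so the uniqueness clause of Proposition~\ref{prop-automorphism} forces $N=N'$, whence $\omega^p=\omega^{p'}$ and thus $p=p'$ since $\omega$ has order $8$. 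Hence $|\Clifford(n)|$ equals the number of distinct normal forms, which is $8$ times the number of distinct bodies $N$.

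Next I would count the building blocks. By Lemma~\ref{lem-l}, assigning to each $n$-qubit Pauli $P$ with $P^2=I$ and $P\neq\pm I$ the unique $Z$-normal circuit $L$ with $L\bullet P = Z\tensor I\tensor\ldots\tensor I$ gives a bijection between such $P$ and $Z$-normal circuits, the inverse being $L\mapsto L\inv\bullet(Z\tensor I\tensor\ldots\tensor I)$. Writing $P=\lambda P_1\tensor\ldots\tensor P_n$, the condition $P^2=I$ forces $\lambda\in\s{\pm1}$ since each $P_i^2=I$, giving $2\cdot4^n$ Hermitian Paulis, of which exactly $2$ equal $\pm I$; so there are $2(4^n-1)$ such $Z$-normal circuits. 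Similarly, Lemma~\ref{lem-m} gives a bijection between $X$-normal circuits and Paulis $Q$ with $Q^2=I$ anticommuting with $Z\tensor I\tensor\ldots\tensor I$; this anticommutation depends only on $Q_1$ and holds iff $Q_1\in\s{X,Y}$, yielding $2\cdot 2\cdot 4^{n-1}=4^n$ such circuits.

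Finally I would combine these through the recursion inherent in the normal form. By the inductive construction $C=(C'\tensor I)ML$ from the proof of Proposition~\ref{prop-automorphism}, a body on $n$ qubits consists of a choice of $Z$-normal $L$, a choice of $X$-normal $M$, and a body on the lower $n-1$ qubits, with distinct choices giving distinct operators by the uniqueness already established. Letting $b_n$ denote the number of bodies, this yields $b_n = 2(4^n-1)\cdot 4^n\cdot b_{n-1}$ with $b_0=1$, so $b_n=\prod_{i=1}^n 2(4^i-1)4^i$ and $|\Clifford(n)|=8\,b_n$, as claimed; the base case $n=0$ is consistent, since $\Clifford(0)=\s{\omega^p}$ has $8=8\cdot b_0$ elements.

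The substantive steps are the two bijective counts of the $Z$-normal and $X$-normal layers, which rest on Lemmas~\ref{lem-l} and~\ref{lem-m}. The only point requiring care is the bookkeeping that keeps the count exact, namely that the factor of $8$ from $\omega^p$ never collapses and that distinct bodies are distinct operators; but both follow directly from the uniqueness assertions of Proposition~\ref{prop-automorphism} and Corollary~\ref{cor-uniqueness}, after which the Pauli-counting is elementary.
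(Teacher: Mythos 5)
Your proof is correct, and its overall skeleton is the same as the paper's: both identify $\Clifford(n)$ with the set of syntactic normal forms via Proposition~\ref{prop-automorphism} (together with Corollary~\ref{cor-uniqueness}), and both then obtain the formula as $8$ times the product over $i$ of the layer counts $2(4^i-1)\cdot 4^i$. Where you genuinely differ is in how those layer counts are computed. The paper counts $Z$-normal and $X$-normal circuits purely syntactically, straight from their definitions: a $Z$-normal circuit is a choice of $m$, of $i\in\s{1,2,3}$, of $j_1,\ldots,j_{m-1}\in\s{1,2,3,4}$, and of $k\in\s{1,2}$, giving $\sum_{m=1}^{n}3\cdot 4^{m-1}\cdot 2=2(4^n-1)$, and likewise $4^{n-1}\cdot 4=4^n$ for $X$-normal circuits. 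You instead count them semantically, through the bijections furnished by the uniqueness clauses of Lemmas~\ref{lem-l} and~\ref{lem-m}: $Z$-normal circuits correspond to self-inverse Paulis $P\neq\pm I$, and $X$-normal circuits to self-inverse Paulis anticommuting with $Z\tensor I\tensor\ldots\tensor I$. Your route is more conceptual---it explains why the factors are what they are and shows, as a by-product, that distinct syntactic layers denote distinct operators---whereas the paper's count is more elementary, needing no semantic input for the factors; both are equally rigorous once spelled out, and your first paragraph in fact makes explicit the operator/normal-form bijection that the paper compresses into a single sentence. One small point to tighten: the surjectivity of your correspondence in the $X$-normal case is not quite ``similar'' to the $Z$-normal case, because to know that $Q:=M\inv\bullet(I\tensor\ldots\tensor I\tensor X)$ anticommutes with $Z\tensor I\tensor\ldots\tensor I$ for an arbitrary $X$-normal $M$, you need Lemma~\ref{lem-mz}, which gives $M\bullet(Z\tensor I\tensor\ldots\tensor I)=I\tensor\ldots\tensor I\tensor Z$; since that lemma is available in the paper, this is a one-line repair rather than a flaw in the approach.
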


\begin{proof}
  From the definition of $Z$-normal circuits, there are exactly
  \[ \sum_{m=1}^{n} 3\cdot 4^{m-1}\cdot 2 = 2 (4^n - 1)
  \]
  circuits of the form $L^{(n)}$. Moreover, there are $4^n$ circuits
  of the form $M^{(n)}$, hence $2(4^n-1)4^n$ circuits of the form
  $M^{(n)}L^{(n)}$. Because there are exactly 8 scalars, it follows
  that there are
  \[ 8\cdot\prod_{i=1}^{n} 2(4^i-1)4^i
  \]
  circuits of the form $N^{(n)}$ shown in {\eqref{eqn-normal-form}}.
  By Proposition~\ref{prop-automorphism}, these are in one-to-one
  correspondence with the elements of the $n$-qubit Clifford group.
\end{proof}

\section{Normalization via rewrite rules}

In this section, we will describe an explicit procedure for converting
any given Clifford circuit to normal form by using only a finite
number of equations. This yields a presentation of the Clifford group
by generators and relations.

\begin{definition}
  Consider an $n$-qubit Clifford circuit in normal form:
  \begin{equation}\label{eqn-dirty-nf}
    \mp{0.475}{\scalebox{0.9}{\begin{qcircuit}[scale=0.78]
        \gridx{-0.5}{24.1}{0,1,2,3,4,5,6}
        \alabel{0,6}
        \alabel{0,5}
        \alabel{0,4}
        \alabel{0,3}
        \alabel{0,2}
        \alabel{0,1}
        \alabel{0,0}
        \agate{i}{1}{2}
        \blabel{1.75,2}
        \bgate{j_1}{2.5}{2}{3} 
        \blabel{3.25,3}
        \bgate{j_2}{4}{3}{4}
        \blabel{4.75,4}
        \colgate{white, white}{$\bcdots$}{5.5,4}
        \colgate{white, white}{$\bcdots$}{5.5,5}
        \blabel{6.25,5}
        \widebgate{0.75}{j_{m-1}}{7.3}{5}{6}
        \begin{scope}[xshift=0.6cm]
          \blabel{7.75,6}
          \cgate{k}{8.5}{6}
          \wirelabel{$\bvdots$}{1,0.3}
          \wirelabel{$\bvdots$}{8.5,0.3}
          \wirelabel{$\bvdots$}{1,4.3}
          \wirelabel{$\bvdots$}{8.5,4.3}
          \clabel{9.5,6}
          \alabel{9.5,5}
          \alabel{9.5,4}
          \alabel{9.5,3}
          \alabel{9.5,2}
          \alabel{9.5,1}
          \alabel{9.5,0}
        \end{scope}
        \begin{scope}[xshift=10.1cm]
          \dgate{\ell_1}{1}{5}{6}
          \dlabel{1.75,5}
          \dgate{\ell_2}{2.5}{4}{5}
          \dlabel{3.25,4}
          \dgate{\ell_3}{4}{3}{4}
          \dlabel{4.75,3}
          \dgate{\ell_4}{5.5}{2}{3}
          \dlabel{6.25,2}
          \whitegate{$\bcdots$}{7}{2}
          \whitegate{$\bcdots$}{7}{1}
          \dlabel{7.75,1}
          \widedgate{.75}{\ell_{n-1}}{8.8}{0}{1}
          \dlabel{9.85,0}
          \egate{h}{10.6}{0}
          \wirelabel{$\bvdots$}{1.75,1.3}
          \wirelabel{$\bvdots$}{10.6,1.3}
        \end{scope}
        \begin{scope}[xshift=21.85cm]
          \alabel{-0.25,6}
          \alabel{-0.25,5}
          \alabel{-0.25,4}
          \alabel{-0.25,3}
          \alabel{-0.25,2}
          \alabel{-0.25,1}
          \widengate{.75}{^{(n-1)}}{1}{1}{6}
        \end{scope}
      \end{qcircuit}},}
  \end{equation}
  We say that a circuit is in {\em dirty normal form} if it is of the
  form {\eqref{eqn-dirty-nf}}, except that the circuit may contain
  some additional gates, subject to the following rules:
  \begin{itemize}
  \item $H$-gates may be added to any wire labelled $\circled{1}$;
  \item $S$-gates may be added to any wire labelled $\circled{1}$,
    $\circled{2}$, $\circled{3}$, or $\circled{4}$;
  \item $X$-gates may be added to any wire labelled $\circled{2}$;
  \item Controlled $Z$-gates may be added to any pair of adjacent
    wires, provided that the top wire is labelled $\circled{1}$,
    $\circled{2}$, or $\circled{3}$, and the bottom wire is labelled
    $\circled{1}$;
  \end{itemize}
  We recursively assume that $N^{(n-1)}$ is in dirty normal form as
  well.
\end{definition}

\begin{lemma}\label{lem-dirty-normal-form}
  Any dirty normal form can be converted to an equivalent normal form
  by using the equations in
  Figures~\ref{fig-rewrite-1}--\ref{fig-rewrite-5}, in the
  left-to-right direction, a finite number of times.
\end{lemma}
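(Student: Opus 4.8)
The plan is to transport every extra (``dirty'') gate rightward through the fixed skeleton of building blocks $A_i,B_j,C_k,D_\ell,E_h$ until it is either absorbed into one of those blocks (changing only its index) or pushed down onto qubits $2,\dots,n$, where it enters the recursively nested subcircuit $N^{(n-1)}$ and can be dealt with by induction on $n$. Each equation in Figures~\ref{fig-rewrite-1}--\ref{fig-rewrite-5}, read left to right, has exactly the shape ``(dirty gate)(block) $=$ (re-indexed block)(finitely many dirty gates)$\cdot\omega^{k}$'', so a single application carries a dirty gate across one block to its right. The scalars $\omega^{k}$ thus produced commute with everything by the spatial law, so I would slide them to the far right and accumulate them into a single factor $\omega^{p}$; they play no role in termination. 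The base case $n=0$ is immediate since every Clifford operator on $0$ qubits is a scalar.

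Before the rewriting can be run, I would set up the bookkeeping of the circled wire-labels $\circled{1}$--$\circled{4}$, whose purpose is to record which Pauli configuration each wire currently carries and hence which dirty gates it admits. The key invariant is that a dirty gate always sits on a wire whose label permits it (an $H$ or a controlled-$Z$ target only on $\circled{1}$, an $X$ only on $\circled{2}$, an $S$ on any of $\circled{1}$--$\circled{4}$, and so on), and that this property is \emph{preserved} by each rule: the gates appearing on the right-hand side again land on wires whose (possibly updated) labels admit them. Checking this invariant is a rule-by-rule inspection of Figures~\ref{fig-rewrite-1}--\ref{fig-rewrite-5}, matching the label pattern of the right-hand side against the one induced by the re-indexed block. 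This is the routine but essential part that guarantees some rule is always applicable whenever a dirty gate meets a block, so the rewriting can never get stuck before a clean normal form is reached.

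The main obstacle is \emph{termination}, because several rules strictly increase the gate count: for example $S\,A_2 = A_3\,X\,S^{3}\cdot\omega$ replaces one $S$ on the left by four gates on the right, and the controlled-$Z$ relations can turn a single binary gate into a block together with a fresh controlled-$Z$ and an $H$. A naive count therefore fails to decrease. To handle this I would use a \emph{positional multiset ordering}: to each dirty gate assign the number of building-block columns lying strictly to its right within the entire nested circuit, and take the finite multiset of these numbers. Every rule deletes one gate at column-distance $d$ and inserts finitely many gates of column-distance strictly less than $d$ (they appear immediately to the right of the block just crossed), so the multiset strictly decreases in the well-founded multiset order and the process halts after finitely many steps.

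It then remains to assemble the pieces. When no dirty gate can be moved further, every gate that started on the top qubit has been absorbed into the $A/B/C/D$ chain acting on that qubit or, upon crossing the last block $D_{\ell_1}$ that couples it to qubit $2$, transferred downward; since the top qubit does not reappear in $N^{(n-1)}$, no residual dirty gate remains on it. By the label invariant, the dirty gates that have come to rest on qubits $2,\dots,n$ satisfy precisely the constraints in the definition of a dirty normal form for $N^{(n-1)}$, so the induction hypothesis cleans the remaining $n-1$ qubits. Collecting the accumulated phase into $\omega^{p}$ yields the desired normal form~\eqref{eqn-normal-form}, completing the argument.
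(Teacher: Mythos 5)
Your first two paragraphs track the paper's own proof closely: the paper likewise argues that every dirty gate lies before some clean gate, that the left-hand sides of the rules cover every case of a dirty gate immediately preceding a clean gate, and that each rule application preserves dirty-normal-form-ness (your ``label invariant''). The genuine gap is in your termination argument. Your multiset measure rests on the claim that every rule ``deletes one gate at column-distance $d$ and inserts finitely many gates of column-distance strictly less than $d$,'' but this ignores what a rule does to the distances of dirty gates \emph{not} involved in the rewrite. Two rules in Figure~\ref{fig-rewrite-1} enlarge the skeleton: when a dirty controlled-$Z$ on qubits $(m,m{+}1)$ meets $A_2$ (resp.\ $A_3$) on qubit $m$, the right-hand side consists of $A_1$ on qubit $m{+}1$, then $B_2$ (resp.\ $B_3$) on $(m,m{+}1)$, then dirty gates --- that is, one clean block becomes two. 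Every dirty gate lying to the left of the application site thereby gains a block on its right, so its distance \emph{increases}; since such a gate already had distance at least $d$, its new distance exceeds $d$ and is dominated by no removed element, so the multiset does not decrease in the Dershowitz--Manna order. Concretely: take a dirty $S$ at the far left of the circuit and the dirty controlled-$Z$ sitting just in front of $A_2$, the first block, so that both have distance $r$ (the total number of blocks). One application of the rule turns the multiset $\{r,r\}$ into $\{r{+}1,\,r{-}1,\,r{-}1\}$, which is not smaller. This situation is unavoidable, because a dirty controlled-$Z$ may legally be placed on qubits $(m,m{+}1)$ directly before the $A$-gate, with other dirty gates anywhere to its left.

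The paper's measure is built precisely to survive this skeleton growth: it attaches counters to the \emph{clean} gates rather than to the dirty ones, setting $\vec v=(v_1,\dots,v_r)$ where $v_i$ is the number of dirty gates before the $i$-th clean gate in the left-to-right order of \eqref{eqn-dirty-nf}, and compares \emph{lexicographically}. When the skeleton grows, the component at the position of the rewritten $A$-gate strictly decreases (the controlled-$Z$ disappears and nothing new is inserted before the new $A_1$), and the lexicographic order is indifferent to the later components, which may grow; well-foundedness holds because the length $r$ is bounded by $n^2$. To repair your argument you would need to replace the multiset of gate-to-right counts by such a position-indexed lexicographic measure (or otherwise account for skeleton-enlarging steps); as written, termination is not established. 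A minor further slip: the subcircuit $N^{(n-1)}$ acts on qubits $1,\dots,n-1$, so it is the \emph{bottom} qubit $n$, not the top one, that is finished after the outer $L$ and $M$ layers; your description of which wire ``does not reappear'' in the recursion is inverted, though this does not affect the substance of the argument.
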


\begin{proof}
  By inspection. Let us call the gates of type $A$--$E$ {\em ``clean''},
  and the gates $H$, $S$, $X$, and $Z_c$ {\em ``dirty''}. Given two
  gate occurrences $F$ and $G$ in a circuit, we say that $F$ is {\em
    immediately before} $G$ if one of the outputs of $F$ is connected
  to one of the inputs of $G$. We say that $F$ is {\em before} $G$ if
  there exists a sequence of gates, starting with $F$ and ending with
  $G$, such that each is immediately before the next one.

  By the definition of dirty normal forms, every dirty gate occurs
  before some clean gate. Therefore, as long as there is at least one
  dirty gate in the circuit, some dirty gate (for example, the
  rightmost one) must occur {\em immediately} before a clean gate.
  The left-hand sides of the equations in
  Figures~\ref{fig-rewrite-1}--\ref{fig-rewrite-5} cover all possible
  cases of a dirty gate occurring immediately before a clean gate. So
  as long as there are dirty gates left, one of the equations can
  always be applied. Moreover, a straightforward but tedious
  inspection of the equations in
  Figures~\ref{fig-rewrite-1}--\ref{fig-rewrite-5} shows that the
  left-to-right application of each equation to a dirty normal form
  yields another dirty normal form.

  It remains to be shown that this procedure terminates. To this end,
  we associate to each dirty normal form a sequence $\vec v$ of
  natural numbers as follows.  Suppose the dirty normal form has $r$
  clean gates, which have been numbered $1,\ldots,r$ from left to
  right in the order in which they appear in
  {\eqref{eqn-dirty-nf}}. Then let $\vec v = (v_1,\ldots,v_r)$, where
  $v_i$ is the number of dirty gates before the $i$th clean gate. It
  is easy to see that, with the exception of the equation
  $\omega^8=1$, each left-to-right application of an equation from
  Figures~\ref{fig-rewrite-1}--\ref{fig-rewrite-5} decreases the
  sequence $\vec v$ in the lexicographic ordering.  Although the
  length $r$ of $\vec v$ is not constant, it is bounded by $n^2$, and
  since the set of all such sequences is well-ordered, it follows that
  the procedure terminates in a finite number of steps.
\end{proof}

\begin{proposition}\label{prop-presentation}
  Consider a Clifford circuit expressed in terms of the generators
  $H$, $S$, and controlled-$Z$ gates on adjacent qubits. Any such
  circuit can be converted to its equivalent normal form by finitely
  many uses of the equations in
  Figures~\ref{fig-rewrite-1}--\ref{fig-rewrite-5}, together with the
  equations
  \begin{eqnarray}
    \m{\begin{qcircuit}[scale=0.5]
        \grid{2}{0}
      \end{qcircuit}}
    &=&
    \m{\begin{qcircuit}[scale=0.5]
        \grid{5}{0}
        \agate{1}{1}{0}
        \cgate{1}{2.5}{0}
        \egate{1}{4}{0}
      \end{qcircuit},}
    \label{eqn-ACE}
    \\\nonumber\\[-1ex]
    \mp{.66}{\begin{qcircuit}[scale=0.5]
        \grid{2}{0,1}
        \cgate{1}{1}{0}
      \end{qcircuit}}
    &=&
    \m{\begin{qcircuit}[scale=0.5]
        \grid{5}{0,1}
        \bgate{1}{1}{0}{1}
        \cgate{1}{2.5}{1}
        \dgate{1}{4}{0}{1}
      \end{qcircuit}.}
    \label{eqn-BCD}
  \end{eqnarray}
\end{proposition}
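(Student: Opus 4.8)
The plan is to reduce the proposition to Lemma~\ref{lem-dirty-normal-form}. That lemma already converts any \emph{dirty} normal form to a clean normal form using Figures~\ref{fig-rewrite-1}--\ref{fig-rewrite-5}, so it suffices to show that an arbitrary circuit $C$ built from $H$, $S$, and adjacent controlled-$Z$ gates can be turned into a dirty normal form using only \eqref{eqn-ACE} and \eqref{eqn-BCD}. Since a dirty normal form is a clean skeleton \eqref{eqn-dirty-nf} decorated with dirty gates, I would append to $C$ a clean normal-form skeleton $N_{\mathrm{id}}$ that denotes the identity operator. Then $C\cdot N_{\mathrm{id}}$ still denotes $C$, but as a circuit it consists of all the (dirty) gates of $C$ sitting immediately to the left of a full clean skeleton.

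First I would establish that the equation $I = N_{\mathrm{id}}$, where $N_{\mathrm{id}}$ is the clean normal form of the identity, is derivable from \eqref{eqn-ACE} and \eqref{eqn-BCD} alone, by induction on the number of qubits $n$. The base case $n=1$ is exactly \eqref{eqn-ACE}, which gives $I = A_1\,C_1\,E_1 = N^{(1)}_{\mathrm{id}}$. For the inductive step I would build $N^{(n-1)}_{\mathrm{id}}$ on the top $n-1$ qubits (leaving the bottom qubit idle), then, in the idle region to its left, seed $A_1\,C_1\,E_1$ on the bottom qubit via \eqref{eqn-ACE} and apply \eqref{eqn-BCD} once per wire to climb the inner $C_1$ from the bottom qubit up to the top. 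Because \eqref{eqn-BCD} reads $C_1 = B_1\,C_1\,D_1$ on a pair of adjacent wires (with $C_1$ the one-qubit identity), each application inserts a matched $B_1\ldots D_1$ pair without changing the denoted operator, and the gates accumulate in exactly the order $A_1\,B_1\cdots B_1\,C_1\,D_1\cdots D_1\,E_1$ prescribed by the $Z$-normal block $L^{(n)}$ and the $X$-normal block $M^{(n)}$ in \eqref{eqn-normal-form}. Thus the construction yields precisely $L^{(n)}_{\mathrm{id}}\,M^{(n)}_{\mathrm{id}}\,N^{(n-1)}_{\mathrm{id}} = N^{(n)}_{\mathrm{id}}$.

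Next I would verify that $C\cdot N_{\mathrm{id}}$ is a genuine dirty normal form, which reduces to a reading of the labels in \eqref{eqn-dirty-nf}. Every input wire of the skeleton carries the label \circled{1}, and the placement rules allow an $H$- or $S$-gate on any wire labelled \circled{1} and a controlled-$Z$ gate on any adjacent pair of wires both of whose ends are labelled \circled{1}. Since $C$ uses only these three gate types and all of its gates lie at the (uniformly \circled{1}) input, to the left of the skeleton, they all respect the placement rules; hence $C\cdot N_{\mathrm{id}}$ is a dirty normal form. Applying Lemma~\ref{lem-dirty-normal-form} then rewrites it to an equivalent clean normal form, which finishes the argument.

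The step I expect to be the main obstacle is the explicit inductive construction of $N_{\mathrm{id}}$: I must check that seeding with \eqref{eqn-ACE} and growing with \eqref{eqn-BCD} produces \emph{exactly} the $Z$-normal and $X$-normal blocks of \eqref{eqn-normal-form}, with the correct gate indices and full-length $B$- and $D$-chains, and not merely some circuit equal to the identity. By contrast, the reduction to Lemma~\ref{lem-dirty-normal-form} is immediate once the dirty normal form is in hand, and the verification that the decorated circuit obeys the placement rules is a direct inspection of \eqref{eqn-dirty-nf}.
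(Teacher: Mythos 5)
Your proposal is correct and takes essentially the same route as the paper: the paper's proof likewise derives the normal form of the identity operator from \eqref{eqn-ACE} and \eqref{eqn-BCD}, appends it to the given circuit to obtain a dirty normal form, and then invokes Lemma~\ref{lem-dirty-normal-form}. Your explicit inductive construction of $N_{\mathrm{id}}$ (seeding with \eqref{eqn-ACE} and climbing with \eqref{eqn-BCD}) and the check of the wire labels simply spell out details that the paper asserts without elaboration.
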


\begin{proof}
  First, note that the normal form of the identity operator on $n$
  qubit is
  \begin{equation}\label{eqn-identity-nf}
  \m{\begin{qcircuit}[scale=0.7]
      \grid{3}{2,3,4,5,6}
      \wideidgate{.75}{^{(n)}}{1.5}{2}{6}
      \wirelabel{$\bvdots$}{0.375,4.2}
      \wirelabel{$\bvdots$}{2.625,4.2}
    \end{qcircuit}}
  ~=~
    \mp{0.475}{\begin{qcircuit}[scale=0.7]
        \gridx{-0.5}{19.5}{2,3,4,5,6}
        \agate{1}{1}{2}
        \bgate{1}{2.5}{2}{3} 
        \bgate{1}{4}{3}{4}
        \colgate{white, white}{$\bcdots$}{5.5,4}
        \colgate{white, white}{$\bcdots$}{5.5,5}
        \bgate{1}{7}{5}{6}
        \cgate{1}{8.5}{6}
        \wirelabel{$\bvdots$}{1,4.2}
        \wirelabel{$\bvdots$}{8.5,4.2}
        \begin{scope}[xshift=9cm]
          \dgate{1}{1}{5}{6}
          \whitegate{$\bcdots$}{2.5}{5}
          \whitegate{$\bcdots$}{2.5}{4}
          \dgate{1}{4}{3}{4}
          \dgate{1}{5.5}{2}{3}
          \egate{1}{7}{2}
          \wirelabel{$\bvdots$}{7,4.2}
        \end{scope}
        \begin{scope}[xshift=17cm]
          \wideidgate{.75}{^{(n-1)}}{1}{3}{6}
        \end{scope}
      \end{qcircuit}.}
  \end{equation}
  The identity circuit can be converted to this normal form by a
  finite number of applications of the equations {\eqref{eqn-ACE}} and
  {\eqref{eqn-BCD}}. By appending the normal form of the identity
  operator {\eqref{eqn-identity-nf}} to the given Clifford circuit, we
  obtain a dirty normal form, which can then be converted to a normal
  form by Lemma~\ref{lem-dirty-normal-form}.
\end{proof}

\section{An equational presentation of the Clifford groupoid}

As an immediate consequence of Proposition~\ref{prop-presentation}, we
know that the equations in
Figures~\ref{fig-rewrite-1}--\ref{fig-rewrite-5}, together with the
equations {\eqref{eqn-ACE}} and {\eqref{eqn-BCD}}, and the defining
equations in Figure~\ref{fig-def}, form a presentation of the Clifford
groupoid by generators and relations. However, this formulation uses a
large number of generators (namely all the gates of type $A$--$E$, as
well as $\omega$, $S$, $H$, $X$, and $Z_c$), and also a very large
number of equations. Naturally, it would be desirable to find a much
smaller set of generators and relations. This is done in the following
proposition.

\begin{proposition}
  The Clifford groupoid is presented, as a strict spatial monoidal
  groupoid, by the generators shown in {\eqref{eqn-generators}} and
  the relations shown in Figure~\ref{fig-relations}.
\end{proposition}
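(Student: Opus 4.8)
The plan is to derive this minimal presentation from the large one of Proposition~\ref{prop-presentation} by a sequence of Tietze transformations, interpreted for strict spatial monoidal groupoids in the sense of Burroni~\cite{Burroni-1993}. Recall that the large presentation has generating gates $\omega$, $H$, $S$, $Z_c$, and $X$, together with all the gates $A_i$, $B_j$, $C_k$, $D_\ell$, $E_h$, and that its relations are the defining equations of Figure~\ref{fig-def}, the relations \eqref{eqn-ACE} and \eqref{eqn-BCD}, and the equations of Figures~\ref{fig-rewrite-1}--\ref{fig-rewrite-5}; by Proposition~\ref{prop-presentation} it presents the Clifford groupoid. The key observation is that every auxiliary generator is \emph{definable} over the four generators of \eqref{eqn-generators}: each $A_i,\ldots,E_h$ equals an explicit word in $H$, $S$, $Z_c$ by Figure~\ref{fig-def}, and $X=HSSH$.

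First I would use these definitions to eliminate generators. Adjoining the sound relation $X=HSSH$ (a consequence of the large relations, since that presentation is complete) and then applying the generator-elimination move, I replace every occurrence of $X$ by $HSSH$ and every occurrence of $A_i,\ldots,E_h$ by its defining word, deleting the defining equations of Figure~\ref{fig-def} in the process. This yields a presentation $\langle \omega, H, S, Z_c \mid R'\rangle$ on exactly the four basic generators, where $R'$ is the image of \eqref{eqn-ACE}, \eqref{eqn-BCD}, and Figures~\ref{fig-rewrite-1}--\ref{fig-rewrite-5} under this substitution. Since $R'$ and the relations of Figure~\ref{fig-relations} are now sets of equations over the same generators, it suffices to show that the two sets generate the same congruence.

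One inclusion is automatic. Any equation over the four basic generators that holds in the Clifford group is, by completeness of the large presentation, joinable by a chain of large-relation applications; applying the substitution that sends each auxiliary gate to its defining word turns such a chain into one using $R'$ (the Figure~\ref{fig-def} steps becoming trivial). In particular, each relation of Figure~\ref{fig-relations}---being a true identity of Clifford operators, as checked by direct matrix computation---is a consequence of $R'$, and may be adjoined. The reverse inclusion is the substance of the argument: I must show that each equation of $R'$, namely the substituted forms of \eqref{eqn-ACE}, \eqref{eqn-BCD}, and the many relations of Figures~\ref{fig-rewrite-1}--\ref{fig-rewrite-5}, is derivable from the relations of Figure~\ref{fig-relations} alone. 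Once this is done, the $R'$-relations are redundant and may be removed, establishing Tietze-equivalence of the two presentations and hence that Figure~\ref{fig-relations} presents the Clifford groupoid.

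The main obstacle is exactly this reverse inclusion. There is no conceptual difficulty---each target equation is a true operator identity, so none can fail---but since I may not invoke the completeness I am trying to prove, every such derivation must be exhibited explicitly from the comparatively small set of Figure~\ref{fig-relations}, which is a substantial and delicate bookkeeping task. Throughout, the bifunctorial and spatial laws are used freely and silently, since the presentation is as a strict spatial monoidal groupoid; this is precisely what keeps the generators and relations finite even though the groupoid acts on arbitrarily many qubits.
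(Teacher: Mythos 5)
Your proposal takes essentially the same route as the paper's own proof: eliminate the auxiliary generators $A$--$E$ via Figure~\ref{fig-def} and $X$ via $X=HSSH$, leaving only $\omega$, $H$, $S$, $Z_c$, and then reduce the claim to showing that each substituted relation from {\eqref{eqn-ACE}}, {\eqref{eqn-BCD}}, and Figures~\ref{fig-rewrite-1}--\ref{fig-rewrite-5} is a consequence of Figure~\ref{fig-relations}. The paper handles that last step exactly as you do---declaring it a tedious but finite verification and deferring the explicit derivations to a machine-generated supplement---so your acknowledged ``bookkeeping task'' is the same outstanding work, not an additional gap relative to the paper.
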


\begin{proof}
  First, we can use the defining equations of Figure~\ref{fig-def} to
  eliminate the generators of type $A$--$E$ from {\eqref{eqn-ACE}},
  {\eqref{eqn-BCD}}, and from the equations in
  Figures~\ref{fig-rewrite-1}--\ref{fig-rewrite-5}; we can also use
  the equation $X=HSSH$ to eliminate the gate $X$. This leaves only
  the generators $\omega$, $S$, $H$, and $Z_c$. It is then a tedious,
  but finite exercise to verify that {\eqref{eqn-ACE}},
  {\eqref{eqn-BCD}}, and each of the equations in
  Figures~\ref{fig-rewrite-1}--\ref{fig-rewrite-5} is a consequence of
  the equations from Figure~\ref{fig-relations}. A full, mostly
  machine-generated proof is available as a supplement to this paper
  {\cite{clifford-supplement}}.
\end{proof}

It is not currently known to the author whether the equations in
Figure~\ref{fig-relations} are independent.

\begin{remark}
  Proving an equation about $2$-qubit circuits never requires any of
  the $3$-qubit axioms to be used; therefore, the equations of
  Figure~\ref{fig-relations}(a)--(c) are sufficient to present the
  $2$-qubit Clifford operators in terms of generators and relations.
  Similarly, the equations of Figure~\ref{fig-relations}(a)--(b) are
  sufficient to present the $1$-qubit Clifford operators, and,
  trivially, the relation of Figure~\ref{fig-relations}(a) is
  sufficient to present the $0$-qubit Clifford operators. It is also
  interesting to note that there are no special axioms involving $4$ or
  more qubits; all equational properties of $n$-qubit Clifford circuits
  follow from axioms involving at most $3$ qubits.
\end{remark}

\begin{figure}
\[ \omega^8 = 1\hspace{0.5in}
\]
\begin{minipage}{0.5\textwidth}
\[ \begin{array}{l@{}l@{~}l}
  \commHA
\end{array}
\]
\end{minipage}%
\begin{minipage}{0.5\textwidth}
\[\begin{array}{l@{}l@{~}l}
  \commSA
  \end{array}
\]
\end{minipage}
\vspace{1ex}
\[\begin{array}{r@{}l@{~}l}
  \commZZAI
  \\
  \commZZIABB
\end{array}
\]
\caption{Rewrite rules for normal forms, part I}\label{fig-rewrite-1}
\end{figure}
\begin{figure}
\begin{minipage}{0.5\textwidth}
\[ \begin{array}{l@{}l@{~}l}
  \commHIB
  \\
  \commISB
\end{array}
\]
\end{minipage}%
\begin{minipage}{0.5\textwidth}
\[\begin{array}{l@{}l@{~}l}
  \commSIB
  \\
  \commIXB
\end{array}
\]
\end{minipage}
\vspace{1ex}
\[
\begin{array}{l@{}l@{~}l}
  \commIZZBBI
\end{array}
\]
\begin{minipage}{0.5\textwidth}
\[ \begin{array}{l@{}l@{~}l}
  \commXC
  \\
  \commSC
\end{array}
\]
\end{minipage}%
\begin{minipage}{0.5\textwidth}
\[ \begin{array}{l@{}l@{~}l}
  \commZZCI
\end{array}
\]
\end{minipage}
\caption{Rewrite rules for normal forms, part II}\label{fig-rewrite-2}
\end{figure}
\begin{figure}
\[
\scalebox{0.87}{$
  \begin{array}{l@{}l@{~}l}
    \commZZIIBBBBI
  \end{array}
  $}
\]
\caption{Rewrite rules for normal forms, part III}\label{fig-rewrite-3}
\end{figure}
\begin{figure}
\begin{minipage}{0.5\textwidth}
\[
\begin{array}{l@{}l@{~}l}
  \altIHDD
  \\
  \altSIDD
\end{array}
\]
\end{minipage}%
\begin{minipage}{0.5\textwidth}
\[
\begin{array}{l@{}l@{~}l}
  \altISDD
  \\
  \altZZDD
\end{array}
\]
\end{minipage}
\[ \begin{array}{l@{}l@{~}l}
  \altSE
\end{array}
\]
\caption{Rewrite rules for normal forms, part IV}\label{fig-rewrite-4}
\end{figure}
\begin{figure}
\[
\scalebox{0.87}{$
  \begin{array}{l@{}l@{~}l}
    \altIZZDDIIDD
  \end{array}
  $}
\]
\caption{Rewrite rules for normal forms, part V}\label{fig-rewrite-5}
\end{figure}

\begin{figure}
  \newcounter{mytmpcounter}
  \setcounter{mytmpcounter}{\value{equation}}
  \setcounter{equation}{0}
  \makeatletter
  \renewcommand{\theequation}{C\@arabic\c@equation}
  \makeatother
  \flushleft

  (a) Equations for $n\geq 0$:
  \begin{equation}
    \omega^8 = 1
  \end{equation}
  (b) Equations for $n\geq 1$:
  \begin{eqnarray}
    H^2 &=& 1\\
    S^4 &=& 1\\
    SHSHSH &=& \omega
  \end{eqnarray}
  (c) Equations for $n\geq 2$:
  \begin{eqnarray}
    \m{\begin{qcircuit}[scale=0.5]
        \grid{3}{0,1}
        \controlled{\dotgate}{1,0}{1}
        \controlled{\dotgate}{2,0}{1}
      \end{qcircuit}
    } 
    &=& 
    \m{\begin{qcircuit}[scale=0.5]
        \grid{2}{0,1}
      \end{qcircuit}
    }
    \\\nonumber\\[0ex]
    \m{\begin{qcircuit}[scale=0.5]
        \grid{3.5}{0,1}
        \gate{$S$}{1.25,1}
        \controlled{\dotgate}{2.5,0}{1}
      \end{qcircuit}
    } 
    &=& 
    \m{\begin{qcircuit}[scale=0.5]
        \grid{3.5}{0,1}
        \controlled{\dotgate}{1,0}{1}
        \gate{$S$}{2.25,1}
      \end{qcircuit}
    }
    \\\nonumber\\[0ex]
    \m{\begin{qcircuit}[scale=0.5]
        \grid{3.5}{0,1}
        \gate{$S$}{1.25,0}
        \controlled{\dotgate}{2.5,0}{1}
      \end{qcircuit}
    } 
    &=& 
    \m{\begin{qcircuit}[scale=0.5]
        \grid{3.5}{0,1}
        \controlled{\dotgate}{1,0}{1}
        \gate{$S$}{2.25,0}
      \end{qcircuit}
    }
    \\\nonumber\\[0ex]
    \m{\begin{qcircuit}[scale=0.5]
        \grid{8}{0,1}
        \gate{$H$}{1.25,1}
        \gate{$S$}{2.75,1}
        \gate{$S$}{4.25,1}
        \gate{$H$}{5.75,1}
        \controlled{\dotgate}{7,0}{1}
      \end{qcircuit}
    } 
    &=& 
    \m{\begin{qcircuit}[scale=0.5]
        \grid{8}{0,1}
        \controlled{\dotgate}{1,0}{1}
        \gate{$S$}{2.25,0}
        \gate{$S$}{3.75,0}
        \gate{$H$}{2.25,1}
        \gate{$S$}{3.75,1}
        \gate{$S$}{5.25,1}
        \gate{$H$}{6.75,1}
      \end{qcircuit}
    }
    \\\nonumber\\[0ex]
    \m{\begin{qcircuit}[scale=0.5]
        \grid{8}{0,1}
        \gate{$H$}{1.25,0}
        \gate{$S$}{2.75,0}
        \gate{$S$}{4.25,0}
        \gate{$H$}{5.75,0}
        \controlled{\dotgate}{7,0}{1}
      \end{qcircuit}
    } 
    &=& 
    \m{\begin{qcircuit}[scale=0.5]
        \grid{8}{0,1}
        \controlled{\dotgate}{1,0}{1}
        \gate{$S$}{2.25,1}
        \gate{$S$}{3.75,1}
        \gate{$H$}{2.25,0}
        \gate{$S$}{3.75,0}
        \gate{$S$}{5.25,0}
        \gate{$H$}{6.75,0}
      \end{qcircuit}
    }
    \\\nonumber\\[0ex]
    \m{\begin{qcircuit}[scale=0.5]
        \grid{4.5}{0,1}
        \controlled{\dotgate}{1,0}{1}
        \gate{$H$}{2.25,1}
        \controlled{\dotgate}{3.5,0}{1}
      \end{qcircuit}
    }
    &=& 
    \m{\begin{qcircuit}[scale=0.5]
        \gridx{1.5}{10.5}{0,1}
        \gate{$S$}{2.75,1}
        \gate{$H$}{4.25,1}
        \controlled{\dotgate}{5.5,0}{1}
        \gate{$S$}{6.75,0}
        \gate{$S$}{6.75,1}
        \gate{$H$}{8.25,1}
        \gate{$S$}{9.75,1}
      \end{qcircuit}
    }\cdot\omega\inv
    \\\nonumber\\[0ex]
    \m{\begin{qcircuit}[scale=0.5]
        \grid{4.5}{0,1}
        \controlled{\dotgate}{1,0}{1}
        \gate{$H$}{2.25,0}
        \controlled{\dotgate}{3.5,0}{1}
      \end{qcircuit}
    }
    &=& 
    \m{\begin{qcircuit}[scale=0.5]
        \gridx{1.5}{10.5}{0,1}
        \gate{$S$}{2.75,0}
        \gate{$H$}{4.25,0}
        \controlled{\dotgate}{5.5,0}{1}
        \gate{$S$}{6.75,1}
        \gate{$S$}{6.75,0}
        \gate{$H$}{8.25,0}
        \gate{$S$}{9.75,0}
      \end{qcircuit}
    }\cdot\omega\inv
  \end{eqnarray}
  (d) Equations for $n\geq 3$:
  \begin{eqnarray}
    \m{\begin{qcircuit}[scale=0.5]
        \grid{3}{0,1,2}
        \controlled{\dotgate}{1,0}{1}
        \controlled{\dotgate}{2,1}{2}
      \end{qcircuit}
    } 
    &=& 
    \m{\begin{qcircuit}[scale=0.5]
        \grid{3}{0,1,2}
        \controlled{\dotgate}{1,1}{2}
        \controlled{\dotgate}{2,0}{1}
      \end{qcircuit}
    }
    \\\nonumber\\[0ex]
    \m{\begin{qcircuit}[scale=0.5]
        \grid{12.50}{0,1,2}
        \controlled{\dotgate}{1.25,2}{1}
        \gate{$H$}{2.75,2}
        \gate{$H$}{2.75,1}
        \controlled{\dotgate}{4.00,2}{1}
        \gate{$H$}{5.25,1}
        \gate{$H$}{5.25,0}
        \controlled{\dotgate}{6.50,1}{0}
        \gate{$H$}{7.75,1}
        \gate{$H$}{7.75,0}
        \controlled{\dotgate}{9.00,2}{1}
        \gate{$H$}{10.25,2}
        \gate{$H$}{10.25,1}
        \controlled{\dotgate}{11.50,2}{1}
      \end{qcircuit}
    }
    &=&
    \m{\begin{qcircuit}[scale=0.5]
        \grid{12.50}{0,1,2}
        \controlled{\dotgate}{1.25,0}{1}
        \gate{$H$}{2.75,0}
        \gate{$H$}{2.75,1}
        \controlled{\dotgate}{4.00,0}{1}
        \gate{$H$}{5.25,1}
        \gate{$H$}{5.25,2}
        \controlled{\dotgate}{6.50,1}{2}
        \gate{$H$}{7.75,1}
        \gate{$H$}{7.75,2}
        \controlled{\dotgate}{9.00,0}{1}
        \gate{$H$}{10.25,0}
        \gate{$H$}{10.25,1}
        \controlled{\dotgate}{11.50,0}{1}
      \end{qcircuit}
    }
  \end{eqnarray}
  \begin{eqnarray}
    \m{\begin{qcircuit}[scale=0.5]
        \grid{21.00}{0,1,2}
        \controlled{\dotgate}{1.00,1}{2}
        \gate{$H$}{2.25,2}
        \gate{$H$}{2.25,1}
        \controlled{\dotgate}{3.50,1}{2}
        \gate{$H$}{4.75,2}
        \gate{$H$}{4.75,1}
        \controlled{\dotgate}{6.00,0}{1}
        \controlled{\dotgate}{8.00,1}{2}
        \gate{$H$}{9.25,2}
        \gate{$H$}{9.25,1}
        \controlled{\dotgate}{10.50,1}{2}
        \gate{$H$}{11.75,2}
        \gate{$H$}{11.75,1}
        \controlled{\dotgate}{13.00,0}{1}
        \controlled{\dotgate}{15.00,1}{2}
        \gate{$H$}{16.25,2}
        \gate{$H$}{16.25,1}
        \controlled{\dotgate}{17.50,1}{2}
        \gate{$H$}{18.75,1}
        \gate{$H$}{18.75,2}
        \controlled{\dotgate}{20.00,0}{1}
      \end{qcircuit}
    }
    &=&
    \m{\begin{qcircuit}[scale=0.5]
        \grid{2.50}{0,1,2}
      \end{qcircuit}
    }
    \\\nonumber\\[0ex]
    \m{\begin{qcircuit}[scale=0.5]
        \grid{21.00}{0,1,2}
        \controlled{\dotgate}{1.00,1}{0}
        \gate{$H$}{2.25,0}
        \gate{$H$}{2.25,1}
        \controlled{\dotgate}{3.50,1}{0}
        \gate{$H$}{4.75,0}
        \gate{$H$}{4.75,1}
        \controlled{\dotgate}{6.00,2}{1}
        \controlled{\dotgate}{8.00,1}{0}
        \gate{$H$}{9.25,0}
        \gate{$H$}{9.25,1}
        \controlled{\dotgate}{10.50,1}{0}
        \gate{$H$}{11.75,0}
        \gate{$H$}{11.75,1}
        \controlled{\dotgate}{13.00,2}{1}
        \controlled{\dotgate}{15.00,1}{0}
        \gate{$H$}{16.25,0}
        \gate{$H$}{16.25,1}
        \controlled{\dotgate}{17.50,1}{0}
        \gate{$H$}{18.75,1}
        \gate{$H$}{18.75,0}
        \controlled{\dotgate}{20.00,2}{1}
      \end{qcircuit}
    }
    &=&
    \m{\begin{qcircuit}[scale=0.5]
        \grid{2.50}{0,1,2}
      \end{qcircuit}
    }
  \end{eqnarray}
  \caption{A presentation of the (strict spatial monoidal) Clifford
    groupoid by generators and relations}\label{fig-relations}

  \setcounter{equation}{\value{mytmpcounter}}
\end{figure}

\bibliographystyle{abbrv}
\bibliography{clifford}

\newpage\phantom{blah}\newpage\phantom{blah}
\end{document}